\documentclass[%
 aip,
 jmp,
 amsmath,amssymb,
 reprint, onecolumn
]{revtex4-1}

\usepackage{graphicx}
\usepackage{dcolumn}
\usepackage{bm}
\usepackage[utf8]{inputenc}
\usepackage[T1]{fontenc}
\usepackage{mathptmx}
\usepackage{etoolbox}
\usepackage{hyperref}
\hypersetup{colorlinks=true,linkcolor=blue,citecolor=blue,urlcolor=blue}

\usepackage{amsthm}
\usepackage{enumitem}

\makeatletter
\let\old@linewidth\linewidth
\makeatother
\usepackage{tikz}
\usetikzlibrary{arrows.meta,positioning,fit,backgrounds,calc,shapes.geometric}
\makeatletter
\AtBeginDocument{%
  \let\linewidth\old@linewidth
}
\makeatother

\newtheorem{theorem}{Theorem}
\newtheorem{proposition}{Proposition}
\newtheorem{lemma}{Lemma}
\newtheorem{corollary}{Corollary}
\theoremstyle{definition}
\newtheorem{definition}{Definition}
\theoremstyle{remark}
\newtheorem{remark}{Remark}

\makeatletter
\def\@email#1#2{%
 \endgroup
 \patchcmd{\titleblock@produce}
  {\frontmatter@RRAPformat}
  {\frontmatter@RRAPformat{\produce@RRAP{*#1\href{mailto:#2}{#2}}}\frontmatter@RRAPformat}
  {}{}
}%
\makeatother

\newcommand{\R}{\mathbb{R}}

\newcommand{\Z}{\mathbb{Z}}
\newcommand{\N}{\mathbb{N}}
\newcommand{\M}{\mathcal{M}}
\newcommand{\A}{\mathcal{A}}
\newcommand{\F}{\mathcal{F}}
\newcommand{\Hil}{\mathcal{H}}
\newcommand{\D}{\mathcal{D}}
\newcommand{\Op}{\mathcal{O}}
\newcommand{\Sch}{\mathcal{S}}
\newcommand{\Dom}{\mathfrak{D}}
\newcommand{\1}{\mathbb{1}}

\begin{document}

\preprint{} 

\title{Newton--Cartan limit of Klein--Gordon AQFT and the
collapse of Galilean modular structure}

\author{Leonardo A. Pach\'{o}n}
\affiliation{guane Enterprises, R+D+I Unit, Medell\'in 050010, Colombia}

\date{\today}

\begin{abstract}
We extend a previously established Galilean/relativistic structural
divider in algebraic quantum field theory --- the absence of
Reeh--Schlieder and of Tomita--Takesaki modular flow on local algebras
of any Galilean Haag--Kastler net satisfying a natural axiom set
augmented by the Bargmann-charge hypotheses (G7$^*$)(a) and
(G7$^*$)(d)~\cite{Pachon2026b} --- to curved backgrounds via the
Newton--Cartan ($c\to\infty$) limit. We show, for the free Klein--Gordon
field on Minkowski and on static globally hyperbolic spacetimes
admitting a Post-Newtonian expansion, that a position-independent
rest-energy rescaling produces in the limit a Galilean Haag--Kastler
net satisfying the axioms of Ref.~\onlinecite{Pachon2026b} in
flat-space form (Minkowski) or in a curved-space modification
(Killing-flow invariance and uniqueness of the vacuum replacing full
translation invariance) appropriate to the static case. The Bargmann
central charge equals the Klein--Gordon mass~$m$; the gravitational
potential $V(x)$ enters the limiting Schr\"odinger Hamiltonian but
not the algebraic structure obstructed by the Galilean
Reeh--Schlieder no-go theorem. The strengthened obstruction theorem
of Ref.~\onlinecite{Pachon2026b} extends to the modified curved-space
setting on Fock representations, and the limiting net carries no
modular flow on local algebras. Schwarzschild is treated as a worked
example: the Killing horizon shrinks to a point, the Hartle--Hawking
thermal state has no $c\to\infty$ limit, and the Boulware vacuum
limits to the gravitational hydrogenic ground state. The
Reissner--Nordstr\"om metric is included as a sanity check confirming
that leading Post-Newtonian misses the electromagnetic content of
the background. We discuss how Newton's constant~$G$ enters the
present (background-metric) framework only at the level of the
limiting Hamiltonian, and indicate where dynamical-metric extensions
would require~$G$ to play a structural role.
\end{abstract}
 
\maketitle
 
\section{Introduction}\label{sec:intro}
 
This paper is the third in a series developing an
operator-algebraic framework relating non-relativistic quantum
mechanics and special relativity. The framework
paper~\cite{Pachon2026a} states the
\emph{SR-selection conjecture} --- that the standard
Galilean Haag--Kastler axioms are inconsistent with non-trivial
modular content on local algebras --- and identifies three strands
of supporting evidence; the second
paper~\cite{Pachon2026b} establishes the load-bearing strand as a
precise no-go theorem. Subsequent papers in the series treat the
dynamical-metric extension via algebraic forcing of the semiclassical
Einstein equations~\cite{Pachon2026d,Pachon2026e}, and the related
crossed-product treatment of the Galilean
obstruction~\cite{Pachon2026f}.
 
The structural divider between Galilean and relativistic local quantum physics,
at the level of algebraic and modular structure, was established in
Ref.~\onlinecite{Pachon2026b}. There it was shown that any Galilean Haag--Kastler net of
field algebras $\F(\Op)$ over open regions $\Op \subset \R^3 \times \R$
satisfying a natural axiom set --- isotony, equal-time locality, Galilean
covariance with Bargmann central extension, canonical CCR, positive-energy
spectrum, translation-invariant unique vacuum, and the Bargmann-grading and
time-zero hypotheses (G7$^*$)(a) and (G7$^*$)(d) collected below --- cannot satisfy
the Reeh--Schlieder property~\cite{ReehSchlieder1961}, and consequently
cannot carry a Tomita--Takesaki modular flow on local algebras with respect
to the vacuum. The structural
content is that the Galilean superselection by Bargmann mass charge is
incompatible with the ``one state in the algebra cyclic for every other'' that
Reeh--Schlieder requires.
 
The present paper extends this result to the curved-background setting. Two
classes of curved spacetimes are treated: Minkowski (as a sanity-check
benchmark), and the class of static globally hyperbolic Lorentzian manifolds
admitting a Post-Newtonian asymptotic expansion of the metric in powers of
$c^{-2}$. Schwarzschild is treated as a concrete worked example within the
second class. In each case we construct, from the free Klein--Gordon AQFT at
finite~$c$, a rescaled field $\hat\psi$ whose $c\to\infty$ limit defines a
Galilean Haag--Kastler net on the limiting Newton--Cartan structure satisfying
the Paper~II axioms in their flat-space form (Minkowski case), or in the
curved-space modification developed in Section~\ref{sec:G6c} (static
curved case), with Bargmann central charge
equal to the Klein--Gordon mass~$m$. By Theorem~\ref{thm:flat-obstruction}
(in the flat case) or Theorem~\ref{thm:curved-obstruction}
(in the static curved case, the curved extension proved in
Section~\ref{sec:G6c}), the limiting net admits no modular flow on
local algebras with
respect to the limiting vacuum or Killing-flow-invariant ground state. The
result is structurally parallel to but distinct from
Falcone--Conti~\cite{FalconeConti2024}, who establish a quantitative
suppression of Reeh--Schlieder nonlocal effects in the non-relativistic
limit at finite~$c$; the present construction shows that modular structure
on local algebras is destroyed entirely in the $c \to \infty$ limit by the
Bargmann-superselection content of the Galilean side, regardless of the
gravitational background.
 
The position of Newton's gravitational constant~$G$ in the present framework
is worth signalling at the outset, since one expects $G$ to be visible in any
result connecting algebraic structure to gravity. In the present
background-metric setting, $G$ enters at the level of the limiting Hamiltonian:
the Newton--Cartan limit of the Schwarzschild metric produces, via the
Post-Newtonian expansion of the lapse, the Newtonian gravitational potential
$V(r) = -GMm/r$, which appears as the potential operator of the limiting
Schr\"odinger Hamiltonian. The algebraic structure that Paper~II's obstruction
acts on --- the Bargmann-graded canonical commutation relations, the
positive-energy spectrum, the translation- (or Killing-flow-)invariant vacuum
--- is unaffected by the gravitational potential. Modular flow on local
algebras of the limiting net is destroyed by the algebraic obstruction
regardless of whether the limiting Newton--Cartan background is flat or carries
a non-trivial Newtonian potential.
 
Several pieces of relativistic-side modular content do \emph{not}
survive the limit. In the flat-space case, the Unruh
effect~\cite{Unruh1976} on Rindler wedges --- with Unruh temperature
$T_{\mathrm{U}} = \hbar a/(2\pi c\, k_{\mathrm{B}})$ for proper acceleration~$a$ ---
collapses: the Lorentz boost contracts to the Galilean boost (which
acts trivially on the Galilean vacuum), and $T_{\mathrm{U}} \to 0$ as
$c\to\infty$. In the Schwarzschild case, the Hartle--Hawking thermal
state at Hawking temperature $T_{\mathrm{HH}} = \hbar c^3/(8\pi G M
k_{\mathrm{B}})$, the Bisognano--Wightman/Sewell modular flow on the Killing
horizon, and the Kay--Wald uniqueness characterisation of the
Hartle--Hawking state likewise fail to survive: the Killing horizon
at $r_{\mathrm{s}} = 2GM/c^2$ shrinks to the point $r=0$ as $c\to\infty$,
the thermal parameter $T_{\mathrm{HH}}$ diverges, and the Boulware vacuum
limits to the (rotation-invariant, translation-non-invariant)
gravitational hydrogenic ground state of the limiting Newton--Cartan
Schwarzschild Schr\"odinger problem. Black-hole thermodynamics and
the Unruh effect are intrinsically relativistic in this precise
sense: both are modular-flow content (boost-modular flow on Rindler
wedges; Killing-flow-modular content on bifurcate Killing horizons),
and both collapse in the same Newton--Cartan limit that destroys
modular structure on local algebras (Figure~\ref{fig:modular-collapse}).

\begin{figure*}[t]
\centering
\begin{tikzpicture}[
  node distance = 1.5cm and 0.6cm,
  every node/.style = {font=\small},
  side/.style = {
    draw, rounded corners=2pt, thick,
    inner sep = 6pt, align = left,
    minimum height = 6.5cm, minimum width = 6.6cm,
    text width = 6.2cm
  },
  arrowlabel/.style = {font=\footnotesize, align=center}
]
  \node[side] (rel) {%
    \textbf{Relativistic side: free KG AQFT on $(\M, g_c)$}\\[3pt]
    \emph{Vacuum:} Wightman / Hartle--Hawking $\omega_{\mathrm{HH}}$ /
      Boulware $\omega_{\mathrm{B}}$ \\[3pt]
    \emph{Modular content present:} \\
    $\bullet$ Reeh--Schlieder property of $\omega$ on $\F_c(\Op)$ \\
    $\bullet$ Bisognano--Wightman boost-modular flow \\
    \quad\ on Rindler wedges $\mathcal W_{\mathrm{R}}$, KMS at \\
    \quad\ Unruh temperature $T_{\mathrm{U}} = \hbar a/(2\pi c\, k_{\mathrm{B}})$ \\
    $\bullet$ Bisognano--Wightman / Sewell modular flow \\
    \quad\ on bifurcate Killing horizons \\
    $\bullet$ Hartle--Hawking KMS at \\
    \quad\ $T_{\mathrm{HH}} = \hbar c^3/(8\pi G M k_{\mathrm{B}})$ \\
    $\bullet$ Kay--Wald uniqueness of $\omega_{\mathrm{HH}}$
  };

  \node[side, right=4.0cm of rel] (gal) {%
    \textbf{Galilean side: limiting net on $(\M_{\mathrm{NC}}, h^{ab}, \tau, \nabla)$}\\[3pt]
    \emph{Vacuum:} Schr\"odinger Fock $|0\rangle$ \\[3pt]
    \emph{Algebraic content present:} \\
    $\bullet$ Bargmann-graded canonical CCR \\
    \quad $[\hat\psi(x), \hat\psi^\dagger(x')] = \delta^3(x-x')\,\1$ \\
    $\bullet$ Schr\"odinger Hamiltonian \\
    \quad $\hat H_S = -(\hbar^2/2m)\Delta + V(x)$ \\
    $\bullet$ Bargmann central charge $\hat M = m\,\1$ \\[3pt]
    \emph{Modular content absent:} \\
    $\bullet$ $|0\rangle$ not separating for any $\F^{(\infty)}(\Op)$ \\
    $\bullet$ no Tomita--Takesaki modular flow \\
    $\bullet$ $T_{\mathrm{U}} \to 0$, $T_{\mathrm{HH}} \to \infty$ \\
    \quad (modular collapse, distinct directions)
  };

  \draw[-{Latex[length=3mm]}, very thick]
    ([yshift=0pt]rel.east) -- ([yshift=0pt]gal.west)
    node[arrowlabel, midway, above=1pt]
      {$c \to \infty$ \\[-1pt] In\"on\"u--Wigner contraction}
    node[arrowlabel, midway, below=1pt]
      {$\dfrac{1}{c^2}\,[\hat K^L_i, \hat P_j]\,\big|_{\mathrm{rest\ frame}}
        \;\to\; \mathrm{i}\hbar\, m\,\delta_{ij}\,\1$};
\end{tikzpicture}
\caption{Newton--Cartan ($c \to \infty$) limit of relativistic free
Klein--Gordon AQFT and the modular-content collapse on the Galilean
side. The full modular machinery present on the relativistic side
(Reeh--Schlieder, Bisognano--Wightman boost-modular flow on Rindler
wedges, Bisognano--Wightman/Sewell flow on bifurcate Killing
horizons, Hartle--Hawking KMS, Kay--Wald uniqueness) does not
survive the contraction. The algebraic mechanism is shown below the
arrow: the $1/c^2$ in the Poincar\'e commutator $[\hat K^L_i,
\hat P_j]$ is exactly cancelled by the $c^2$ scaling of the rest
energy $mc^2$, leaving the central Bargmann mass charge $m$ on the
Galilean side. The thermal parameters $T_{\mathrm{U}}$ and
$T_{\mathrm{HH}}$ scale in opposite directions in $c$ but vanish
from the Galilean theory by the same algebraic obstruction:
Theorem~\ref{thm:flat-obstruction} in the flat case,
Theorem~\ref{thm:curved-obstruction} in the static curved case.}
\label{fig:modular-collapse}
\end{figure*}
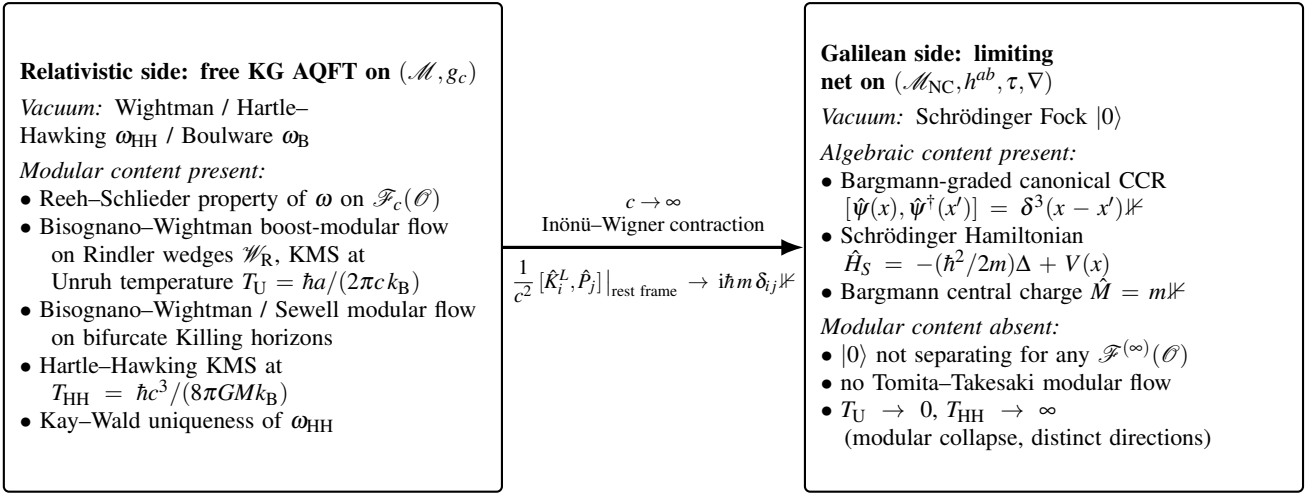

The present paper is concerned exclusively with the background-metric setting.
The metric $g$ on $\M$ is fixed external data; there is no Einstein equation,
no backreaction, no source for the metric on either side of the limit. The
question of what algebraic and modular structure survives when the metric is
itself dynamical --- and specifically, whether self-consistency or
backreaction in such a setting forces field equations of the form
$G_{\mu\nu} = 8\pi G T_{\mu\nu}$ with Newton's $G$ entering as the empirical
proportionality constant --- belongs to a planned subsequent paper. The
present paper supplies the algebraic anchor that any such forcing argument
would build on: namely, the contrapositive direction that non-Lorentzian
local geometry is incompatible with modular structure on local algebras, in
the precise sense made by the limiting construction here.
 
\medskip
\noindent\textit{Structure of the paper.}
Section~\ref{sec:prelim} reviews the Paper~II axiom set, the locally
covariant AQFT framework, Newton--Cartan structure with its
Post-Newtonian expansion, and the In\"on\"u--Wigner contraction
Poincar\'e $\to$ Bargmann at the algebra level, including the
structural identification of the Bargmann central charge as $\hat M
= \lim_{c\to\infty} \hat H_{\text{Poincar\'e}}/c^2$ that drives the
field-level Bargmann grading.
Section~\ref{sec:G6c} introduces the curved-space replacements
(G3$_\mathrm{c}$) and (G6$_\mathrm{c}$) of (G3) and (G6), and proves the
corresponding extension (Theorem~\ref{thm:curved-obstruction}) of
Ref.~\cite{Pachon2026b}'s strengthened obstruction theorem to
nets satisfying these modified axioms on Fock representations.
Section~\ref{sec:results} states the main constructive results.
Section~\ref{sec:minkowski} proves Theorem~\ref{thm:minkowski} (Minkowski case)
in detail.
Section~\ref{sec:curved} proves Theorem~\ref{thm:static} (static curved case),
treats Schwarzschild as a worked example, and includes
Reissner--Nordstr\"om as a sanity check on the leading-order
Post-Newtonian limit.
Section~\ref{sec:discussion} discusses the modular-content collapse,
including the Unruh-effect collapse on Rindler wedges
(Section~\ref{sec:unruh}), the position of~$G$ in the framework, and
the forward path to dynamical-metric extensions.
Section~\ref{sec:open} addresses five closing technical points:
the operator-topology level of the limit (with a strong-operator
convergence proof), the scope of the obstruction theorems for
interacting Galilean nets, the extension to charged matter on charged
black-hole backgrounds, the non-commutativity of the $c\to\infty$ and
near-horizon limits, and the natural Galilean analogue of the
relativistic Hadamard condition (with formal definition and
verification on the limit state).
 
\section{Preliminaries}\label{sec:prelim}
 
\subsection{Galilean Haag--Kastler nets and the Paper~II axioms}
\label{sec:axioms}
 
Let $\R^3 \times \R$ denote spatial $\R^3$ with absolute time $\R$. A
\emph{Galilean Haag--Kastler net}~\cite{HaagKastler1964,Haag1992} is an
inclusion-preserving assignment of
von~Neumann field algebras $\F(\Op)$ to bounded open regions
$\Op \subset \R^3 \times \R$, acting on a common Hilbert space $\Hil$
containing a distinguished vacuum vector $|0\rangle$. We restate the
hypotheses of Ref.~\cite{Pachon2026b} in the form needed here.
 
\begin{itemize}
\item[(G1)] \emph{Isotony.} $\Op_1 \subset \Op_2 \implies \F(\Op_1)
  \subset \F(\Op_2)$.
 
\item[(G2)] \emph{Equal-time locality.} For test functions $f_1, f_2$
  with disjoint spatial supports localised at a common time, the smeared
  canonical fields $\hat\psi(f_1), \hat\psi^\dagger(f_2)$ commute.
 
\item[(G3)] \emph{Galilean covariance with Bargmann central extension.}
  There is a strongly continuous unitary representation $U: \tilde G
  \to \mathcal{B}(\Hil)$ of the Bargmann central extension $\tilde G$
  of the (connected) Galilei group, implementing space-time symmetries
  on the net via $U(g)\F(\Op)U(g)^{-1} = \F(g\cdot\Op)$ for all $g \in
  \tilde G$. The central charge $\hat M$ is a self-adjoint operator
  commuting with all $U(g)$, and $\Hil$ decomposes as a mutually
  superselected direct sum of mass eigenspaces, $\Hil = \bigoplus_M
  \Hil_M$, with $\hat M$ acting as $M \cdot \1$ on $\Hil_M$.
 
\item[(G4)] \emph{Canonical CCR with c-number commutator.}
  $[\hat\psi(t,x), \hat\psi^\dagger(t,x')] = \delta^{3}(x-x') \cdot \1$
  at equal times, with $[\hat\psi, \hat\psi]=[\hat\psi^\dagger,
  \hat\psi^\dagger]=0$.
 
\item[(G5)] \emph{Positive-energy spectrum.} The infinitesimal
  generator $\hat H$ of time translation has spectrum $\sigma(\hat H)
  \subset [0, \infty)$.
 
\item[(G6)] \emph{Translation-invariant unique vacuum.}
  $U(\mathbf{a}, \tau) |0\rangle = |0\rangle$ for all spatial
  translations $\mathbf{a} \in \R^3$ and time translations
  $\tau \in \R$, and $|0\rangle$ is the unique such vector up to
  phase.
 
\item[(G7$^*$)(a)] \emph{Mass charge of canonical fields.} The smeared
  canonical fields satisfy
  \begin{equation*}
  U(\theta) \hat\psi(f) U(\theta)^{-1} = \mathrm{e}^{-\mathrm{i} \theta m_0/\hbar}
  \hat\psi(f), \qquad U(\theta) \hat\psi^\dagger(f)
  U(\theta)^{-1} = \mathrm{e}^{+\mathrm{i} \theta m_0/\hbar}\hat\psi^\dagger(f),
  \end{equation*}
  for all $\theta \in \R$ and all $f \in C_c^\infty(\R^3 \times \R)$,
  where $U(\theta) = \mathrm{e}^{\mathrm{i}\theta\hat M/\hbar}$ is the central $U(1)$
  generated by $\hat M$, and $m_0 > 0$ is the (single-species) Bargmann
  mass charge of the canonical fields.

\item[(G7$^*$)(d)] \emph{Existence of time-zero fields.} For every $g
  \in C_c^\infty(\R^3)$, the time-zero limit $\hat\psi_0(g) :=
  \lim_{\epsilon \to 0^+} \hat\psi(\chi_\epsilon \otimes g)$ (with
  $\chi_\epsilon \to \delta_0$ in $\D'(\R)$, $\int\chi_\epsilon = 1$)
  exists as a densely-defined operator on a common dense domain
  $\D \subset \Hil$ containing $|0\rangle$, stable under $\hat H$
  and under the time-zero field operators $\hat\psi_0(g),
  \hat\psi_0^\dagger(g)$, with the limit independent of the
  approximating sequence and supporting joint continuity of finite
  products.\footnote{Ref.~\cite{Pachon2026b}'s (G7$^*$) is a single
  axiom with four clauses (a)(b)(c)(d). We display only (a) and (d)
  because, as established in Ref.~\cite{Pachon2026b} (Lemma~4
  and Proposition~2 of that work), clauses (b)
  (mass spectrum bounded below) and (c) (vacuum at the spectral
  minimum) are derived consequences of (G1)--(G6) + (G7$^*$)(a) +
  (G7$^*$)(d): boost-positivity forces (b) under (G3) + (G5), and a
  Bose-CCR algebraic descent removes (c). The strengthened form of
  Ref.~\cite{Pachon2026b}'s obstruction theorem invoked here as
  Theorem~\ref{thm:flat-obstruction} uses only (a) and (d). On a curved
  background where Galilean boosts are absent ((G3) replaced by
  (G3$_\mathrm{c}$) of \S~\ref{sec:G6c}), the boost-positivity derivation
  of (b) does not apply directly; in
  Theorem~\ref{thm:curved-obstruction} below, the analogue of (b) is
  supplied by an explicit Fock-spectrum hypothesis.}
\end{itemize}
 
\noindent
The main result of Ref.~\cite{Pachon2026b} that we invoke for
the flat-space (Minkowski) result is:

\begin{theorem}[Strengthened Galilean Reeh--Schlieder Obstruction;
Theorem~2 \& Corollary~2 of Ref.~\cite{Pachon2026b}]\label{thm:flat-obstruction}
Let $\F$ be a Galilean Haag--Kastler net satisfying (G1)--(G6) +
(G7$^*$)(a) + (G7$^*$)(d). Then $|0\rangle$ is not separating for
$\F(\Op)$ for any bounded $\Op \subsetneq \R^3 \times \R$ with
non-empty interior whose complement also has non-empty interior, and
the Tomita--Takesaki modular flow on $\F(\Op)$ relative to $|0\rangle$
is undefined.
\end{theorem}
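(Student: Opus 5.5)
The plan is to show that the vacuum is annihilated by a nonzero local operator. That gives non-separation directly, and non-separation makes the Tomita operator $S_0: A|0\rangle \mapsto A^*|0\rangle$ ill-defined, so there is no modular flow.

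\emph{Step 1 (mass grading of the vacuum).} By (G6), $|0\rangle$ is the unique invariant vector, and $U(\theta)$ is central in the Bargmann group, so it commutes with all translations. Hence $U(\theta)|0\rangle$ is again translation invariant and equals $\mathrm{e}^{\mathrm{i}\alpha(\theta)}|0\rangle$. Thus $|0\rangle$ lies in a single mass sector $\Hil_{M_0}$. I will argue $M_0$ is the bottom of the mass spectrum. Boost positivity under (G3)+(G5) gives clause (b): on $\Hil_M$ one has $\hat H \ge \hat P^2/2M$, which forces $M \ge 0$ on states with nonzero momentum spread. The Bose-CCR descent then places $|0\rangle$ at $M_0=\inf\sigma(\hat M)$; here I take Lemma~4 and Proposition~2 of Ref.~\cite{Pachon2026b} as given, as the footnote indicates.

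\emph{Step 2 (annihilation).} Take $g\in C_c^\infty(\R^3)$ supported in the spatial base of $\Op$ at some time slice inside $\Op$; after a time translation, which is allowed by covariance, this is time zero. By (G7$^*$)(a), $\hat\psi_0(g)$ lowers the mass by $m_0>0$. So $\hat\psi_0(g)|0\rangle \in \Hil_{M_0-m_0}$, and this space is $\{0\}$ by minimality of $M_0$. Therefore $\hat\psi_0(g)|0\rangle=0$. Using (G7$^*$)(d), I smear in a short time window, $\hat\psi(\chi_\epsilon\otimes g)$, which is affiliated with $\F(\Op)$ for small $\epsilon$. The same grading argument applies to it directly (each smeared field also carries charge $-m_0$), so $\hat\psi(\chi_\epsilon\otimes g)|0\rangle=0$.

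\emph{Step 3 (bounded nonzero element).} This step is the main obstacle, because $\hat\psi$ is unbounded while separation concerns $\F(\Op)$ itself. I will pass to the polar decomposition $\hat\psi(f)=V|\hat\psi(f)|$; its parts are affiliated with $\F(\Op)$. Since $|\hat\psi(f)||0\rangle=0$, a spectral projection $E=\1-\chi_{\{0\}}(|\hat\psi(f)|)\in\F(\Op)$ annihilates $|0\rangle$. I need $E\neq0$, that is, $\hat\psi(f)\neq0$. This follows from (G4): $[\hat\psi(f),\hat\psi^\dagger(f)]=\|f\|^2\1$ on $\D$, so $\hat\psi^\dagger(f)|0\rangle\neq0$ and hence $\hat\psi(f)\hat\psi^\dagger(f)|0\rangle=\|f\|^2|0\rangle\neq0$. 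Therefore $0\ne E\in\F(\Op)$ with $E|0\rangle=0$, and $|0\rangle$ is not separating.

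\emph{Step 4 (Reeh--Schlieder and modular flow).} Non-separation for $\F(\Op)$ means $|0\rangle$ is not cyclic for $\F(\Op)'$. Locality (G2) gives $\F(\Op^c)\subset\F(\Op)'$, and $\Op^c$ has non-empty interior by hypothesis, so Reeh--Schlieder fails for the complementary region. Finally, the Tomita operator $S_0$ is not well defined: $E|0\rangle=0$ but $E^*|0\rangle=E|0\rangle$, and more generally $V$ versus $V^*$ produces inconsistent values. Consequently the pair $(\F(\Op),|0\rangle)$ is not in standard form, and $\Delta^{\mathrm{i}t}$ does not exist.
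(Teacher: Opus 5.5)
Your route is the one the paper sketches, via Ref.~\cite{Pachon2026b}. The steps are: the Lemma~2 eigenvector argument from (G6) uniqueness plus centrality of $\hat M$; Lemma~4/Proposition~2 taken as given to place $|0\rangle$ at the bottom of $\sigma(\hat M)$; grading (G7$^*$)(a) forcing $\hat\psi(f)|0\rangle=0$; affiliation with $\F(\Op)$; and the c-number CCR to get a contradiction with separation. Your polar-decomposition and support-projection construction is a clean explicit version of that last step. Once $V^*|0\rangle\neq0$ is known, $V|0\rangle=0\neq V^*|0\rangle$ shows that the Tomita map is not even well defined.

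However, Step~3 fails as written. You apply $[\hat\psi(f),\hat\psi^\dagger(f)]=\|f\|^2\1$ to the spacetime-smeared field $f=\chi_\epsilon\otimes g$, but (G4) is only an equal-time relation and gives nothing of this form. Even for the free Schr\"odinger field, the spacetime-smeared commutator is $\int\frac{\mathrm{d}^3k}{(2\pi)^3}\,|\tilde f(\hbar k^2/2m,k)|^2$, the $L^2$ norm of $\tilde f$ restricted to the paraboloid. It is not $\|f\|^2_{L^2(\R^4)}$, which moreover diverges as $\epsilon\to0$. In an interacting net the unequal-time commutator need not be a c-number at all. Conversely, the time-zero field $\hat\psi_0(g)$, for which the CCR does hold, is not given by the axioms to be affiliated with $\F(\Op)$. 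Bridging the two is exactly the job of the time-zero/4D correspondence (Lemma~1 of Ref.~\cite{Pachon2026b}). It is also where (G7$^*$)(d) is genuinely needed. In Step~2 you do not need it, since (G7$^*$)(a) applies to $\hat\psi(\chi_\epsilon\otimes g)$ directly.

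The repair uses joint continuity of finite products in (G7$^*$)(d):
$\langle0|\hat\psi(\chi_\epsilon\otimes g)\hat\psi^\dagger(\chi_\epsilon\otimes g)|0\rangle\to\langle0|\hat\psi_0(g)\hat\psi_0^\dagger(g)|0\rangle=\|g\|^2+\|\hat\psi_0(g)|0\rangle\|^2\geq\|g\|^2>0$.
Hence for small $\epsilon$ both $\hat\psi(\chi_\epsilon\otimes g)\neq0$ and $\hat\psi^\dagger(\chi_\epsilon\otimes g)|0\rangle\neq0$. This gives $E\neq0$ and also $V^*|0\rangle\neq0$, because $|0\rangle\notin\ker T^*$.

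There is a second, smaller error in Step~4. (G2) is only equal-time locality, so it does not give $\F(\Op^c)\subset\F(\Op)'$. Galilean fields at different times generally fail to commute; for example, the free Schr\"odinger kernel has full spatial support for $t\neq t'$. Your conclusion about Reeh--Schlieder failing for the complementary region is therefore unsupported. It is not part of the statement, so simply drop it; non-separation plus the $V$/$V^*$ argument is all you need.

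One minor point in Step~1: boost positivity reads $U(v)^{-1}\hat H U(v)=\hat H+v\cdot\hat P+\tfrac12 Mv^2\geq0$ for all $v$, up to sign conventions. This forces $M\geq0$ on every nonzero vector of $\Hil_M$, regardless of momentum spread. Your inequality $\hat H\geq\hat P^2/2M$ already presupposes $M>0$.
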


The structural mechanism is that the Bargmann mass-charge superselection
forces the Hilbert space to decompose into mass sectors, and the local
algebras $\F(\Op)$ generated by Bargmann-graded fields shift particle
number rather than create finite-norm cyclic vectors. By the
Bargmann-eigenvector argument (Lemma~2 of
Ref.~\cite{Pachon2026b}, in the strengthened form used in
Theorem~2 there), the vacuum is a $\hat M$-eigenvector and the
canonical fields are Bargmann-graded under (G7$^*$)(a), forcing
$\hat\psi(f)|0\rangle = 0$. Combined with equal-time CCR (G4) and the
affiliation $\hat\psi(f) \in \F(\Op_f)$, this contradicts the
separating property of Reeh--Schlieder.

\subsection{Locally covariant AQFT and the Klein--Gordon net}
\label{sec:lcaqft}
 
We work within the standard locally covariant AQFT framework of
Brunetti, Fredenhagen, and Verch~\cite{BFV2003}. To each globally
hyperbolic Lorentzian spacetime $(\M, g)$ in mostly-plus signature is
assigned the C*-algebra $\A_c(\M, g)$ generated by smeared Klein--Gordon
fields $\hat\varphi_c(f)$ for $f \in C_c^\infty(\M)$, satisfying the
canonical commutation relations
\begin{equation}
[\hat\varphi_c(f_1), \hat\varphi_c(f_2)] = \mathrm{i}\hbar \, E_c(f_1, f_2)
\cdot \1,
\end{equation}
where $E_c$ is the causal propagator (advanced minus retarded) of the
Klein--Gordon operator $(\Box_g + m^2 c^2/\hbar^2)$, and the c-restored
Klein--Gordon equation $(\Box_g + m^2 c^2/\hbar^2)\hat\varphi_c = 0$
is imposed. Local algebras $\A_c(\Op)$ for open $\Op \subset \M$ are
generated by $\hat\varphi_c(f)$ with $\mathrm{supp}\, f \subset \Op$.

We restrict attention to quasi-free states. For $(\M, g)$ Minkowski
the canonical choice is the Poincar\'e-invariant vacuum $\omega_0$. For
$(\M, g)$ static globally hyperbolic with timelike Killing field $\xi$,
we use the static (Boulware-like) ground state $\omega_{\mathrm{B}}$ defined by
the spectral decomposition, on positive-frequency mode functions, of
the generator $\hat H_c$ of $\xi$. For Schwarzschild specifically the
Boulware state is singular on the Killing horizon~\cite{KayWald1991};
we comment on this in Section~\ref{sec:schw}.
 
\subsection{Newton--Cartan structure and the Post-Newtonian
expansion}\label{sec:NC}
 
Following K\"unzle and Duval--Burdet~\cite{Kunzle1972,DuvalBurdetKunzlePerrin1985},
the Newton--Cartan structure on a four-dimensional manifold is the
triple $(h^{ab}, \tau_a, \nabla)$ where $h^{ab}$ is a degenerate
contravariant rank-two symmetric tensor (the spatial metric), $\tau_a$
is a closed one-form (the absolute clock), the compatibility condition
$h^{ab}\tau_b = 0$ holds, and $\nabla$ is a torsion-free connection
preserving both~$h$ and~$\tau$. An \emph{isometry} of a Newton--Cartan
structure $(\M_\mathrm{NC}, h^{ab}, \tau_a, \nabla)$ is a diffeomorphism
$\phi: \M_\mathrm{NC} \to \M_\mathrm{NC}$ preserving all three structures:
$\phi_* h^{ab} = h^{ab}$, $\phi^* \tau = \tau$, and $\phi_* \nabla =
\nabla$. The isometry group of $(\M_\mathrm{NC}, h^{ab}, \tau_a, \nabla)$
is denoted $G^\mathrm{NC}$, and its Bargmann central extension is
$\tilde G^\mathrm{NC}$, used in axiom~(G3$_\mathrm{c}$) of
Section~\ref{sec:G6c}. For Minkowski Newton--Cartan structure
($h^{ij} = \delta^{ij}$, $\tau = \mathrm{d}t$, $\nabla$ flat), $G^\mathrm{NC}$ is
the full Galilei group and $\tilde G^\mathrm{NC} = \tilde G$. For static
Newton--Cartan structures with non-constant Newtonian potential
$V(x)$, $G^\mathrm{NC}$ contains time translations and the rotational
isometries of $V(x)$ but not generic spatial translations or Galilean
boosts.
 
The $c\to\infty$ limit of a Lorentzian metric $g_c$ on $\M$ produces a
Newton--Cartan structure on $\M$ provided the Post-Newtonian conditions
below hold and the limits of the appropriate tensorial objects exist.
For a static metric of the form
\begin{equation}\label{eq:static-metric}
\mathrm{d}s^2 = -N(x)^2 c^2 \mathrm{d}t^2 + h_{ij}(x)\,\mathrm{d}x^i\,\mathrm{d}x^j
\end{equation}
the limit produces $\tau = \mathrm{d}t$ as the (closed) absolute clock --- since
$N(x) \to 1$ as $c \to \infty$ under the Post-Newtonian condition
(\ref{eq:PN-N}) below --- together with $h^{ab}$ extending the inverse
spatial metric $h^{ij}$ on the equal-$t$ slices, with $h^{tt} = h^{ti}
= 0$. The gravitational physics that survives the limit is encoded not
in $\tau$ (which is flat at leading order) but in the limiting
Schr\"odinger Hamiltonian $\hat H_S$ via the rest-energy expansion of
mode frequencies (Section~\ref{sec:PN-modes}); equivalently, in the
Newton--Cartan connection $\nabla$, whose deviation from flatness
encodes the Newtonian potential $V(x)$.
 
We say $(\M, g_c)$ admits a \emph{Post-Newtonian expansion} if
$g_c$ takes the form (\ref{eq:static-metric}) with
\begin{equation}\label{eq:PN-N}
N(x)^2 = 1 + \frac{2 V(x)}{m c^2} + O(c^{-4})
\end{equation}
for a finite (in particular $c$-independent) function $V: \M \to \R$,
the \emph{Newtonian gravitational potential energy} of a test particle
of mass~$m$, with $V$ bounded on compact subsets of $\M$.
Equivalently, $(N(x) - 1) m c^2 \to V(x)$
pointwise as $c \to \infty$. Note that $V(x)$ here is the potential
\emph{energy} (mass times the universal gravitational potential
$\Phi(x)$); for a particle of mass $m$ in a Schwarzschild exterior,
$V(r) = m \Phi(r) = -GMm/r$ (see Section~\ref{sec:schw}); the
condition fails near a Killing horizon where $N \to 0$.
 
\subsection{The In\"on\"u--Wigner contraction Poincar\'e $\to$
Bargmann}\label{sec:contraction}
 
The structural identification of the Bargmann central charge as the
$c\to\infty$ shadow of the rest-energy phase is essential to what
follows. We give the algebra-level statement here; the field-level
realisation is the content of
Sections~\ref{sec:minkowski}--\ref{sec:curved}.
 
The Poincar\'e algebra in $c$-restored conventions has generators
$\hat H, \hat P_i, \hat J_i, \hat K^L_i$ with non-trivial commutators
\begin{align}
[\hat K^L_i, \hat P_j] &= \frac{\mathrm{i}\hbar}{c^2} \delta_{ij} \hat H,
\label{eq:poin-KP}\\
[\hat K^L_i, \hat H] &= \mathrm{i}\hbar \hat P_i,\label{eq:poin-KH}\\
[\hat K^L_i, \hat K^L_j] &= -\frac{\mathrm{i}\hbar}{c^2}\epsilon_{ijk} \hat J_k,
\label{eq:poin-KK}
\end{align}
together with the standard rotation, translation, and energy-momentum
commutators which are $c$-independent.
 
The naive In\"on\"u--Wigner contraction $\hat K^L_i \mapsto \hat
K^G_i := \hat K^L_i$, $c \to \infty$ sends the right-hand sides of
(\ref{eq:poin-KP}) and (\ref{eq:poin-KK}) to zero, producing the
abstract Galilei algebra in which $[\hat K^G_i, \hat P_j] = 0$. The
abstract Galilei algebra carries no central extension; the Bargmann
extension lives in the projective representations of the Galilei
group, equivalently in ordinary representations of the central
extension $\tilde G$ \cite{Bargmann1954,LevyLeblond1971}.
 
The Bargmann central charge appears in the contraction at the level
of \emph{representations}, not at the level of the abstract algebra.
The mechanism is the following. Restrict attention to the irreducible
mass-$m$ Poincar\'e representation. On this representation $\hat H$
takes the form $\hat H = mc^2 \1 + \hat H_S^{(c)}$, where
\begin{equation}\label{eq:H-rest-energy-strip}
\hat H_S^{(c)} = \sqrt{\hat P^2 c^2 + m^2 c^4} - m c^2 \1
= \frac{\hat P^2}{2m} + O(c^{-2})
\end{equation}
on positive-frequency states (cf.\ Eq.~(\ref{eq:omega-expansion}) below).
The $c$-dependent shift $\hat H \mapsto \hat H_S^{(c)}$ is the
\emph{rest-energy stripping}; the limit of $\hat H_S^{(c)}$ as
$c\to\infty$ is the non-relativistic kinetic Hamiltonian
$\hat H_S = \hat P^2/(2m)$. Substituting (\ref{eq:H-rest-energy-strip})
into (\ref{eq:poin-KP}):
\begin{equation}
[\hat K^L_i, \hat P_j] = \frac{\mathrm{i}\hbar}{c^2}\delta_{ij}
\bigl( m c^2 \1 + \hat H_S^{(c)} \bigr)
\xrightarrow{c\to\infty}\;
\mathrm{i}\hbar \delta_{ij} \cdot m \cdot \1.
\end{equation}
The $1/c^2$ in the structure constant kills the $\hat H_S^{(c)}$ piece
(which is $O(c^0)$) but is exactly cancelled by the $c^2$ scaling of
the rest-energy term $mc^2 \1$, leaving the $c$-number $\mathrm{i}\hbar
\delta_{ij} \cdot m$ on the mass-$m$ sector. This $c$-number is the
Bargmann central charge $\hat M = m \1$ of the limiting algebra.
 
\begin{proposition}[Algebra-level identification of $\hat
M$]\label{prop:M-as-H-over-csq} On the irreducible mass-$m$ Poincar\'e
representation,
\begin{equation}
\hat M_\mathrm{Bargmann} = \lim_{c\to\infty} \frac{\hat H_{\text{Poincar\'e}}}{c^2} = m \cdot \1,
\end{equation}
with the limit taken in the strong operator topology on the rest-energy-shifted
domain (\ref{eq:H-rest-energy-strip}). The Bargmann central extension of the
Galilei algebra obtained by this contraction is parametrised by the mass~$m$ of
the contracted Poincar\'e representation.
\end{proposition}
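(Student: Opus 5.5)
The plan is to make everything concrete in the momentum-space realisation of the irreducible mass-$m$, spin-zero, positive-energy Poincar\'e representation. On $\Hil_m = L^2(\R^3, \mathrm{d}^3p)$ the generator $\hat P$ acts by multiplication by $p$. In $c$-restored conventions the Hamiltonian is the multiplication operator
\begin{equation*}
\hat H = \omega_c(\hat P) := \sqrt{\hat P^2c^2 + m^2c^4}.
\end{equation*}
The measure $\mathrm{d}^3p$ (rather than $\mathrm{d}^3p/\omega_c$) is chosen so that the Hilbert space itself is $c$-independent; the Newton--Wigner unitary makes the two choices equivalent. The claim then reduces to a statement about the family of functions $p \mapsto \omega_c(p)/c^2$. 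The limit involves only functions of the single commuting operator $\hat P$, so no genuine operator-ordering issues arise.

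The first step is the pointwise limit. For every fixed $p$,
\begin{equation*}
\frac{\omega_c(p)}{c^2} - m = \frac{\omega_c(p) - mc^2}{c^2} = \frac{p^2}{c^2\bigl(\omega_c(p) + mc^2\bigr)} \le \frac{p^2}{2mc^2},
\end{equation*}
and this tends to $0$. This is precisely the rest-energy stripping (\ref{eq:H-rest-energy-strip}), since $\hat H/c^2 - m\1 = \hat H_S^{(c)}/c^2$.

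The second step is strong convergence. Consider first $\psi$ in the common core $\Dom := \{\psi : \int p^4|\psi(p)|^2\,\mathrm{d}^3p < \infty\}$, which is the domain of $\hat H_S = \hat P^2/2m$ and is dense. The bound above gives
\begin{equation*}
\bigl\|(\hat H/c^2 - m\1)\psi\bigr\| \le \frac{\|\hat P^2\psi\|}{2mc^2} \to 0,
\end{equation*}
which even supplies an explicit $O(c^{-2})$ rate. For a general vector $\psi$ in the domain of $\hat H$ (which is $c$-independent, namely the domain of $|\hat P|$), I would use dominated convergence. The integrand $|\omega_c(p)/c^2 - m|^2|\psi(p)|^2$ tends to $0$ pointwise and is dominated, for $c \ge 1$, by
\begin{equation*}
|\omega_c(p)/c^2 - m|^2 \le \bigl(|p|/c\bigr)^2 \le |p|^2,
\end{equation*}
so it is integrable on the domain of $|\hat P|$. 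This is the sense in which the limit holds ``in the strong operator topology on the rest-energy-shifted domain''.

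The third step shows that the limit is central and generates the Bargmann extension. I would combine the strong limit with the commutator computation preceding the proposition. On $\Dom$, the Lorentz boosts are realised as the usual first-order differential operators in $p$ weighted by $\omega_c$. Concretely, the Newton--Wigner-transformed boost $\hat K^L_i$ acts as $\tfrac{1}{2c^2}\{\omega_c(\hat P), \hat X_i\}$ with $\hat X_i = \mathrm{i}\hbar\,\partial_{p_i}$, up to a time-dependent term. Its $c\to\infty$ strong limit on Schwartz functions is $m\hat X_i$, which is the Galilean boost $\hat K^G_i$ at $t=0$. Then $[\hat K^G_i, \hat P_j] = \mathrm{i}\hbar\, m\,\delta_{ij}\1$ is exactly the strong limit of $\tfrac{\mathrm{i}\hbar}{c^2}\delta_{ij}\hat H$, which identifies $\hat M = m\1$ as the central element. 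Since the representation is irreducible with mass $m$, the extension class of the projective Galilean representation is labelled by $m$ (Bargmann's classification~\cite{Bargmann1954}).

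The main obstacle I expect is this last step, not the scalar limit. One must ensure that the contracted generators $\hat K^L_i$ converge strongly on a common invariant core such as $\Sch(\R^3)$, so that commutators of limits equal limits of commutators. I would handle this by working in the Newton--Wigner picture on $\Sch(\R^3)$. There each coefficient, $\omega_c(p)/c^2$ and its derivatives, converges to $m$ (resp.\ $0$) locally uniformly with polynomial bounds, so convergence in the Schwartz seminorms follows.
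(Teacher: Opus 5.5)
Your proposal is correct and follows the paper's own route: rest-energy stripping $\hat H/c^2 = m\1 + \hat H_S^{(c)}/c^2$ is fed into $[\hat K^L_i,\hat P_j] = (\mathrm{i}\hbar/c^2)\delta_{ij}\hat H$, and you add analytic detail the paper only asserts, namely explicit bounds giving the strong limit on $D(|\hat P|)$ and the strong convergence $\hat K^L_i \to m\hat X_i$ on $\Sch(\R^3)$, which justifies reading the limiting commutator as the commutator of the contracted generators. One small slip: the middle expression in your first step should be $p^2/(\omega_c(p)+mc^2)$, without the extra factor $c^2$ in the denominator; your final bound $p^2/(2mc^2)$ is still right, and the companion bound $|\omega_c(p)/c^2 - m| \le |p|/c$ gives $\|(\hat H/c^2 - m\1)\psi\| \le \bigl\|\,|\hat P|\psi\bigr\|/c$ directly, so dominated convergence is not needed.
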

 
The proof is the calculation above. The result is folklore-standard
and goes back to In\"on\"u--Wigner~\cite{InonuWigner1953} and
Bargmann~\cite{Bargmann1954}; we have made the rest-energy-stripping
mechanism explicit because the field-level realisation of the Bargmann
grading on $\hat\psi$ that we develop in
Sections~\ref{sec:minkowski}--\ref{sec:curved} runs through exactly
this mechanism.
 
\section{The curved-space obstruction theorem}
\label{sec:G6c}

Axiom (G6) of Ref.~\cite{Pachon2026b} requires the vacuum to
be invariant under the full translation subgroup (spatial \emph{and}
time translations) of the Galilei group. On a static globally
hyperbolic background with spatial inhomogeneity --- e.g.,
Newton--Cartan Schwarzschild, where the central mass breaks the
spatial translation symmetry --- the limiting ground state is
the lowest eigenstate of the limiting Schr\"odinger Hamiltonian
$\hat H_S = -(\hbar^2/2m)\Delta + V(x)$, which is rotation-invariant
about the central mass but \emph{not} spatial-translation invariant.
On such backgrounds (G6) fails strictly, and the full Galilean
covariance of (G3) fails for the same geometric reason: spatial
translations and Galilean boosts are not symmetries of the limiting
Newton--Cartan structure when $V(x)$ is non-constant. We replace
both axioms by the following curved-space versions:

\begin{itemize}
\item[(G3$_\mathrm{c}$)] \emph{Newton--Cartan covariance with Bargmann
central extension.} There is a strongly continuous unitary
representation $U$ of the (Bargmann central extension of the)
isometry group $\tilde G^\mathrm{NC}$ of the limiting Newton--Cartan
structure on $\Hil$, including the central $U(1)$ generated by
$\hat M$ and the time-translation/Killing-flow subgroup generated by
$\hat H$. The central charge $\hat M$ is self-adjoint and commutes
with all $U(g)$, and $\Hil = \bigoplus_M \Hil_M$ decomposes as a
mutually superselected direct sum of mass eigenspaces. Equivalently,
$\tilde G^\mathrm{NC}$ is the subgroup of the flat Bargmann central
extension $\tilde G$ whose induced action on $\M_\mathrm{NC}$ preserves
the Newton--Cartan structure $(h^{ab}, \tau_a, \nabla)$.

\item[(G6$_\mathrm{c}$)] \emph{Killing-flow-invariant unique ground
state.} The vacuum $|0\rangle$ is invariant under the
time-translation subgroup of $\tilde G^\mathrm{NC}$ (i.e., under the
Killing flow generated by $\hat H$), and is the unique such vector up
to phase.
\end{itemize}

In the Minkowski case, $\tilde G^\mathrm{NC} = \tilde G$ and the spatial
translation subgroup acts unitarily on $\Hil$, with $|0\rangle$
spatially translation-invariant; (G3$_\mathrm{c}$) and (G6$_\mathrm{c}$) reduce
to (G3) and (G6) of Ref.~\cite{Pachon2026b} respectively. In the
static curved case with non-constant $V(x)$, $\tilde G^\mathrm{NC}$ is a
proper subgroup --- containing the Killing flow, the central $U(1)$,
and the rotational isometries of $V(x)$ when present, but excluding
spatial translations and Galilean boosts.

We need to verify that Theorem~\ref{thm:flat-obstruction}'s conclusion extends
to nets satisfying (G1), (G2), (G3$_\mathrm{c}$), (G4), (G5),
(G6$_\mathrm{c}$), (G7$^*$)(a), (G7$^*$)(d). This is a non-trivial step,
since Ref.~\cite{Pachon2026b}'s proof of Theorem~2 invokes
(G6) at one specific place: in Lemma~2 of that work (vacuum as
Bargmann eigenvector), via the uniqueness clause of (G6) applied to
vectors of the form $U(\theta) |0\rangle$.

\begin{theorem}[Curved-(G6) extension of the obstruction]
\label{thm:curved-obstruction}
Let $\F$ be a Galilean Haag--Kastler net satisfying (G1), (G2),
(G3$_\mathrm{c}$), (G4), (G5), (G6$_\mathrm{c}$), (G7$^*$)(a), and
(G7$^*$)(d), realised on a Fock representation in which the central
charge satisfies $\sigma(\hat M) \subseteq \{n m_0 : n \in
\Z_{\geq 0}\}$. Then the conclusion of Theorem~\ref{thm:flat-obstruction}
holds: $|0\rangle$ is not separating for $\F(\Op)$ for any bounded
$\Op$ with non-empty interior whose complement also has non-empty
interior, and the Tomita--Takesaki modular flow on $\F(\Op)$
relative to $|0\rangle$ is undefined.
\end{theorem}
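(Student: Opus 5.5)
The plan is to follow the proof of Theorem~\ref{thm:flat-obstruction} from Ref.~\cite{Pachon2026b} step by step and change only the one place where (G6) is used, namely Lemma~2 of that work: the vacuum is an $\hat M$-eigenvector. Once that lemma is re-established under (G3$_\mathrm{c}$)+(G6$_\mathrm{c}$)+Fock spectrum, the remaining argument goes through unchanged. That argument is: Bargmann grading (G7$^*$)(a) gives $\hat\psi(f)|0\rangle=0$, then the CCR (G4) with affiliation $\hat\psi(f)\in\F(\Op_f)$ yields non-separation and hence no Tomita--Takesaki flow. None of these later steps uses spatial translations or boosts, only the central $U(1)$, the time-zero fields (G7$^*$)(d), and locality.

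\textbf{Step 1: the vacuum is an $\hat M$-eigenvector.} Since $\hat M$ is central in $\tilde G^{\mathrm{NC}}$ by (G3$_\mathrm{c}$), $U(\theta)=\mathrm{e}^{\mathrm{i}\theta\hat M/\hbar}$ commutes with the Killing-flow unitaries $\mathrm{e}^{-\mathrm{i}\tau\hat H/\hbar}$. Hence $U(\theta)|0\rangle$ is again Killing-flow invariant. The uniqueness clause of (G6$_\mathrm{c}$) then forces $U(\theta)|0\rangle=\mathrm{e}^{\mathrm{i}\phi(\theta)}|0\rangle$. Strong continuity makes $\phi$ a continuous character, so $\phi(\theta)=\theta M_0/\hbar$ and $|0\rangle\in\Hil_{M_0}$. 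This is the same argument as Lemma~2, now using only the time-translation subgroup, which (G6$_\mathrm{c}$) still provides.

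\textbf{Step 2: the vacuum sits at the bottom of the mass spectrum.} In the flat case this uses clauses (b),(c) of (G7$^*$), which follow from boost-positivity. Boosts are absent here, so I replace that input with the Fock hypothesis $\sigma(\hat M)\subseteq\{nm_0\}$. By (G7$^*$)(a), $\hat\psi(f)$ maps $\Hil_{M}$ into $\Hil_{M-m_0}$. Applying time-zero fields $\hat\psi_0(g)$, available on $\D\ni|0\rangle$ by (G7$^*$)(d), repeatedly to $|0\rangle$ lowers the mass by $m_0$ each time. The spectrum is bounded below by $0$, so some $\hat\psi_0(g_1)\cdots\hat\psi_0(g_k)|0\rangle$ must vanish for every choice of test functions. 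I then descend using the CCR (G4): if $\hat\psi_0(g)\Phi=0$ for all $g$ and $\Phi=\hat\psi_0(h)\Omega'$, commuting gives $\langle g,h\rangle\Omega'$ plus terms. I also use that the Fock vacuum is the unique annihilated vector; this is the Bose-CCR descent of Proposition~2 in Ref.~\cite{Pachon2026b}. It concludes that $M_0=0$ and $\hat\psi_0(g)|0\rangle=0$. Extending this to the spacetime-smeared $\hat\psi(f)|0\rangle=0$ uses the Killing-flow covariance together with $\hat H|0\rangle=0$, which follows from (G6$_\mathrm{c}$) and (G5), and the dynamics $\hat\psi(t,\cdot)=\mathrm{e}^{\mathrm{i}t\hat H/\hbar}\hat\psi_0\mathrm{e}^{-\mathrm{i}t\hat H/\hbar}$.

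\textbf{Step 3: non-separation.} I pick $f$ supported in $\Op$ with $\hat\psi(f)\neq0$; this is possible by the CCR, since $[\hat\psi(f),\hat\psi^\dagger(f)]=\|f\|^2\1\neq0$ at equal time. Because $\hat\psi(f)|0\rangle=0$, a nonzero element of $\F(\Op)$, or a bounded spectral or polar-decomposition piece of it affiliated to $\F(\Op)$, annihilates $|0\rangle$. So $|0\rangle$ is not separating, and the Tomita operator $S$ is not densely defined and closable with the needed properties, which leaves the modular flow undefined. The complement hypothesis on $\Op$ enters exactly as in Ref.~\cite{Pachon2026b}.

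The main obstacle is Step 2. There the flat proof's boost-positivity derivation of mass-boundedness has to be replaced, and I must check carefully that the CCR descent never silently uses translation invariance. For example, the descent might use momentum-space arguments to identify the annihilated vector with the Fock vacuum. I would try first to phrase the descent purely through number-operator arguments on the Fock space, with $\hat N=\hat M/m_0$ justified by the spectral hypothesis, which is translation-free.
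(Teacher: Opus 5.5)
Your Step~1 is exactly the paper's argument. Centrality of $\hat M$ makes $U(\theta)|0\rangle$ Killing-flow invariant, and uniqueness in (G6$_\mathrm{c}$) plus Stone's theorem gives $|0\rangle\in\Hil_{M_0}$. Your overall plan of re-proving Lemma~2 with the Killing flow and transferring the rest is also the paper's plan. The gap is in Step~2, and it is genuine.

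Step~1 plus the Fock-spectrum hypothesis only tell you that $|0\rangle\in\Hil_{n_0 m_0}$ for some finite $n_0$, so any $n_0+1$ annihilation fields kill it. Your descent then shows at most that the vector obtained from $|0\rangle$ by $n_0$ annihilators is proportional to the Fock vacuum $\Omega_{\mathrm{F}}$. Knowing that $\Omega_{\mathrm{F}}$ is the unique annihilated vector says nothing about whether $|0\rangle$ itself is annihilated, i.e.\ whether $n_0=0$. (The $\langle g,h\rangle$ term you invoke also comes from $[\hat\psi_0(g),\hat\psi_0^\dagger(h)]$, not from commuting two annihilators, which commute by (G4).)

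In the flat Fock setting the missing link is spatial-translation invariance. Total momentum has no point spectrum on the $n\geq1$ sectors, so none of them contains a translation-invariant vector. (G6$_\mathrm{c}$) gives only invariance under $\mathrm{e}^{-\mathrm{i}t\hat H/\hbar}$, which does not constrain particle number, and no bookkeeping with $\hat N=\hat M/m_0$ can substitute for it.

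Here is a concrete counterexample. On the Fock space over $L^2(\R^3)$ take $\hat M=m_0\hat N$ and $\hat H=\mathrm{d}\Gamma(h)+F(\hat N)$. Here $\mathrm{d}\Gamma(h)$ is the second quantization of a one-particle Schr\"odinger operator $h\geq0$ with a simple zero-energy ground state $g_0$ (a shifted harmonic oscillator, say), and $F(1)=0$, $F(n)=1$ for $n\neq1$.
\begin{itemize}
\item $\hat H\geq0$ commutes with $\hat M$, and its kernel is spanned by $a^\dagger(g_0)\Omega_{\mathrm{F}}$.
\item The net generated by the Heisenberg fields $\mathrm{e}^{\mathrm{i}t\hat H/\hbar}\hat\psi_0(x)\mathrm{e}^{-\mathrm{i}t\hat H/\hbar}$ satisfies every hypothesis of the theorem, including the Fock-spectrum one.
\item Yet $|0\rangle=a^\dagger(g_0)\Omega_{\mathrm{F}}$ has $M_0=m_0$, and $\hat\psi_0(g)|0\rangle$ is a nonzero multiple of $\Omega_{\mathrm{F}}$ for generic $g$.
\end{itemize}
So the conclusion of your Step~2 is false under the stated hypotheses.

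This example does not contradict the theorem's conclusion, only the route through $\hat\psi(f)|0\rangle=0$. The paper's proof does not close the gap either. It lists the step yielding $\hat\psi(f)|0\rangle=0$ among those that go through verbatim. It uses the Fock-spectrum hypothesis only in place of the boost-positivity lower bound, never to place $|0\rangle$ at $M=0$.

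There are two honest repairs.
\begin{itemize}
\item Add $|0\rangle\in\Hil_0$ (i.e.\ $|0\rangle$ is the Fock vacuum) as an explicit hypothesis. Your Steps~2--3 then go through.
\item Drop Step~2 and run Step~3 with products. Since $|0\rangle\in\Hil_{n_0m_0}$, every product $\hat\psi(f_1)\cdots\hat\psi(f_{n_0+1})$ with $\mathrm{supp}\,f_i\subset\Op$ annihilates $|0\rangle$. Such a product is nonzero for suitable $f_i$, as you can see by comparing it with $\hat\psi_0(g)^{n_0+1}\neq0$ through the joint continuity in (G7$^*$)(d). Its closure is affiliated with $\F(\Op)$. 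The projection onto the orthogonal complement of its kernel is then a nonzero element of $\F(\Op)$ annihilating $|0\rangle$.
\end{itemize}

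The same time-zero comparison should also replace your justification of $\hat\psi(f)\neq0$ in Step~3. (G4) controls only equal-time commutators, not $[\hat\psi(f),\hat\psi^\dagger(f)]$ for spacetime-smeared $f$.
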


\begin{proof}
Ref.~\cite{Pachon2026b}'s proof of Theorem~2 uses (G6) only
in Lemma~2 of that work: the argument is that $U(\theta) =
\mathrm{e}^{\mathrm{i}\theta\hat M/\hbar}$ commutes with all $U(g)$ by centrality of
$\hat M$, hence $U(\theta) |0\rangle$ is also translation-invariant,
and uniqueness of the translation-invariant vector forces
$U(\theta) |0\rangle = c(\theta) |0\rangle$ for some
$c(\theta) \in U(1)$.

The same conclusion holds with (G6$_\mathrm{c}$) and (G3$_\mathrm{c}$) in
place of (G6) and (G3). $\hat M$ commutes with $\hat H$ (the
time-translation / Killing-flow generator) by centrality of $\hat M$
in $\tilde G^\mathrm{NC}$, so $U(\theta) |0\rangle$ is again
$\hat H$-invariant; the (G6$_\mathrm{c}$) uniqueness clause forces
$U(\theta) |0\rangle = c(\theta) |0\rangle$, and Stone's theorem
applied to $\hat M$ yields $c(\theta) = \mathrm{e}^{\mathrm{i}\theta M_0/\hbar}$
for some $M_0 \in \R$. This recovers Lemma~2 of
Ref.~\cite{Pachon2026b}.

The remaining steps of Ref.~\cite{Pachon2026b}'s proof
(Lemma~1 time-zero/4D correspondence, Step~1 Bargmann-eigenvector
argument $\hat\psi(f)|0\rangle = 0$, Step~2 affiliation
$\hat\psi(f) \in \F(\Op_f)$ from (G4), and the contradiction with
the c-number CCR via the separating property) do not use (G6) or
the spatial-translation/boost content of (G3) at all, and go
through verbatim.

The boost-positivity Lemma~4 of Ref.~\cite{Pachon2026b}, which
under (G1)--(G6) derives $\sigma(\hat M) \subseteq [0, \infty)$ from
positive-energy boost positivity, is replaced here by the
Fock-spectrum hypothesis $\sigma(\hat M) \subseteq m_0 \Z_{\geq 0}$,
which is automatically supplied by the Fock construction and is
manifestly non-negative. The loss of full Galilean boost symmetry on
a curved background (where $\tilde G^\mathrm{NC}$ generally does not
admit Galilean boosts) is therefore not a problem for the present
application.
\end{proof}

\begin{remark}[Scope of Theorem~\ref{thm:curved-obstruction}]
The Fock-spectrum hypothesis in Theorem~\ref{thm:curved-obstruction}
is restrictive in principle but automatic for the Newton--Cartan
limits constructed in Sections~\ref{sec:minkowski}
and~\ref{sec:curved}: in each case the limiting net is realised on
the Schr\"odinger Fock space (the positive-frequency subspace of
the Klein--Gordon Fock space at finite~$c$), and the central charge
$\hat M$ acts as $n m \cdot \1$ on the $n$-particle sector. Whether
the obstruction extends to non-Fock representations on curved
backgrounds is an open question parallel to the analogous open
question for the flat case raised in
Ref.~\onlinecite{Pachon2026b}.
\end{remark}
 
\section{Main results}\label{sec:results}
 
We state the main theorems. The proofs are given in
Sections~\ref{sec:minkowski} and~\ref{sec:curved}.
 
\begin{theorem}[Minkowski Newton--Cartan limit]\label{thm:minkowski}
Let $\hat\varphi_c$ be the free Klein--Gordon field of mass $m > 0$ on
Minkowski spacetime $(\R^{3,1}, \eta_c)$, $\eta_c =
\mathrm{diag}(-c^2, 1, 1, 1)$, in the Poincar\'e-invariant vacuum
representation. Define the rescaled field
\begin{equation}\label{eq:rescaling-flat}
\hat\psi(t, x) := \sqrt{\frac{2 m c^2}{\hbar}} \,
\mathrm{e}^{+ \mathrm{i} m c^2 t/\hbar}\, \hat\varphi_c^{+}(t,x),
\end{equation}
where $\hat\varphi_c^+$ denotes the positive-frequency part. Then the
$c\to\infty$ limit of the Wightman functions of $\hat\psi$ exists and
defines, by GNS reconstruction, a Galilean Haag--Kastler net
$\F$ on $\R^3 \times \R$ satisfying (G1)--(G6) + (G7$^*$)(a) +
(G7$^*$)(d), with Bargmann central charge $\hat M = m \cdot \1$ and
limiting Hamiltonian $\hat H_S = \hat P^2/(2m)$.
\end{theorem}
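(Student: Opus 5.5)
The plan is to compute the two-point function of $\hat\psi$ explicitly in momentum space, pass to the limit $c\to\infty$, and then build the limiting net from the resulting quasi-free state. The relativistic positive-frequency part is
\[
\hat\varphi_c^{+}(t,x)=\int\frac{\mathrm{d}^3p}{(2\pi\hbar)^{3}}\sqrt{\frac{\hbar^2 c^2}{2\omega_p}}\,\hat a(p)\,\mathrm{e}^{-\mathrm{i}(\omega_p t - p\cdot x/\hbar)},\qquad \hbar\omega_p=\sqrt{p^2c^2+m^2c^4}.
\]
The normalisation is fixed by the $c$-restored CCR and $E_c$. Expanding $\hbar\omega_p = mc^2 + p^2/(2m) + O(c^{-2})$, the phase $\mathrm{e}^{+\mathrm{i}mc^2t/\hbar}$ strips the rest energy. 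The prefactor $\sqrt{2mc^2/\hbar}$ is chosen so that $\sqrt{2mc^2/\hbar}\sqrt{\hbar^2c^2/(2\omega_p)}\to$ a $c$-independent constant. Hence $\hat\psi(t,x)$ becomes $\int \hat a(p)\,\mathrm{e}^{-\mathrm{i}(p^2t/2m\hbar - p\cdot x/\hbar)}$ with $O(c^{-2})$ corrections. I will check this bookkeeping of factors of $\hbar,c$ against the normalisation of $E_c$, adjusting the measure convention if needed.

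The two-point function $\langle 0|\hat\psi(t,x)\hat\psi^\dagger(t',x')|0\rangle$ is then an integral of $\mathrm{e}^{-\mathrm{i}(\omega_p-mc^2/\hbar)(t-t')+\mathrm{i}p\cdot(x-x')/\hbar}$ times a weight tending to $1$. By dominated convergence, after smearing with Schwartz test functions, it converges to the Schr\"odinger propagator. The other orderings vanish identically because $\hat\psi$ is built from annihilators only. All higher Wightman functions are quasi-free, so they are sums of products of two-point functions and converge too. GNS reconstruction of the limiting quasi-free state gives the Schr\"odinger Fock space with vacuum $|0\rangle$. Local algebras $\F(\Op)$ are the von Neumann algebras generated by Weyl operators of $\hat\psi(f)+\hat\psi^\dagger(f)$ (and $\mathrm{i}$-rotated versions) with $\mathrm{supp} f\subset\Op$.

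I will then verify the axioms on this explicit Fock model.
\begin{itemize}
\item (G1) holds by construction.
\item (G4) and (G2): the equal-time limiting two-point function gives $\delta^3(x-x')$, and equal-time commutation of disjointly supported fields follows from this.
\item (G5): $\hat H_S=\hat P^2/(2m)$ is the second quantisation of $p^2/2m\ge0$. It arises as the strong limit of $\hat H-mc^2\hat N$, by the argument of Proposition~\ref{prop:M-as-H-over-csq} applied sector by sector.
\item (G3): the Bargmann representation is written down explicitly on one-particle space (translations, rotations, boosts $\psi(x)\mapsto \mathrm{e}^{\mathrm{i}m v\cdot x/\hbar}\psi(x-vt)$) and second-quantised. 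Its central charge is $\hat M=m\hat N$, which equals $m\1$ on the one-particle sector referred to in the statement. Its covariance of the net is checked on generators.
\item (G6): $|0\rangle$ is the unique translation-invariant vector, since the one-particle space has no invariant vectors and Fock second quantisation preserves this.
\item (G7$^*$)(a): immediate from $U(\theta)\hat a(p)U(\theta)^{-1}=\mathrm{e}^{-\mathrm{i}\theta m/\hbar}\hat a(p)$.
\item (G7$^*$)(d): the Schr\"odinger field is a continuous function of time into operators on the finite-particle domain, stable under $\hat H_S$. So time-zero smearing limits exist and are independent of the approximating $\chi_\epsilon$.
\end{itemize}

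The main obstacle is making the limit well defined at the level of nets rather than just of two-point functions. The positive-frequency part $\hat\varphi_c^+$ is not itself local in the relativistic net. I therefore expect to interpret the statement's limit purely as convergence of Wightman distributions of the rescaled fields, and to define $\F$ from the limiting fields, rather than as a limit of the algebras $\A_c(\Op)$. For the convergence itself, I will first try bounding $|\sqrt{2mc^2/\hbar}\,\sqrt{\hbar^2c^2/(2\omega_p)}-\text{const}|$ uniformly in $p$ together with the phase difference $|(\omega_p-mc^2/\hbar)-p^2/(2m\hbar)|\le p^4/(8m^3c^2\hbar)$, both multiplied by the Fourier transforms of the test functions. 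The Schwartz decay of those transforms then gives dominated convergence.
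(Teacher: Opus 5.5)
Your plan follows the paper's route: compute the rescaled two-point function in momentum space and take $c\to\infty$ (\S\ref{sec:flat-Wightman}--\S\ref{sec:flat-CCR}), extend to all $n$-point functions by the quasi-free (Wick) structure, reconstruct the Schr\"odinger Fock space by GNS, and check the axioms one by one on that model. Your point that $\hat\varphi_c^+$ is not local, so that $\F(\Op)$ must be generated by the limiting smeared fields, is correct. The paper in effect does the same in \S\ref{sec:open-algebra}, despite the ``inherited from isotony of the KG net'' wording in its (G1) item.

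The one genuine difference is (G3). The paper asserts that the Poincar\'e representation $U_c$ contracts to a Bargmann representation and identifies $\hat M=\lim_c\hat H/c^2$ through Proposition~\ref{prop:M-as-H-over-csq}. You instead write down the mass-$m$ Schr\"odinger representation on one-particle space and second-quantise it. Your version gives strong continuity and net covariance explicitly: a boost sends $\hat\psi(f)$ to $\hat\psi(f_v)$, where $f_v$ is the Galilean transport of $f$ times a smooth phase, so $\F(\Op)\mapsto\F(g\cdot\Op)$. The paper's version is what exhibits $\hat M$ as the shadow of the rest energy, which is its structural point but is not needed for the statement. Your $\hat M=m\hat N$ matches the paper's ``$nm$ on the $n$-particle sector''. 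Your smeared dominated-convergence argument is a cleaner form of the paper's pointwise computation and of Lemma~\ref{lem:strong-conv}.

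Two corrections. First, the normalisation you flagged is indeed wrong as written. With mode weight $\sqrt{\hbar^2c^2/(2\omega_p)}$ you get $\sqrt{2mc^2/\hbar}\,\sqrt{\hbar^2c^2/(2\omega_p)}\to\hbar c$, not a $c$-independent constant, so the rescaled two-point function diverges like $\hbar^2c^2$. The prefactor in the statement is tuned to the paper's convention (\ref{eq:KG-mode-flat}): weight $(2\omega_c)^{-1/2}$ with measure $\mathrm{d}^3k/(2\pi)^3$ and $[a(k),a^\dagger(k')]=(2\pi)^3\delta^3(k-k')$. Equivalently, $[\hat\varphi_c(t,x),\partial_t\hat\varphi_c(t,y)]=\mathrm{i}\,\delta^3(x-y)$, for which $(2mc^2/\hbar)/(2\omega_c)\to1$. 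You have to adopt exactly this convention, since any normalisation carrying extra powers of $\hbar$ or $c$ leaves a divergent or vanishing factor.

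Second, in (G6) the principle ``second quantisation preserves the absence of invariant vectors'' is false in general. A one-particle space carrying momenta $\pm k_0$ has no translation-invariant vector, yet its two-particle space contains one. What you actually need is that $\hat P$ (or $\hat H_S$) has purely absolutely continuous spectrum on every $n\ge1$ sector. That holds here because the one-particle spectral measure is absolutely continuous.
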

 
\begin{theorem}[Static curved Newton--Cartan limit]\label{thm:static}
Let $(\M, g_c)$ be a static globally hyperbolic spacetime admitting a
Post-Newtonian expansion (Section~\ref{sec:NC}) with Newtonian
potential $V(x)$, and let $\hat\varphi_c$ be the free Klein--Gordon
field of mass $m > 0$ on $(\M, g_c)$ in the Boulware-like static
ground state. Define the rescaled field $\hat\psi$ by
(\ref{eq:rescaling-flat}) (the same rescaling as in the flat case;
the lapse $N(x)$ does not appear). Then the $c\to\infty$ limit of the
Wightman functions exists on regions where the Post-Newtonian
expansion is uniformly valid, and defines a Galilean Haag--Kastler net
$\F$ on the limiting Newton--Cartan structure satisfying (G1), (G2),
(G3$_\mathrm{c}$), (G4), (G5), (G6$_\mathrm{c}$), (G7$^*$)(a), (G7$^*$)(d), with
Bargmann central charge $\hat M = m \cdot \1$ and limiting Hamiltonian
\begin{equation}\label{eq:H_S-curved}
\hat H_S = -\frac{\hbar^2}{2m} \Delta_{h_0} + V(x),
\end{equation}
where $\Delta_{h_0}$ is the Laplace--Beltrami operator of the limiting
spatial metric.
\end{theorem}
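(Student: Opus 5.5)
The proof follows the flat-space strategy of Theorem~\ref{thm:minkowski}: first control the one-particle structure, then lift to second quantisation, then check the axioms. The curved-specific input is the Post-Newtonian expansion of the static Klein--Gordon operator.

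\textbf{Step 1 (one-particle reduction).} Write the Klein--Gordon equation on the static metric (\ref{eq:static-metric}) as $\partial_t^2\varphi + c^2 A_c\varphi = 0$. Here $A_c = -N\,|h|^{-1/2}\partial_i(|h|^{1/2} N h^{ij}\partial_j)\,\cdot\, + N^2 m^2c^2/\hbar^2$, and $A_c$ is self-adjoint and positive on $L^2(\Sigma, N^{-1}\sqrt{h}\,\mathrm{d}^3x)$. The Boulware-like state is the quasi-free state whose one-particle Hilbert space is built from the positive frequency $\Omega_c = c\,A_c^{1/2}$. Its two-point function is
\begin{equation*}
\omega_{\mathrm B}\bigl(\hat\varphi_c(t,x)\hat\varphi_c(t',x')\bigr) = \frac{\hbar}{2}\bigl(\Omega_c^{-1}\mathrm{e}^{-\mathrm{i}\Omega_c(t-t')}\bigr)(x,x').
\end{equation*}

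\textbf{Step 2 (rest-energy stripping).} Insert (\ref{eq:PN-N}) to obtain
\begin{equation*}
\hbar^2 c^2 A_c = m^2c^4 + 2mc^2\bigl(-\tfrac{\hbar^2}{2m}\Delta_{h_0} + V\bigr) + O(1)
\end{equation*}
in the sense of quadratic forms, on a common core such as $C_c^\infty$ of the region where the expansion is uniformly valid. This gives $\hbar\Omega_c - mc^2 \to \hat H_S$ with $\hat H_S$ as in (\ref{eq:H_S-curved}), and $\Omega_c^{-1}\sim \hbar/(mc^2)$. The rescaling (\ref{eq:rescaling-flat}) multiplies the two-point function by $(2mc^2/\hbar)\,\mathrm{e}^{\mathrm{i}mc^2(t-t')/\hbar}$. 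The prefactor cancels $\hbar/(2\Omega_c)$, and the phase strips the rest energy. The limit two-point function is therefore the kernel of $\mathrm{e}^{-\mathrm{i}\hat H_S(t-t')/\hbar}$, and $N\to1$ removes the lapse from the measure.

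The main obstacle is making this convergence rigorous. I would first prove strong resolvent convergence of $\hbar\Omega_c - mc^2$ to $\hat H_S$. The route is the square-root identity
\begin{equation*}
\hbar\Omega_c - mc^2 = \frac{\hbar^2\Omega_c^2 - m^2c^4}{\hbar\Omega_c + mc^2},
\end{equation*}
together with uniform lower bounds on $A_c$ where $V$ is bounded. The Trotter--Kato theorem then upgrades this to strong convergence of the unitary groups, and hence to distributional convergence of the smeared Wightman functions. Truncated $n$-point functions vanish beyond order two, so quasi-freeness handles all higher orders. The negative-frequency contributions, i.e.\ the $\langle\hat\psi\hat\psi\rangle$ pieces, are either absent by construction or oscillate as $\mathrm{e}^{2\mathrm{i}mc^2t/\hbar}$ and vanish distributionally.

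\textbf{Step 3 (axioms).} GNS reconstruction of the limiting quasi-free state yields the Schr\"odinger Fock space over $L^2(\Sigma,\sqrt{h_0}\,\mathrm{d}^3x)$ with Fock vacuum $|0\rangle$ and dynamics $\mathrm{d}\Gamma(\hat H_S)$. The axioms are then checked one by one.
\begin{itemize}
\item (G4) is read off from the equal-time limit two-point function $\delta^3(x-x')$, and (G2) follows from it.
\item (G1) holds by defining $\F(\Op)$ as the von~Neumann algebras generated by $\hat\psi(f)$ and $\hat\psi^\dagger(f)$ with $\mathrm{supp} f\subset\Op$.
\item (G5) holds provided $\hat H_S\ge0$ after a constant shift by $\inf\sigma(\hat H_S)$, which is needed if $V$ is negative. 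This shift must be recorded; it is absorbed into the phase convention.
\item (G6$_\mathrm{c}$) holds because the Fock vacuum is the unique $\mathrm{d}\Gamma$-invariant vector. Uniqueness needs $\hat H_S$ to have no zero eigenvalue after the shift, or else the invariance must be read with the shifted generator. This point deserves care.
\item (G7$^*$)(a) and (G3$_\mathrm{c}$) follow by setting $\hat M = m\,\hat N$. This gives the $U(1)$ grading $\mathrm{e}^{-\mathrm{i}\theta m/\hbar}$ on $\hat\psi$, the Fock spectrum $m\Z_{\ge0}$ used in Theorem~\ref{thm:curved-obstruction}, and Killing-flow and rotation covariance lifted via $\Gamma$.
\item (G7$^*$)(d) holds with time-zero fields on the finite-particle domain $\D$, which is stable under $\mathrm{d}\Gamma(\hat H_S)$ on a core.
\end{itemize}
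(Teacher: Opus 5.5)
Your route is sound in outline and differs from the paper's. The point you flag at (G5)/(G6$_\mathrm{c}$), however, is a genuine gap, and it does not resolve the way you suggest. Without any shift, $\mathrm{d}\Gamma(\hat H_S)$ has eigenvalues $nE_1\to-\infty$ in the hydrogenic case, so (G5) fails, and no Fock-level constant cures this. The shift by $E_0=\inf\sigma(\hat H_S)$ must therefore act on the one-particle operator, and the Fock generator becomes $\mathrm{d}\Gamma(\hat H_S-E_0)=\mathrm{d}\Gamma(\hat H_S)-E_0\hat N$. When $E_0$ is an eigenvalue with eigenvector $\phi_0$ --- exactly the gravitational hydrogenic case $E_0=E_1$ --- every $\phi_0^{\otimes_s n}$ is invariant, so the uniqueness clause of (G6$_\mathrm{c}$) fails. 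Reading the invariance ``with the shifted generator'' does not escape this dilemma.

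The fix is to shift strictly past the bottom of the spectrum: use $\mathrm{d}\Gamma(\hat H_S)+\mu\hat N=\mathrm{d}\Gamma(\hat H_S)+(\mu/m)\hat M$ with $\mu>-E_0$. This is the Bargmann internal-energy freedom, since $\hat M$ is central. Equivalently, strip $mc^2-\mu$ instead of $mc^2$ in the rescaling phase. The shift re-phases $\hat\psi$ by $\mathrm{e}^{-\mathrm{i}\mu t/\hbar}$ and leaves the local algebras, (G3$_\mathrm{c}$), (G4) and (G7$^*$)(a) untouched. It makes the one-particle generator $\geq\mu+E_0>0$, so the Fock generator is $\geq 0$ with kernel $\mathbb{C}|0\rangle$. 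The paper's own verification does not close this point either. It calls the shift a constant and derives uniqueness from simplicity of the one-particle ground state of $\hat H_S$, conflating that state with the Fock vacuum. Your Fock-level criterion is the correct one.

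On the route itself, the paper works mode by mode. It separates variables, writes $\omega=mc^2/\hbar+\omega_S$ with $\omega_S=O(c^0)$, and matches $O(c^{-2})$ terms to get $\hbar\omega_S u_\omega=\hat H_S u_\omega$. It then takes the limit term by term in the Boulware spectral sum and sets $\hat M=\lim_c\hat H_c/c^2$.

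You instead prove strong resolvent convergence of $\hbar\Omega_c-mc^2$ via the square-root identity and pass to the unitary groups by Trotter--Kato. You handle higher $n$-point functions by quasi-freeness and set $\hat M=m\hat N$ directly. This controls what the mode matching leaves formal: continuous spectrum, the interchange of limit and mode sum, and the $c$-dependent measure $N^{-1}\sqrt{h}\,\mathrm{d}^3x$.

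The price is a uniform lower bound and a core for $\hat H_S$ on which the expansion is uniform. Both are available when (\ref{eq:PN-N}) holds uniformly on the whole slice with $V$ bounded below. They are not available for the global Schwarzschild Boulware state. There the infinite redshift gives $\sigma(\hbar\Omega_c)=[0,\infty)$, so $\hbar\Omega_c-mc^2$ reaches down to $-mc^2$. Moreover, $C_c^\infty(\R^3\setminus\{0\})$ is a form core but not an operator core for the Coulomb Hamiltonian, because the $s$-wave is limit-circle at $r=0$. You would then need a further argument that near-horizon modes decouple and that the standard realisation of $\hat H_S$ emerges. The paper's ansatz $\omega_S=O(c^0)$ for every mode hides the same issue.

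(Minor: with your normalisation $\tfrac{\hbar}{2}\Omega_c^{-1}$ the rescaled prefactor is $mc^2\Omega_c^{-1}\to\hbar$, not $1$. Use the paper's $1/(2\omega)$ convention or adjust the constant in (\ref{eq:rescaling-flat}).)
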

 
\begin{corollary}[Modular structure collapse]
Under the hypotheses of Theorem~\ref{thm:minkowski}, by
Theorem~\ref{thm:flat-obstruction}, the limiting net $\F$ on $\R^3 \times \R$
does not satisfy Reeh--Schlieder, and the Tomita--Takesaki modular
flow on $\F(\Op)$ relative to the limiting vacuum is undefined for
any bounded $\Op \subsetneq \R^3 \times \R$ with non-empty interior
whose complement also has non-empty interior. Under the hypotheses
of Theorem~\ref{thm:static}, by Theorem~\ref{thm:curved-obstruction},
the same conclusion holds for the limiting net $\F$ on
$\M_\mathrm{NC}$ relative to the Killing-flow-invariant ground state,
for any bounded $\Op \subsetneq \M_\mathrm{NC}$ with non-empty interior
whose complement also has non-empty interior.
\end{corollary}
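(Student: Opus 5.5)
The corollary is a composition of the two constructive theorems with the two obstruction theorems, so the plan is to check that every hypothesis of the obstruction result is delivered by the corresponding construction, and then read off the conclusion.

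\emph{Flat case.} Theorem~\ref{thm:minkowski} says the $c\to\infty$ limit of the rescaled field (\ref{eq:rescaling-flat}) gives, by GNS reconstruction, a net $\F$ on $\R^3\times\R$ satisfying (G1)--(G6) + (G7$^*$)(a) + (G7$^*$)(d). These are exactly the hypotheses of Theorem~\ref{thm:flat-obstruction}. That theorem then says $|0\rangle$ is not separating for $\F(\Op)$ for every admissible $\Op$, so the Tomita operator $S:A|0\rangle\mapsto A^*|0\rangle$ is not well defined and there is no modular flow. The Reeh--Schlieder property fails as well. By the standard duality, $|0\rangle$ cyclic for $\F(\Op')$ is equivalent to $|0\rangle$ separating for $\F(\Op')'\supseteq\F(\Op)$, using (G2) for the commutant inclusion. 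Concretely, $\hat\psi(f)|0\rangle=0$ with $\hat\psi(f)\neq0$, which is the mechanism described after Theorem~\ref{thm:flat-obstruction}, already shows that the vacuum is annihilated by a nonzero local element.

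\emph{Static curved case.} Here I use Theorem~\ref{thm:static}, which supplies (G1), (G2), (G3$_\mathrm{c}$), (G4), (G5), (G6$_\mathrm{c}$), (G7$^*$)(a) and (G7$^*$)(d) for the limit net on $\M_\mathrm{NC}$. Theorem~\ref{thm:curved-obstruction} needs one further input: the Fock-spectrum hypothesis $\sigma(\hat M)\subseteq m_0\Z_{\ge0}$.

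\textbf{Main obstacle: verifying the Fock-spectrum hypothesis.} The plan is to do this explicitly rather than simply cite the Remark.
\begin{enumerate}
\item The limiting representation is the GNS representation of the quasi-free limit state. Its two-point function is built from the positive-frequency projection of $\hat H_S$ in (\ref{eq:H_S-curved}).
\item That GNS space is the symmetric Fock space over the one-particle space $L^2(\Sigma,\mathrm{d}\mathrm{vol}_{h_0})$.
\item By (G7$^*$)(a) with $m_0=m$, $\hat M$ is $m$ times the number operator, so it acts as $nm$ on the $n$-particle sector.
\item Hence $\sigma(\hat M)=m\Z_{\ge0}$, as required.
\end{enumerate}
There is a possible gap: the statement "$\hat M=m\cdot\1$" in Theorem~\ref{thm:static} refers to the one-particle normalization and must be read consistently with this sector decomposition. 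I would flag this explicitly.

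\emph{Geometric conditions.} The admissible regions $\Op$ in the corollary are bounded, with non-empty interior and complement of non-empty interior. This matches the hypotheses of both obstruction theorems. In the curved case I also need $\Op$ inside the region where the Post-Newtonian expansion is uniformly valid, so I would restrict $\Op$ to that domain to keep the net well defined. The "limiting vacuum" in the curved case is the (G6$_\mathrm{c}$) ground state, namely the Fock vacuum annihilated by $\hat\psi$, so it is the vector to which Theorem~\ref{thm:curved-obstruction} applies.

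With all hypotheses checked, the two conclusions follow directly from the cited theorems.
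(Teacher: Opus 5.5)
Your main line is the paper's own. The corollary is Theorem~\ref{thm:minkowski} fed into Theorem~\ref{thm:flat-obstruction}, and Theorem~\ref{thm:static} fed into Theorem~\ref{thm:curved-obstruction}, with the Fock-spectrum hypothesis supplied by the construction. The paper disposes of that hypothesis in the Remark following Theorem~\ref{thm:curved-obstruction}; your explicit check is the same content.

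One step is wrong: the duality argument you add for Reeh--Schlieder. The inclusion $\F(\Op)\subseteq\F(\Op')'$ is not supplied by (G2), which only makes fields commute when they sit at a \emph{common} time with disjoint spatial supports. An open $\Op$ always has temporal extent, and fields in $\Op$ and in its complement at different times do not commute. In the limiting net, $[\hat\psi(t,x),\hat\psi^\dagger(t',x')]$ is given by the Schr\"odinger kernel (\ref{eq:Schrodinger-Wightman}), whose modulus is $\bigl(m/(2\pi\hbar|t-t'|)\bigr)^{3/2}\neq0$ for all $x,x'$ whenever $t\neq t'$.

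Worse, the conclusion the argument would deliver, that $|0\rangle$ is not cyclic for $\F(\Op')$, is false for the Minkowski limit net of Theorem~\ref{thm:minkowski}. Suppose a one-particle vector $\phi$ is orthogonal to $\hat\psi^\dagger(f)|0\rangle$ for all $f$ supported in a cylinder $(t_1,t_2)\times B$ inside the region. Then $F(t)=\langle\phi,\mathrm{e}^{\mathrm{i}\hat H_S t/\hbar}g\rangle$ vanishes on $(t_1,t_2)$ for every $g\in C_c^\infty(B)$. Since $\hat H_S\geq0$, $F$ continues analytically to $\mathrm{Im}\,t>0$, so it vanishes identically. At $t=\mathrm{i}\epsilon$ this says the real-analytic function $\mathrm{e}^{-\epsilon\hat H_S/\hbar}\phi$ vanishes on $B$, hence everywhere, so $\phi=0$. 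Taking products of creators, the vacuum is cyclic for the local algebra of every region with non-empty interior.

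So only the separating half of Reeh--Schlieder fails. That is exactly what the paper means (``this contradicts the separating property of Reeh--Schlieder''). Your direct observation, $\hat\psi(f)|0\rangle=0$ with $\hat\psi(f)\neq0$ affiliated with $\F(\Op_f)$, already establishes it, so drop the duality sentence.

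Two smaller points in your Fock-spectrum check:
\begin{itemize}
\item In step~1 there is no positive-frequency projection left in the limit. Positivity at finite $c$ refers to the Killing frequency $\omega=mc^2/\hbar+\omega_S$, which is positive for every $\omega_S$ in the bounded-below spectrum of $\hat H_S$ once $c$ is large. The limiting two-point function is therefore the full kernel of $\mathrm{e}^{-\mathrm{i}\hat H_S(t-t')/\hbar}$, bound states included, and this is what makes the one-particle space of step~2 all of $L^2$.
\item In step~3, (G7$^*$)(a) alone does not give $\hat M=m\hat N$. The grading only makes $\hat M-m\hat N$ commute with every field, so on the irreducible Fock representation $\hat M=m\hat N+M_0\1$. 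Landing in $m\Z_{\geq0}$ requires $\hat M|0\rangle=0$, which comes from $\hat M=\lim_c\hat H_c/c^2$ with $\hat H_c|0\rangle=0$. That is the paper's statement that $\hat M$ acts as $nm$ on the $n$-particle sector, and it is what you should cite.
\end{itemize}
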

 
The Schwarzschild geometry, treated in Section~\ref{sec:schw} as a
specialisation of Theorem~\ref{thm:static}, allows a sharper
description of which relativistic-side modular content fails to
survive the limit.
 
\section{Proof of Theorem~\ref{thm:minkowski} (Minkowski case)}
\label{sec:minkowski}
 
\subsection{Setup and dispersion}
 
In coordinates $(t, x) \in \R \times \R^3$ with metric
$\eta_c = \mathrm{diag}(-c^2, 1, 1, 1)$, the Klein--Gordon field
$\hat\varphi_c$ admits the mode expansion
\begin{equation}\label{eq:KG-mode-flat}
\hat\varphi_c(t,x) = \int \frac{\mathrm{d}^3 k}{(2\pi)^3 \sqrt{2\omega_c(k)}}\,
\bigl[ a(k) \mathrm{e}^{\mathrm{i}(k\cdot x - \omega_c^+ t)} + a^\dagger(k) \mathrm{e}^{-\mathrm{i}(k\cdot
x - \omega_c^+ t)}\bigr],
\end{equation}
with dispersion
\begin{equation}\label{eq:omega-expansion}
\omega_c^+(k) = c\sqrt{k^2 + m^2 c^2/\hbar^2} = \frac{m c^2}{\hbar} +
\frac{\hbar k^2}{2m} + O(c^{-2})
\end{equation}
on the positive-frequency branch. The CCR for the modes are
$[a(k), a^\dagger(k')] = (2\pi)^3 \delta^3(k-k') \1$.
 
\subsection{The rescaling and the rest-energy phase}
 
The rescaling (\ref{eq:rescaling-flat}) extracts the positive-frequency
part of $\hat\varphi_c$ and strips the rest-energy oscillation
$\mathrm{e}^{-imc^2 t/\hbar}$. Two structural points justify this form. First,
the rescaled field $\hat\psi$ is a complex (non-Hermitian) annihilation
field built from the $a(k)$ alone, in contrast to the Hermitian KG
field which mixes creation and annihilation. The c-number CCR
$[\hat\psi, \hat\psi^\dagger] = \delta^3$ that emerges in the limit
(Section~\ref{sec:flat-CCR}) is the c-number CCR of (G4) and is
incompatible with $\hat\psi = \hat\psi^\dagger$. Second, the
rest-energy phase $\mathrm{e}^{imc^2 t/\hbar}$ is the structural carrier of the
Bargmann grading: under time translation by $\tau$, the $\hat\psi$ of
(\ref{eq:rescaling-flat}) acquires the phase $\mathrm{e}^{-\mathrm{i}\hat H_S \tau/\hbar}
\cdot \mathrm{e}^{-imc^2 \tau/\hbar}$, with the rest-energy phase factor
realising the central $U(1)$ of the Bargmann extension on $\hat\psi$.
 
\subsection{The Wightman function in the limit}\label{sec:flat-Wightman}
 
The two-point Wightman function of $\hat\psi$ is
\begin{align}
\langle 0 | \hat\psi(t,x) \hat\psi^\dagger(t',x') | 0 \rangle
&= \frac{2 m c^2}{\hbar} \, \mathrm{e}^{+ \mathrm{i} m c^2 (t - t')/\hbar} \,\langle 0 |
\hat\varphi_c^+(t,x) \hat\varphi_c^-(t',x') | 0 \rangle\nonumber\\
&= \frac{2 m c^2}{\hbar} \int\frac{\mathrm{d}^3 k}{(2\pi)^3 \cdot 2 \omega_c(k)}
\, \mathrm{e}^{\mathrm{i} k \cdot (x - x')}\, \mathrm{e}^{\mathrm{i} [m c^2/\hbar -
\omega_c^+(k)](t-t')}.
\end{align}
The phase combines into $\mathrm{e}^{-\mathrm{i}\hbar k^2 (t-t')/(2m)} \cdot
[1 + O(c^{-2})]$ by (\ref{eq:omega-expansion}), and the prefactor
combines as $2mc^2/(2\omega_c \hbar) \to 2mc^2/(2 mc^2) \cdot
(\hbar/\hbar) = 1$ as $c\to\infty$. Thus
\begin{equation}\label{eq:flat-Wightman-limit}
\lim_{c\to\infty} \langle 0|\hat\psi(t,x)\hat\psi^\dagger(t',x')|0\rangle
= \int \frac{\mathrm{d}^3 k}{(2\pi)^3} \, \mathrm{e}^{\mathrm{i} k \cdot (x-x') - \mathrm{i}\hbar k^2
(t-t')/(2m)},
\end{equation}
which is the standard free Schr\"odinger Wightman function.
 
\subsection{The CCR in the limit}\label{sec:flat-CCR}
 
For the equal-time commutator,
\begin{align}
[\hat\psi(t,x), \hat\psi^\dagger(t,x')]
&= \frac{2 m c^2}{\hbar} \, [\hat\varphi_c^+(t,x),
\hat\varphi_c^-(t,x')]\nonumber\\
&= \frac{2 m c^2}{\hbar} \int \frac{\mathrm{d}^3 k}{(2\pi)^3 \cdot 2\omega_c(k)}
\, \mathrm{e}^{\mathrm{i} k \cdot (x-x')} \cdot \1.
\end{align}
Using $\omega_c(k) \to mc^2/\hbar$ for large~$c$ in the integrand:
\begin{equation}\label{eq:flat-CCR-limit}
\lim_{c\to\infty} [\hat\psi(t,x), \hat\psi^\dagger(t,x')] =
\delta^3(x-x') \cdot \1.
\end{equation}
The $\sqrt{2mc^2/\hbar}$ prefactor in (\ref{eq:rescaling-flat}) is
exactly tuned so that the $c$-dependence in
$1/(2\omega_c) \sim \hbar/(2mc^2)$ is cancelled. The remaining
equal-time commutator $[\hat\psi(t,x), \hat\psi(t,x')] = (2mc^2/\hbar)
[\hat\varphi_c^+, \hat\varphi_c^+] = 0$ (since $\hat\varphi_c^+$
contains only annihilation operators).
 
\subsection{Verification of (G1)--(G6)+(G7$^*$)(a)+(G7$^*$)(d)}\label{sec:minkowski-verification}
 
The limiting net $\F$ is generated by $\hat\psi(f),
\hat\psi^\dagger(f)$ for $f \in C_c^\infty(\R^3 \times \R)$.
\begin{itemize}
\item[(G1)] Inherited from isotony of the KG net.
\item[(G2)] Equal-time locality follows from (G4) below: the
$\delta^3(x-x')$ on the right-hand side of the equal-time CCR is
supported on $x = x'$, so disjoint-spatial-support smearings at a
common time produce commuting smeared fields.
\item[(G3)] The In\"on\"u--Wigner contraction
(Section~\ref{sec:contraction}) realised at the level of the field
net: the unitary representation $U_c$ of the Poincar\'e group on the
KG side limits to a unitary representation $U$ of the Bargmann
extension $\tilde G$ on the limiting Hilbert space (the Schr\"odinger
Fock space, which is the $+$-frequency subspace of the KG Fock space).
The net covariance $U(g)\F(\Op)U(g)^{-1} = \F(g\cdot\Op)$ for $g \in
\tilde G$ descends from the Poincar\'e covariance of the KG net under
the contraction. The central charge $\hat M = \lim_c \hat H/c^2$ acts
as $nm \cdot \1$ on the $n$-particle sector by
Proposition~\ref{prop:M-as-H-over-csq} applied iteratively to
$n$-particle states; consequently $\Hil_\mathrm{Sch.\,Fock} = \bigoplus_n
\Hil_{M=nm}$ is the mass-charge decomposition $\Hil = \bigoplus_M
\Hil_M$ of (G3).
\item[(G4)] The equal-time c-number CCR
$[\hat\psi(t,x), \hat\psi^\dagger(t,x')] = \delta^3(x-x') \cdot \1$
is (\ref{eq:flat-CCR-limit}); the cross-commutators
$[\hat\psi, \hat\psi] = [\hat\psi^\dagger, \hat\psi^\dagger] = 0$
follow from $\hat\varphi_c^+$ containing only annihilation operators
(Section~\ref{sec:flat-CCR}).
\item[(G5)] The limiting Hamiltonian $\hat H_S = \hat P^2/(2m)$ has
spectrum $[0, \infty)$ on Fock space.
\item[(G6)] The Schr\"odinger Fock vacuum $|0\rangle$ is the
no-particle vector. It is invariant under spatial translations
$U(\mathbf{a})$ (which act on Fock space as $\mathrm{e}^{-\mathrm{i}\mathbf{a}\cdot\hat
P/\hbar}$, annihilating $|0\rangle$ since $\hat P|0\rangle = 0$) and
under time translations $\mathrm{e}^{-\mathrm{i}\hat H_S t/\hbar}$ (similarly,
$\hat H_S|0\rangle = 0$), and is unique with this property up to phase
on the Schr\"odinger Fock space.
\item[(G7$^*$)(a)] The rescaling (\ref{eq:rescaling-flat}) gives
$\hat\psi(f)$ (smeared with $f \in C_c^\infty(\R^3 \times \R)$) as a
c-number-times-(positive-frequency-part), and the positive-frequency
part is the particle-annihilation half of $\hat\varphi_c$. Thus
$\hat\psi(f)$ shifts particle number by $-1$, and $\hat M = \lim_c
\hat H/c^2$ acts on $n$-particle states as $nm \cdot \1$, giving
\begin{equation}
[\hat M, \hat\psi(f)] = -m \, \hat\psi(f)
\end{equation}
for all $f$. Exponentiating, $U(\theta)\hat\psi(f) U(\theta)^{-1} =
\mathrm{e}^{-\mathrm{i}\theta m/\hbar}\hat\psi(f)$ (and the conjugate relation for
$\hat\psi^\dagger(f)$), which is (G7$^*$)(a) with Bargmann mass charge
$m_0 = m$ identified as the Klein--Gordon mass.
\item[(G7$^*$)(d)] The algebraic Fock domain (finite linear combinations
of finite-particle states) is stable under $\hat\psi(g),
\hat\psi^\dagger(g), \hat H_S$ for $g \in C_c^\infty(\R^3)$, and
the time-zero limits exist on this domain.
\end{itemize}
This completes the proof of Theorem~\ref{thm:minkowski}.
 
\section{Proof of Theorem~\ref{thm:static} (static curved case),
Schwarzschild, and Reissner--Nordstr\"om}\label{sec:curved}
 
\subsection{The mode equation on a static background}\label{sec:mode-eq}
 
On the static metric (\ref{eq:static-metric}) with $\sqrt{-g_c} = N c
\sqrt{|h|}$ and $g_c^{00} = -1/(N^2 c^2)$, the Klein--Gordon equation
takes the form
\begin{equation}\label{eq:KG-static}
-\frac{1}{N^2 c^2} \partial_t^2 \hat\varphi_c +
\frac{1}{N\sqrt{|h|}} \partial_i \bigl( N \sqrt{|h|}\, h^{ij}
\partial_j \hat\varphi_c\bigr) - \frac{m^2 c^2}{\hbar^2}
\hat\varphi_c = 0.
\end{equation}
Separating variables $\hat\varphi_c(t, x) = \mathrm{e}^{-\mathrm{i}\omega t} u_\omega(x)$:
\begin{equation}\label{eq:mode-eq}
\frac{\omega^2}{N^2 c^2} u_\omega + \frac{1}{N\sqrt{|h|}}
\partial_i \bigl( N\sqrt{|h|} h^{ij} \partial_j u_\omega\bigr)
= \frac{m^2 c^2}{\hbar^2} u_\omega.
\end{equation}
The Killing field $\xi = \partial_t$ defines the Boulware-like static
ground state $\omega_{\mathrm{B}}$: positive frequency is taken with respect to
$\xi$, the vacuum $|0\rangle_{\mathrm{B}}$ is annihilated by mode operators
$a_\omega$ associated to positive-frequency mode functions
$u_\omega$.
 
\subsection{Post-Newtonian expansion of the mode frequencies}
\label{sec:PN-modes}
 
Write $\omega = mc^2/\hbar + \omega_S$ with $\omega_S = O(c^0)$ as
$c\to\infty$. Squaring gives $\omega^2 = (mc^2/\hbar)^2 [1 + 2\hbar
\omega_S/(mc^2) + O(c^{-4})]$. Substituting into (\ref{eq:mode-eq})
and dividing by $(mc^2/\hbar)^2$:
\begin{equation}
\frac{1 + 2\hbar\omega_S/(mc^2)}{N^2} u_\omega +
\frac{\hbar^2}{m^2 c^2} \cdot \frac{1}{N\sqrt{|h|}} \partial_i \bigl(
N\sqrt{|h|} h^{ij}\partial_j u_\omega \bigr) = u_\omega + O(c^{-4}).
\end{equation}
Multiplying by $N^2$, using the Post-Newtonian expansion (\ref{eq:PN-N})
$N^2 = 1 + 2V/(mc^2) + O(c^{-4})$:
\begin{equation}
[1 + 2\hbar\omega_S/(mc^2)] u_\omega + O(c^{-2}) =
[1 + 2V(x)/(mc^2)] u_\omega - \frac{\hbar^2}{m^2 c^2} \cdot
\frac{1}{\sqrt{|h_0|}}\partial_i (\sqrt{|h_0|} h_0^{ij}\partial_j
u_\omega) + O(c^{-4}),
\end{equation}
where $h_0$ denotes the leading-order ($c$-independent) spatial
metric. Equating the $O(c^{-2})$ coefficients and multiplying by
$mc^2/2$ gives the limiting eigenvalue equation
\begin{equation}\label{eq:limit-Schrodinger}
\hbar\omega_S \, u_\omega = -\frac{\hbar^2}{2m \sqrt{|h_0|}} \partial_i
\bigl(\sqrt{|h_0|} h_0^{ij}\partial_j u_\omega\bigr) + V(x) u_\omega
= \hat H_S \, u_\omega
\end{equation}
with $\hat H_S$ as in (\ref{eq:H_S-curved}). The limiting mode
functions $u_\omega$ are eigenfunctions of the
Newton--Cartan-Schr\"odinger Hamiltonian with eigenvalue
$\hbar\omega_S$.
 
\begin{remark}[Why the rescaling phase is position-independent]
The position-independent rescaling (\ref{eq:rescaling-flat}) used in
Theorem~\ref{thm:static} differs in form from a possible
``locally-redshifted'' rescaling $\hat\psi(t,x) = \sqrt{2N(x)mc^2/\hbar}
\mathrm{e}^{+iN(x)mc^2 t/\hbar}\hat\varphi_c^+$. Both are c-number rescalings
of $\hat\varphi_c^+$; the difference is whether the lapse $N(x)$ is
absorbed into the rescaling phase or into the limiting Hamiltonian.
The position-independent form is the structurally correct choice: (i)
it preserves the global Bargmann mass label $m$ on the canonical
fields without ambiguity, since the rescaling commutes with the
central operator $\hat M$ trivially; (ii) the cancellation between the
rescaling phase $\mathrm{e}^{imc^2 t/\hbar}$ and the rest-energy piece
$mc^2/\hbar$ of the mode frequencies $\omega$ is exact at the level of
the global Hamiltonian spectrum, not WKB-approximate; (iii) the
gravitational potential $V(x)$ emerges in the limiting Schr\"odinger
Hamiltonian (\ref{eq:H_S-curved}) as an honest operator, recovering
the standard Newtonian dynamics on the limiting Newton--Cartan
background. The position-dependent rescaling absorbs $V(x)$ into the
phase factor and obscures the dynamical content of the limit.
\end{remark}
 
\subsection{The Wightman function in the limit}
 
The Boulware two-point function admits the spectral representation
\begin{equation}
\langle 0| \hat\varphi_c^+(t,x) \hat\varphi_c^-(t',x')|0\rangle_{\mathrm{B}} =
\sum_\omega \frac{1}{2\omega} \mathrm{e}^{-\mathrm{i}\omega(t-t')} u_\omega(x)
\overline{u_\omega(x')},
\end{equation}
with the sum running over positive-frequency Killing modes. Inserting
(\ref{eq:rescaling-flat}) and using $\omega = mc^2/\hbar + \omega_S$
with $\hbar\omega_S$ a finite eigenvalue of $\hat H_S$:
\begin{align}
\langle 0|\hat\psi(t,x)\hat\psi^\dagger(t',x')|0\rangle &=
\frac{2 m c^2}{\hbar} \, \mathrm{e}^{+imc^2(t-t')/\hbar}
\sum_\omega \frac{1}{2\omega} \mathrm{e}^{-\mathrm{i}\omega(t-t')}\, u_\omega(x)
\overline{u_\omega(x')}\nonumber\\
&\xrightarrow{c\to\infty}\;
\sum_{\omega_S} \mathrm{e}^{-\mathrm{i}\omega_S (t-t')}\, u_{\omega_S}(x)
\overline{u_{\omega_S}(x')},\label{eq:Wightman-static-limit}
\end{align}
where in the limit the $\sum_\omega 1/(2\omega) \cdot 2mc^2/\hbar$
prefactor reduces to $\sum_{\omega_S} 1$ (since $1/\omega \sim
\hbar/(mc^2)$ to leading order), and the phase
$\mathrm{e}^{\mathrm{i}(mc^2/\hbar - \omega)(t-t')} \to \mathrm{e}^{-\mathrm{i}\omega_S (t-t')}$. The
right-hand side of (\ref{eq:Wightman-static-limit}) is the standard
Schr\"odinger Wightman function on the Newton--Cartan background with
Hamiltonian $\hat H_S$, in spectral form.
 
\subsection{Verification of the curved-space axioms}\label{sec:curved-axioms}

The verification of (G1), (G2), (G3$_\mathrm{c}$), (G4), (G5),
(G6$_\mathrm{c}$), (G7$^*$)(a), (G7$^*$)(d) on the limiting net proceeds as in
Section~\ref{sec:minkowski} with the following modifications
appropriate to the curved setting.

\begin{itemize}
\item[(G2)] Equal-time locality of the limiting net follows from
the c-number support of the equal-time CCR (G4) below, exactly as
in the flat case: disjoint-spatial-support smearings at a common time
slice produce commuting smeared fields.

\item[(G3$_\mathrm{c}$)] The relevant symmetry group of the limiting net
is $\tilde G^\mathrm{NC}$ (Section~\ref{sec:G6c}), the
Bargmann-central-extension subgroup of $\tilde G$ whose induced
action preserves the limiting Newton--Cartan structure $(h^{ab},
\tau_a, \nabla)$. For a generic spatially inhomogeneous $V(x)$,
$\tilde G^\mathrm{NC}$ contains the time-translation subgroup (the
Killing flow generated by $\hat H$), the central $U(1)$ generated by
$\hat M$, and any rotational isometries of $V(x)$ (e.g., $SO(3)$ for
spherically symmetric $V$ as in Schwarzschild); spatial translations
and Galilean boosts are broken by $V(x)$ and do not belong to
$\tilde G^\mathrm{NC}$. The net covariance $U(g)\F(\Op)U(g)^{-1} =
\F(g\cdot\Op)$ for $g \in \tilde G^\mathrm{NC}$ descends from the
locally covariant Klein--Gordon net's covariance under the isometry
group of $(\M, g_c)$, restricted to those isometries surviving the
limit. The central charge $\hat M$ acts as $m \cdot \1$ on the
one-particle sector (Section~\ref{sec:Bargmann-static}), and the
mass-charge structure $\Hil = \bigoplus_M \Hil_M$ is independent of
the gravitational potential.

\item[(G4)] The equal-time commutator becomes
$[\hat\psi(t,x), \hat\psi^\dagger(t,x')]_\mathrm{eq.\ time} = \delta_{h_0}^3
(x, x') \cdot \1$ where $\delta_{h_0}^3$ is the Dirac delta with
respect to the spatial volume element $\sqrt{|h_0|}\,\mathrm{d}^3 x$ of the
limiting Newton--Cartan structure, and $[\hat\psi, \hat\psi]
= [\hat\psi^\dagger, \hat\psi^\dagger] = 0$ at equal times. The
derivation parallels Section~\ref{sec:flat-CCR} with the flat
$\delta^3(x-x')$ replaced by $\delta_{h_0}^3(x, x')$.

\item[(G5)] The limiting Hamiltonian $\hat H_S$ of
(\ref{eq:H_S-curved}) is essentially self-adjoint on
$C_c^\infty(\M_\mathrm{NC})$ for any $V$ in the Kato class on the
limiting Newton--Cartan structure (in particular, for sub-critical
Coulomb-type singularities such as $V(r) = -GMm/r$ in three
dimensions; see Section~\ref{sec:schw}). Its spectrum is bounded
below; the strict (G5) condition $\sigma(\hat H_S) \subset
[0,\infty)$ is recovered by additively shifting $\hat H_S$ so that
the ground state has energy zero, $\hat H_S \mapsto \hat H_S - E_0
\cdot \1$ (with $E_0 = \inf\sigma(\hat H_S)$), which preserves all
algebraic content (commutators, CCR, Bargmann grading) of the net
since constants commute with all generators.

\item[(G6$_\mathrm{c}$)] The Boulware-like static ground state
$|0\rangle_{\mathrm{B}}$ is $\xi$-invariant by construction at finite~$c$. Its
$c\to\infty$ limit is the ground state of $\hat H_S$, which is
$\xi$-invariant (i.e., Killing-flow-invariant) trivially.
Uniqueness of the Killing-flow-invariant ground state requires
spectral simplicity of the ground state of $\hat H_S$: this holds
for $V$ bounded below with discrete simple ground state (the
gravitational hydrogenic case below) and more generally under
Kato-class conditions on $V$ guaranteeing absence of degeneracies
at the spectral bottom.

\item[(G7$^*$)(a)] The position-independent rescaling
(\ref{eq:rescaling-flat}) gives $\hat\psi(f)$ as
c-number-times-(positive-frequency-part of $\hat\varphi_c$), exactly
as in the flat case; the rescaling commutes with the central
operator $\hat M = \lim_c \hat H_c/c^2$ trivially (since
$\sqrt{2mc^2/\hbar} \mathrm{e}^{imc^2 t/\hbar}$ is a multiplication operator
diagonal in particle number). The Bargmann grading $[\hat M,
\hat\psi(f)] = -m \, \hat\psi(f)$ and the consequent
$U(\theta)\hat\psi(f) U(\theta)^{-1} = \mathrm{e}^{-\mathrm{i}\theta m/\hbar}\hat\psi(f)$
of (G7$^*$)(a) hold with $m_0 = m$, identical to the Minkowski case.

\item[(G7$^*$)(d)] The static-curved Boulware Fock domain (finite
linear combinations of finite-particle states built on
$|0\rangle_{\mathrm{B}}$) is stable under $\hat\psi(g), \hat\psi^\dagger(g),
\hat H_S$ for $g \in C_c^\infty(\M_\mathrm{NC})$, and the time-zero
limits exist on this domain by the same argument as the Minkowski
case, with the spectral form (\ref{eq:Wightman-static-limit}) in
place of the momentum-space integral.
\end{itemize}
 
\subsection{Bargmann grading on the limiting field}\label{sec:Bargmann-static}
 
The Bargmann central charge $\hat M$ on the limiting net is
$\lim_{c\to\infty} \hat H_c/c^2$ where $\hat H_c$ is the generator of
$\xi$ on the KG side. On an $n$-particle state of definite particle
number, $\hat H_c$ has rest-energy contribution $n \cdot mc^2$ to
leading order in $c$ (a global additive contribution, independent of
the spatial location of the particles, since rest mass is a Lorentz
invariant property of each particle). Thus $\hat M = nm \cdot \1$ on
the $n$-particle sector, exactly as in the flat case.
 
The position dependence of energy on the static curved background
(through gravitational potential energy $V(x)$ and through the
gravitational redshift of mode frequencies) shows up in
$\hat H_S$ of (\ref{eq:H_S-curved}), which is the next-to-leading-order
content of the rest-energy expansion: at finite $c$, the relativistic
generator $\hat H_c$ of the Killing flow $\xi$ on the
$n$-particle sector takes the form $\hat H_c = nmc^2 \cdot \1 +
\hat H_S^{(c)}$ with $\hat H_S^{(c)} \to \hat H_S$ as $c \to \infty$.
The rest-energy term $\hat M c^2 = nmc^2 \1$ is absorbed by the
position-independent rescaling phase, leaving $\hat H_S$ as the
generator of time evolution on the limiting net. The
gravitational-potential and redshift effects do
\emph{not} affect the global Bargmann mass charge.
 
This is the structural fact that the position-independent rescaling
makes manifest: rest \emph{mass} is a global label of the
representation, while gravitational potential \emph{energy} is a
position-dependent dynamical content of the limiting Hamiltonian. The
two are cleanly separated by the $c\to\infty$ limit.
 
\subsection{Schwarzschild as worked example}\label{sec:schw}
 
The Schwarzschild metric in static exterior coordinates is
\begin{equation}\label{eq:schw-metric}
\mathrm{d}s^2 = -\Bigl(1 - \frac{2GM}{c^2 r}\Bigr) c^2 \mathrm{d}t^2 +
\Bigl(1 - \frac{2GM}{c^2 r}\Bigr)^{-1} dr^2 + r^2 d\Omega^2,
\end{equation}
valid for $r > r_{\mathrm{s}} := 2GM/c^2$. The lapse is $N(r) = \sqrt{1 - 2GM/(c^2
r)}$, with Post-Newtonian expansion
\begin{equation}
N(r)^2 = 1 - \frac{2GM}{c^2 r} = 1 + \frac{2V(r)}{m c^2}, \qquad
V(r) = -\frac{GMm}{r}.
\end{equation}
The Post-Newtonian condition (\ref{eq:PN-N}) is satisfied uniformly on
$r > R_0$ for any fixed $R_0 > 0$, and the limit $c\to\infty$ taken
with $G, M$ held fixed sends $r_{\mathrm{s}} = 2GM/c^2 \to 0$.
 
The limiting Newton--Cartan structure is flat $\R^3$ (in spherical
coordinates) with absolute time $\tau = \mathrm{d}t$, and the limiting
Schr\"odinger Hamiltonian is the gravitational hydrogenic Hamiltonian
\begin{equation}\label{eq:H_S-schw}
\hat H_S = -\frac{\hbar^2}{2m} \nabla^2 - \frac{GMm}{r}
\end{equation}
on $L^2(\R^3 \setminus \{0\})$. The spectrum is the standard hydrogenic
spectrum with Coulomb coupling $k = GMm$ replacing the electromagnetic
$k = \mathrm{e}^2/(4\pi \epsilon_0)$:
\begin{equation}\label{eq:schw-spectrum}
\sigma(\hat H_S) = \Bigl\{ E_n = -\frac{(GM)^2 m^3}{2\hbar^2 n^2}
\,:\, n \in \N \Bigr\} \cup [0, \infty).
\end{equation}
The Hamiltonian is bounded below by $E_1 = -(GM)^2 m^3/(2\hbar^2)$, the
gravitational analogue of the Rydberg ground state.\footnote{The
dimensionless gravitational coupling $\alpha_\mathrm{grav} := GMm/(\hbar
c) = M m / m_\mathrm{Pl}^2$ (with $m_\mathrm{Pl} = \sqrt{\hbar c/G}$ the
Planck mass) is the gravitational analogue of the fine-structure
constant, and the spectrum (\ref{eq:schw-spectrum}) takes the standard
hydrogenic form $|E_n| = \tfrac{1}{2}\alpha_\mathrm{grav}^2 mc^2/n^2$.
The construction of Theorem~\ref{thm:static} requires
$\alpha_\mathrm{grav} \ll 1$ for the Post-Newtonian expansion
(\ref{eq:PN-N}) to be uniformly valid on the spatial regions of
interest. This condition holds in the sub-Planckian regime
$Mm \ll m_\mathrm{Pl}^2 \approx 4.7\times 10^{-16}\,\mathrm{kg}^2$;
e.g., for two protons gravitationally bound (the canonical
``gravitational fine-structure constant'' $\alpha_\mathrm{grav}^{(p,p)} =
(m_p/m_\mathrm{Pl})^2 \approx 5.9\times 10^{-39}$). For astrophysical
configurations such as an electron in a stellar-mass black-hole
exterior, $\alpha_\mathrm{grav} \gg 1$ and the present construction is
\emph{not} applicable: that regime requires keeping relativistic
corrections beyond the leading Post-Newtonian order, and the limit
$c \to \infty$ taken with $G,M,m$ all fixed is the wrong scaling
limit.} The Coulomb
singularity at $r=0$ is sub-critical for self-adjointness in three
dimensions (the $1/r$-potential is $-\Delta$-bounded with relative
bound zero by Hardy's inequality), and the Friedrichs extension is
essentially self-adjoint with the discrete-plus-continuous spectrum
(\ref{eq:schw-spectrum}).

\subsubsection{Reissner--Nordstr\"om as a sanity check}
\label{sec:RN-sanity}

The Reissner--Nordstr\"om (RN) metric in the same coordinates is
\begin{equation}\label{eq:RN-metric}
\mathrm{d}s^2 = -f_{\mathrm{RN}}(r)\, c^2\, \mathrm{d}t^2 +
f_{\mathrm{RN}}(r)^{-1}\,\mathrm{d}r^2 + r^2\,\mathrm{d}\Omega^2,
\qquad
f_{\mathrm{RN}}(r) = 1 - \frac{2GM}{c^2 r} + \frac{GQ^2}{4\pi\epsilon_0 c^4 r^2},
\end{equation}
with two horizons at $r_\pm = (GM/c^2) \pm \sqrt{(GM/c^2)^2 -
GQ^2/(4\pi\epsilon_0 c^4)}$ when the discriminant is non-negative.
Comparing with the Post-Newtonian condition (\ref{eq:PN-N}),
\begin{equation}
f_{\mathrm{RN}}(r) = 1 + \frac{2 V(r)}{m c^2} + O(c^{-4}),
\qquad V(r) = -\frac{GMm}{r},
\end{equation}
exactly as in Schwarzschild: the charge term is $O(c^{-4})$ and is
therefore invisible to the leading-order Post-Newtonian expansion that
defines the Newton--Cartan limit. Both horizons satisfy
$r_\pm \to 0$ as $c\to\infty$ at fixed $G, M, Q$, and the limiting
Newton--Cartan structure together with the limiting Schr\"odinger
Hamiltonian (\ref{eq:H_S-schw}) coincide with those of the
Schwarzschild case treated above. Theorem~\ref{thm:static}
applies to RN as a vacuum Klein--Gordon problem with the same
gravitational hydrogenic conclusion.

This is the expected behaviour and serves as a sanity check on the
limit: charge enters the relativistic geometry only at higher order
in $c^{-2}$ than the leading Newton--Cartan structure, so a vacuum
scalar field on RN cannot see it in the contraction. The
electromagnetic content of the RN background --- the Coulomb-type
$U(1)$ gauge field $A_0(r) = Q/(4\pi\epsilon_0 r) + O(c^{-2})$ --- is
non-trivial in the limit, but it acts only on charged matter. Recovering
charged-particle dynamics in the gravito-electromagnetic potential of
an RN black hole therefore requires extending the present construction
to a \emph{charged} Klein--Gordon field minimally coupled to the
electromagnetic potential, with a modified (G7$^*$)(a) accommodating
the additional $U(1)$ charge superselection alongside the Bargmann
mass charge. We flag this extension in
Section~\ref{sec:open}.

\subsubsection{Collapse of relativistic modular content}
 
The relativistic Schwarzschild AQFT carries rich modular structure.
The Hartle--Hawking state $\omega_{\mathrm{HH}}$~\cite{HartleHawking1976} is a
quasi-free state on $(\M, g_c)$ whose restriction to the exterior
region is KMS with respect to the Killing flow $\xi$ at the Hawking
temperature
\begin{equation}\label{eq:THH}
T_{\mathrm{HH}} = \frac{\hbar c^3}{8\pi G M k_{\mathrm{B}}}.
\end{equation}
The Bisognano--Wightman/Sewell theorem identifies the modular
automorphism group of $\omega_{\mathrm{HH}}$ on the exterior algebra with the
$\xi$-Killing flow~\cite{BisognanoWightman1975,BisognanoWightman1976,Sewell1982}. The Boulware state $\omega_{\mathrm{B}}$
is the static-observer-adapted vacuum, not KMS with respect to $\xi$,
and singular on the bifurcate Killing horizon~\cite{KayWald1991}. The
Kay--Wald uniqueness theorem characterises $\omega_{\mathrm{HH}}$ as the unique
quasi-free $\xi$-invariant Hadamard state in a neighbourhood of the
bifurcation surface.
 
In the $c\to\infty$ limit:
\begin{itemize}
\item[(i)] The Killing horizon at $r_{\mathrm{s}} = 2GM/c^2$ shrinks to the
point $r = 0$.
\item[(ii)] The Hawking temperature (\ref{eq:THH}) diverges:
$T_{\mathrm{HH}} \propto c^3 \to \infty$. There is no $c\to\infty$ limit of
$\omega_{\mathrm{HH}}$ as a state on the limiting Newton--Cartan Schwarzschild
AQFT: the limiting net would have to carry a KMS state at infinite
temperature, which is excluded by the positive-energy spectrum (G5)
(after the additive shift of \S~\ref{sec:curved-axioms} placing the
ground-state energy at zero) together with the unique Killing-flow-
invariant ground state required by (G6$_\mathrm{c}$).
\item[(iii)] The Boulware state $\omega_{\mathrm{B}}$ has a well-defined
limit on any compact region $r > R_0 > 0$: it converges to the
Schr\"odinger ground state of (\ref{eq:H_S-schw}), the gravitational
hydrogenic ground state with energy $E_1$ from
(\ref{eq:schw-spectrum}). The relativistic horizon-singularity of
$\omega_{\mathrm{B}}$ at $r = r_{\mathrm{s}}$ has no direct counterpart in the
limit: the horizon location $r_{\mathrm{s}}(c) = 2GM/c^2$ leaves the
exterior region in the limit, and the limiting state is regular
everywhere on $\R^3 \setminus \{0\}$. The unboundedness of $V(r) =
-GMm/r$ at $r = 0$ is a separate, Coulomb-type singularity inherent
to the limiting Hamiltonian itself, which is sub-critical for
self-adjointness in three dimensions and supports the standard
hydrogenic spectrum.
\item[(iv)] The Kay--Wald uniqueness characterisation has no
Galilean-side counterpart. On the limiting Newton--Cartan Schwarzschild
net, the unique Killing-flow-invariant ground state required by
(G6$_\mathrm{c}$) does exist and is the gravitational hydrogenic ground
state of (\ref{eq:H_S-schw}) (uniqueness from spectral simplicity of
the $1s$-like ground state of the Coulomb-type Hamiltonian
$\hat H_S^\mathrm{Schw}$). What fails is not existence or uniqueness of
this ground state, but the modular content it carries:
Theorem~\ref{thm:curved-obstruction} shows that this state is not
separating for any non-trivial local algebra $\F(\Op)$, so it carries
no modular flow on local algebras. The relativistic Hadamard-vs-non-
Hadamard distinction (Hartle--Hawking vs Boulware), and the
correspondingly different modular content these states carry on the
exterior algebra, has no analogue on the Galilean side.
\end{itemize}

The structural picture: black-hole thermodynamics is intrinsically
relativistic. It is modular content carried by the Killing flow on a
bifurcate Killing horizon, and the Killing horizon's existence
requires finite~$c$. As $c\to\infty$ the horizon shrinks to a point,
the thermal modular structure has nowhere to live, and the
Hartle--Hawking state's thermal parameter (the Hawking temperature)
diverges. This is one face of
Theorem~\ref{thm:curved-obstruction}'s algebraic
obstruction: the Galilean limit destroys not only modular flow on
generic local algebras but specifically the thermal modular flow that
black-hole thermodynamics relies on.
 
\section{Discussion}\label{sec:discussion}
 
\subsection{What survives, and what does not}
 
Theorems~\ref{thm:minkowski} and~\ref{thm:static} produce, from the
free Klein--Gordon field at finite $c$ on Minkowski and on static
globally hyperbolic spacetimes admitting a Post-Newtonian expansion,
a Galilean Haag--Kastler net satisfying the axiom set (G1)--(G6) of
Ref.~\cite{Pachon2026b} in the flat case, or its curved-space
modification with (G3) and (G6) replaced by (G3$_\mathrm{c}$) and
(G6$_\mathrm{c}$) in the static curved case, augmented by (G7$^*$)(a) and
(G7$^*$)(d). The Bargmann central charge equals the Klein--Gordon
mass $m$, and the limiting Hamiltonian is the non-relativistic
Schr\"odinger operator on the Newton--Cartan structure with the
Newtonian potential $V(x)$ entering as a potential term. By
Theorems~\ref{thm:flat-obstruction} and~\ref{thm:curved-obstruction}, the
limiting net carries no Tomita--Takesaki modular flow on local
algebras.

The relativistic-side content that does \emph{not} survive ---
Reeh--Schlieder, Bisognano--Wightman modular flow on wedge algebras,
KMS states with respect to Killing flows, the Hartle--Hawking
thermal state, the Hawking temperature, the Unruh effect on Rindler
wedges --- collapses for a single algebraic reason: the
Bargmann-graded canonical fields force the Galilean vacuum to be
non-separating for every non-trivial local algebra. The collapse is
exact, not asymptotic: in the contraction
(Section~\ref{sec:contraction}), the $1/c^2$ in the Poincar\'e
commutator $[\hat K^L_i, \hat P_j]$ is exactly cancelled by the
$c^2$ scaling of the rest energy $mc^2$, leaving the Bargmann
$c$-number on the contracted algebra. The Galilean side that
emerges is structurally different from the relativistic side at
the algebraic level.

\subsection{The Unruh effect in the Newton--Cartan limit}
\label{sec:unruh}

The Schwarzschild discussion in Section~\ref{sec:schw} concerns
modular content tied to a Killing horizon. The flat-space analogue
in the relativistic theory --- the Unruh effect~\cite{Unruh1976}
--- offers a complementary and arguably sharper illustration of
how modular flow collapses in the Newton--Cartan limit, since it
lives in the flat Minkowski setting of
Theorem~\ref{thm:minkowski} where no gravitational potential is
present.

\paragraph{Relativistic statement.}
On Minkowski spacetime, a uniformly accelerated observer with proper
acceleration $a$ along a Rindler trajectory perceives the
Poincar\'e-invariant vacuum $\omega_0$ of the relativistic
Klein--Gordon field as a thermal state at the Unruh temperature
\begin{equation}\label{eq:T_U}
T_{\mathrm{U}} = \frac{\hbar a}{2\pi c\, k_{\mathrm{B}}}.
\end{equation}
Structurally, the right-Rindler wedge $\mathcal{W}_{\mathrm{R}} = \{x^1
> |x^0|\}$ is a causal region whose stabiliser within the proper
orthochronous Poincar\'e group is the one-parameter subgroup of
boosts in the $x^1$-direction. The Bisognano--Wightman
theorem~\cite{BisognanoWightman1975,BisognanoWightman1976}
identifies the Tomita--Takesaki modular automorphism group of the
local algebra $\F_c(\mathcal{W}_{\mathrm{R}})$ with respect to
$\omega_0$ as the boost subgroup $U_c(\Lambda_{x^1}(2\pi s))$, with
modular parameter $s = a\tau/c$ (where $\tau$ is the proper time
along the accelerated trajectory). The KMS condition with respect
to this modular flow at modular temperature $T = 1/(2\pi)$ is the
Unruh effect.

\paragraph{Newton--Cartan limit.}
We trace what each ingredient of the Unruh setup becomes as
$c\to\infty$.

\begin{enumerate}
\item[(i)] \emph{The Unruh temperature vanishes.} For fixed proper
acceleration $a$, the Unruh temperature
(\ref{eq:T_U}) scales as $T_{\mathrm{U}} \propto 1/c \to 0$.
This is opposite to the Hawking-temperature scaling $T_{\mathrm{HH}} \propto
c^3 \to \infty$ in (\ref{eq:THH}): the Hawking temperature
diverges because the surface gravity $\kappa = c^4/(4GM)$ contains
explicit factors of~$c$, while the Unruh temperature collapses
because the only $c$-factor in (\ref{eq:T_U}) is the explicit
denominator. Both are intrinsically relativistic, but they
respond to the contraction in opposite directions.

\item[(ii)] \emph{The Lorentz boost contracts to the Galilean
boost.} Under the In\"on\"u--Wigner contraction
(Section~\ref{sec:contraction}), the Lorentz boost generator
$\hat K^L_i$ contracts to the Galilean boost generator $\hat
K^G_i$ via $\hat K^L_i \mapsto \hat K^G_i$, with the
boost-rapidity parameter $\eta = a\tau/c$ scaling so that the
Lorentz boost subgroup $\{\mathrm{e}^{\mathrm{i} \hat K^L_1 a\tau/c}\}$ degenerates,
in the limit, to the Galilean boost subgroup $\{\mathrm{e}^{\mathrm{i} \hat
K^G_1 v}\}$ with Galilean boost velocity $v = a\tau$
($c$-independent; non-relativistic kinematics).

\item[(iii)] \emph{The Galilean boost acts trivially on the
Fock vacuum.} The Galilean boost generator $\hat K^G$ on the
Bargmann central extension satisfies $[\hat K^G_i, \hat P_j] =
\mathrm{i}\hbar\hat M\,\delta_{ij}$, so finite Galilean boosts act on
plane-wave states by shifting momentum by $\hat M v$. On the
Galilean Fock vacuum $|0\rangle$ --- the no-particle vector,
where $\hat M = 0$ in the sense that the zero-particle sector
carries trivial Bargmann charge --- the boost-induced momentum
shift is trivial, and one has $U(\hat K^G \cdot v)|0\rangle =
|0\rangle$ for all $v \in \R^3$. (More carefully: the boost on
the multi-particle Fock space is non-trivial, but its restriction
to the vacuum vector is the identity.) Consequently the
boost-modular flow that gave the Unruh effect on the relativistic
side has no non-trivial counterpart on the Galilean vacuum --- and
even before reaching this trivialisation,
Theorem~\ref{thm:flat-obstruction} establishes that no Tomita--Takesaki
modular group is defined for $|0\rangle$ on $\F^{(\infty)}(\Op)$
at all, because $|0\rangle$ is not separating.

\item[(iv)] \emph{The Rindler wedge does not survive the
contraction as a causally distinguished region.} The Rindler wedge
$\mathcal{W}_{\mathrm{R}}$ is defined by Lorentzian causal structure ---
boost-invariant, bounded by light rays $x^0 = \pm x^1$ that depend
on the Lorentzian metric. In the limit $c \to \infty$, the
Lorentzian causal structure degenerates to the Newton--Cartan
absolute-time foliation: events at different times are causally
related only through the absolute-time order, and there is no
boost-invariant wedge region. The notion of ``the algebra observed
by an accelerated trajectory'' becomes the algebra associated to
a region defined by absolute time alone, not by a Lorentzian
horizon.
\end{enumerate}

\paragraph{Galilean accelerated observer.}
Item~(iv) deserves elaboration. A Newtonian-mechanical accelerated
observer along a trajectory $x^i(t) = \tfrac{1}{2}a^i t^2$ sees the
Galilean field algebra $\F^{(\infty)}(\Op)$ over any spacetime
region $\Op$ traversed by the trajectory. The state on this algebra
is the Galilean vacuum $|0\rangle$, which is a Fock vacuum; the
two-point function in the trajectory's reference frame is the free
Schr\"odinger Wightman function (\ref{eq:Schrodinger-Wightman})
expressed in the accelerated coordinates. There is no temperature.
The Galilean vacuum is the same Fock vacuum to every observer, in
sharp contrast to the relativistic Minkowski vacuum which has
non-trivial KMS structure with respect to boosts. This collapse is
not a quantitative suppression of the Unruh temperature with
$T_{\mathrm{U}} \to 0$ as $c \to \infty$, but a structural
disappearance: the modular flow that gave rise to a thermal
interpretation in the relativistic theory is undefined on the
Galilean side, because the vacuum is not separating for any local
algebra.

\paragraph{Structural conclusion.}
The Unruh effect and the Hawking effect are both modular-flow
content of relativistic AQFT --- one on Rindler wedges in
Minkowski, the other on wedge-like regions adapted to bifurcate
Killing horizons in Schwarzschild. Both collapse in the
Newton--Cartan limit, but for distinct
\emph{geometric} reasons (the Rindler wedge requires Lorentzian
causal structure; the bifurcate Killing horizon requires a
finite-$c$ horizon at $r_{\mathrm{s}} = 2GM/c^2$) and with distinct
\emph{thermodynamic} signatures ($T_{\mathrm{U}} \to 0$ vs.\
$T_{\mathrm{HH}} \to \infty$). The \emph{algebraic} reason for both
collapses is the same: by Theorem~\ref{thm:flat-obstruction} (flat case)
and Theorem~\ref{thm:curved-obstruction} (static curved case), the
Galilean vacuum is not separating for local algebras of the
limiting net, so no Tomita--Takesaki modular flow is defined ---
neither the boost-modular flow of Bisognano--Wightman nor the
Killing-flow-modular structure of Bisognano--Wightman/Sewell. The
two thermodynamic effects vanish through the same algebraic
obstruction.

\subsection{The position of $G$}\label{sec:G-position}

Newton's gravitational constant $G$ enters the present framework at a
specific level: through the limiting Schr\"odinger Hamiltonian, in
the form of the Newtonian gravitational potential
$V(x) = (N(x) - 1) m c^2|_{c\to\infty}$. For Schwarzschild specifically
$V(r) = -GMm/r$ and the limiting bound-state spectrum is the
gravitational hydrogenic spectrum (\ref{eq:schw-spectrum}), with
binding energies proportional to $G^2$.

$G$ does \emph{not} enter the algebraic-modular structure that
Theorem~\ref{thm:flat-obstruction} and Theorem~\ref{thm:curved-obstruction}
act on. The Bargmann central extension structure ((G3)/(G3$_\mathrm{c}$)) and the centrality of $\hat M$, the canonical CCR (G4), the
Bargmann grading of canonical fields ((G7$^*$)(a)), the spectral
condition (G5), and the Killing-flow-invariant ground state
(G6$_\mathrm{c}$) --- all are independent of whether the limiting
Newton--Cartan background is flat or carries a non-trivial $V(x)$.
The modular structure collapse on local algebras is forced by the
Bargmann-superselection content of the Galilean side and is
unaffected by the gravitational coupling.
 
This positioning is consistent with the background-metric scope of the
present paper. The metric $g_c$ on the relativistic side is fixed
external data, and the limiting Newton--Cartan structure is
correspondingly fixed external data on the Galilean side. There is no
Einstein equation to fix the proportionality between geometry and
matter, and no source for the metric on either side of the limit.
 
\subsection{Forward path: dynamical-metric extension}
 
A natural follow-up to the present paper is to lift the
background-metric setting to a dynamical-metric one, and to ask
whether algebraic and modular structure on a dynamical-metric AQFT
combined with a self-consistency or backreaction requirement forces
the metric to satisfy field equations of the form $G_{\mu\nu} =
8\pi G T_{\mu\nu}$ with~$G$ entering as the empirical proportionality
constant. The present paper supplies the contrapositive direction
needed for any such forcing argument: namely, the absence of modular
structure on local algebras in the non-Lorentzian (Galilean) limit,
established via the construction of
Sections~\ref{sec:minkowski}--\ref{sec:curved}.
 
The technical demands of the dynamical-metric setting --- locally
covariant QFT in the Brunetti--Fredenhagen--Verch sense at full
generality, semiclassical-gravity backreaction, the formulation of a
self-consistency requirement that closes the forcing argument, and
Verch's Type-III$_1$ universality result for local algebras of
quasi-free states on globally hyperbolic
spacetimes~\cite{Verch1997} --- are substantial. They are treated in
the companion papers~\cite{Pachon2026d,Pachon2026e}, which carry out
the algebraic forcing of the semiclassical Einstein equations and
establish a corresponding equivalence theorem. The role of the
present paper is to extend the flat-space Galilean/relativistic
algebraic divider of Ref.~\onlinecite{Pachon2026b} to the curved-background
setting and to identify the structural mechanism (rest-energy-phase
$\to$ Bargmann-grading) by which the divider is preserved under the
Newton--Cartan limit on a class of curved spacetimes broad enough to
include Schwarzschild.
 
\section{Closing technical points}
\label{sec:open}

This section addresses five technical points that complement the main
results: the operator-topology level at which the limit is established
(\S~\ref{sec:open-algebra}), the scope of
Theorems~\ref{thm:flat-obstruction}--\ref{thm:curved-obstruction} for interacting
Galilean nets (\S~\ref{sec:open-interacting}), the corresponding
extension to charged matter on charged-black-hole backgrounds
(\S~\ref{sec:open-charged}), the non-commutativity of the
$c\to\infty$ and near-horizon limits in the Schwarzschild case
(\S~\ref{sec:open-horizon}), and the natural Galilean analogue of
the relativistic Hadamard condition on the limiting Wightman function
(\S~\ref{sec:open-hadamard}). The first and the fifth admit
substantive technical content: a strong-operator-topology
convergence proof (Lemma~\ref{lem:strong-conv}) and a formal
Definition~\ref{def:gal-hadamard} verified on the limit state
(Proposition~\ref{prop:boulware-hadamard}). The remaining three
items frame natural extension directions.

\subsection{Algebra-level limit and operator topology}\label{sec:open-algebra}
For the free-field cases of
Theorems~\ref{thm:minkowski}--\ref{thm:static}, the limit of the
field net is well-defined at the level of the GNS construction in
the following precise sense. The Wightman functions $W_n^{(c)}$ of
$\hat\psi$ at finite~$c$ converge pointwise on
$(\R^3 \times \R)^n$ to the limiting Wightman functions $W_n^{(\infty)}$
(this is the content of (\ref{eq:flat-Wightman-limit}) in the
Minkowski case, and (\ref{eq:Wightman-static-limit}) in the static
curved case, extended to $n$-point functions by Wick's theorem on
the Fock representation). The limiting two-point function is of
positive type, and the higher Wightman functions inherit the
quasi-free Wick-product structure from finite~$c$. By the
Wightman--GNS reconstruction theorem~\cite{StreaterWightman1964},
$\{W_n^{(\infty)}\}$ defines a Wightman field
$\hat\psi^{(\infty)}$ on a Hilbert space $\Hil^{(\infty)}$, with the
distinguished cyclic vector identified with $|0\rangle$. The local
algebras $\F^{(\infty)}(\Op)$ are then defined as the von~Neumann
algebras generated by smearings of $\hat\psi^{(\infty)}$ over $\Op$.
This is the convergence structure used implicitly throughout the
paper.

We now make the operator-topology statement precise. The smeared
positive-frequency field at finite~$c$ is
\begin{equation}\label{eq:smeared-flat-c}
\hat\psi^{(c)}(f) = \sqrt{\frac{2mc^2}{\hbar}}\,
\int \mathrm{d} t\,\mathrm{d}^3 x\, f(t,x)\,
\mathrm{e}^{\mathrm{i} m c^2 t/\hbar}\, \hat\varphi_c^+(t,x),
\end{equation}
with $\hat\varphi_c^+$ the positive-frequency part of the
Klein--Gordon field. In momentum space, using the mode expansion
(\ref{eq:KG-mode-flat}) and writing $\tilde f(\omega, k)$ for the
spacetime Fourier transform of $f$,
\begin{equation}\label{eq:smeared-flat-mom}
\hat\psi^{(c)}(f) = \int \frac{\mathrm{d}^3 k}{(2\pi)^3}\,
\sqrt{\frac{mc^2}{\hbar\omega_c(k)}}\,
\tilde f\bigl(\omega_c^+(k) - mc^2/\hbar,\, k\bigr)\,
a(k),
\end{equation}
where the rest-energy phase has shifted the energy argument of
$\tilde f$. As $c \to \infty$, $\sqrt{mc^2/(\hbar\omega_c(k))} \to
1$ uniformly on compact $k$-sets, and $\omega_c^+(k) - mc^2/\hbar
\to \hbar k^2/(2m)$, so the integrand of
(\ref{eq:smeared-flat-mom}) converges pointwise to that of
$\hat\psi^{(\infty)}(f)$.

\begin{lemma}[Strong-operator convergence on the Fock domain]
\label{lem:strong-conv}
Identify the positive-frequency Klein--Gordon Fock space at finite~$c$
with the Schr\"odinger Fock space via the canonical unitary that maps
the one-particle KG creation operator $a^\dagger(k)$ to the
Schr\"odinger one-particle creation operator (this identification
is realised explicitly by the rescaling
(\ref{eq:rescaling-flat}) at finite~$c$, and by the limit-form
$\hat\psi^{(\infty)}$ at $c = \infty$). Then for any $f \in
\D(\R \times \R^3)$ with Fourier transform $\tilde f$ of compact
support and Schwartz-class decay,
\begin{equation}\label{eq:strong-conv}
\hat\psi^{(c)}(f) \to \hat\psi^{(\infty)}(f),
\quad \hat\psi^{(c)\dagger}(f) \to \hat\psi^{(\infty)\dagger}(f)
\quad \text{as } c \to \infty,
\end{equation}
in the strong operator topology on the algebraic Fock domain
$\Dom_0 = \mathrm{span}\bigl\{|0\rangle,\, a^\dagger(g_1)\cdots
a^\dagger(g_n)|0\rangle\,:\, g_i \in \Sch(\R^3),\, n \in
\Z_{\geq 0}\bigr\}$ of finite linear combinations of finite-particle
states with Schwartz-class one-particle wave functions.
\end{lemma}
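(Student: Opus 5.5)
Under the stated identification the Fock space $\Hil$ is fixed and only the one-particle smearing functions depend on $c$. I would therefore reduce the statement to $L^2$ convergence of those functions and then pass to finite-particle vectors by the standard Fock bounds. By (\ref{eq:smeared-flat-mom}), $\hat\psi^{(c)}(f) = a(F_c)$ with
\begin{equation*}
F_c(k) := \sqrt{\frac{mc^2}{\hbar\,\omega_c(k)}}\;\overline{\tilde f\bigl(\omega_c^+(k) - mc^2/\hbar,\,k\bigr)}
\end{equation*}
(conjugation convention chosen so that $a(F)$ is antilinear in $F$). Likewise $\hat\psi^{(\infty)}(f) = a(F_\infty)$ with $F_\infty(k) = \overline{\tilde f(\hbar k^2/(2m),k)}$, and $\hat\psi^{(c)\dagger}(f) = a^\dagger(F_c)$. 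The first step is to show $\|F_c - F_\infty\|_{L^2(\R^3, \mathrm{d}^3k/(2\pi)^3)} \to 0$.

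For pointwise convergence I would use the explicit formula $\omega_c^+(k) - mc^2/\hbar = \hbar k^2/\bigl(m(1+\sqrt{1+\hbar^2k^2/(m^2c^2)})\bigr)$. It increases to $\hbar k^2/(2m)$ as $c\to\infty$, and it is bounded by $\hbar k^2/(2m)$ for every $c$. The prefactor $\sqrt{mc^2/(\hbar\omega_c)}$ lies in $(0,1]$ and tends to $1$. Together with continuity of $\tilde f$, these facts give $F_c \to F_\infty$ pointwise.

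For domination, $\tilde f$ has compact support by hypothesis, say in $\{|\omega|\le\Omega,\,|k|\le K\}$. Hence every $F_c$ vanishes outside the ball $|k|\le K$ and satisfies $|F_c|\le \sup|\tilde f|$. Dominated convergence on that ball then yields $L^2$ convergence. If one prefers to use only the Schwartz-decay part of the hypothesis, the bound $|F_c(k)| \le C_N(1+|k|)^{-N}$, uniform in $c$, serves the same purpose.

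The second step passes from one-particle functions to finite-particle vectors. The standard estimates on the $n$-particle sector are
\begin{equation*}
\|a(F)\Phi_n\| \le \sqrt{n}\,\|F\|\,\|\Phi_n\|, \qquad \|a^\dagger(F)\Phi_n\| \le \sqrt{n+1}\,\|F\|\,\|\Phi_n\|.
\end{equation*}
Applied to $F = F_c - F_\infty$, linearity gives $\|(\hat\psi^{(c)}(f)-\hat\psi^{(\infty)}(f))\Phi\|\to0$ for every $\Phi\in\Dom_0$, since such a $\Phi$ has finitely many components with bounded particle number. The same holds for the adjoints.

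I expect the main obstacle to be the bookkeeping of the identification, not the analysis. The normalisation in (\ref{eq:KG-mode-flat}) with $1/\sqrt{2\omega_c}$ must be checked against the stated CCR $[a(k),a^\dagger(k')]=(2\pi)^3\delta^3$, so that $\hat\psi^{(c)}(f)$ really is $a(F_c)$ with exactly the prefactor above; in particular, factors of $\sqrt2$ and $\hbar$ must cancel with the $\sqrt{2mc^2/\hbar}$ of (\ref{eq:rescaling-flat}). I would verify this by recomputing the two-point function $\langle0|\hat\psi^{(c)}(f)\hat\psi^{(c)\dagger}(f)|0\rangle = \|F_c\|^2$ and comparing it with the smeared version of the Wightman computation leading to (\ref{eq:flat-Wightman-limit}). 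That comparison fixes all constants consistently, and the convergence argument above then goes through unchanged.
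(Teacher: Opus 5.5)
Your proposal is correct. It takes a somewhat different route from the paper in how one-particle convergence is lifted to Fock space.

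\textbf{The paper's route.} The paper acts on product vectors $\xi_n = a^\dagger(g_1)\cdots a^\dagger(g_n)|0\rangle$ and evaluates $\hat\psi^{(c)}(f)\xi_n$ by explicit Wick contraction. This reduces everything to the scalar coefficients $\langle F^{(c)}-F^{(\infty)}, g_j\rangle$. These are controlled by a pointwise bound $|F^{(c)}(k)-F^{(\infty)}(k)|\le C(k)c^{-2}$, which comes from Lipschitz continuity of $\tilde f$ together with the $O(c^{-2})$ expansions of the prefactor and the energy argument, integrated against $|g_j|$. That yields an explicit $O(c^{-2})$ rate, but it depends on the $g_j$ being Schwartz class. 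The paper also dismisses the adjoint case as ``identical with creation/annihilation exchanged'', which it is not. The vector $a^\dagger(F^{(c)}-F^{(\infty)})\xi_n$ involves no scalar contraction, and its norm is controlled by $\|F^{(c)}-F^{(\infty)}\|_{L^2}$.

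\textbf{Your route.} You prove exactly this $L^2$ convergence once and then apply the standard bounds $\|a(F)\Phi_n\|\le\sqrt n\,\|F\|\,\|\Phi_n\|$ and $\|a^\dagger(F)\Phi_n\|\le\sqrt{n+1}\,\|F\|\,\|\Phi_n\|$. This buys three things. It treats $\hat\psi$ and $\hat\psi^\dagger$ uniformly. It works on all finite-particle vectors, not only those built from Schwartz $g_i$. It even gives the relative-bound statement $\|(\hat\psi^{(c)}(f)-\hat\psi^{(\infty)}(f))(\hat N+1)^{-1/2}\|\le\|F_c-F_\infty\|\to0$, and likewise for the adjoints, with $\hat N$ the number operator. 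A rate can be recovered by inserting the paper's pointwise bound into the $L^2$ norm.

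\textbf{Normalisation check.} The bookkeeping you flag as the main obstacle is immediate: $\sqrt{2mc^2/\hbar}\,(2\omega_c)^{-1/2}=\sqrt{mc^2/(\hbar\omega_c)}$, which is exactly the prefactor in (\ref{eq:smeared-flat-mom}).

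\textbf{The Schwartz-decay bound should be your main line.} By Paley--Wiener, a nonzero $f\in\D(\R\times\R^3)$ cannot have compactly supported $\tilde f$. Read literally, the hypothesis therefore admits only $f=0$. Both the paper's proof and your primary domination step rely on that compact support. Your fallback bound $|F_c(k)|\le C_N(1+|k|)^{-N}$ is uniform in $c$, because the Schwartz bound on $\tilde f$ is uniform in its energy argument and the prefactor is at most $1$. That is the argument which actually covers genuine test functions $f\in C_c^\infty$, so you should promote it from fallback to the main argument.
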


\begin{proof}
The smeared positive-frequency field
$\hat\psi^{(c)}(f) = \int (\mathrm{d}^3 k/(2\pi)^3)\, F^{(c)}(k)\, a(k)$
with
\begin{equation}\label{eq:F-c}
F^{(c)}(k) := \sqrt{\frac{mc^2}{\hbar\omega_c(k)}}\,
\tilde f\bigl(\omega_c^+(k) - mc^2/\hbar,\, k\bigr)
\end{equation}
acts on the $n$-particle state $\xi_n = a^\dagger(g_1)\cdots
a^\dagger(g_n)|0\rangle$ by Wick contraction:
\begin{equation}
\hat\psi^{(c)}(f)\, \xi_n =
\sum_{j=1}^n \langle F^{(c)}, g_j\rangle \cdot
a^\dagger(g_1) \cdots \widehat{a^\dagger(g_j)} \cdots a^\dagger(g_n)|0\rangle,
\end{equation}
where $\langle F^{(c)}, g_j\rangle =
\int (\mathrm{d}^3 k/(2\pi)^3)\, \overline{F^{(c)}(k)}\, g_j(k)$
and $\widehat{(\cdot)}$ denotes omission. The same formula holds
at $c = \infty$ with $F^{(c)}$ replaced by $F^{(\infty)}(k) =
\tilde f(\hbar k^2/(2m), k)$. Hence
\begin{equation}\label{eq:Wick-difference}
\bigl(\hat\psi^{(c)}(f) - \hat\psi^{(\infty)}(f)\bigr)\, \xi_n =
\sum_{j=1}^n \langle F^{(c)} - F^{(\infty)}, g_j\rangle \cdot
a^\dagger(g_1) \cdots \widehat{a^\dagger(g_j)} \cdots a^\dagger(g_n)|0\rangle.
\end{equation}
Each coefficient $\langle F^{(c)} - F^{(\infty)}, g_j\rangle$ is
an integral over $k$ supported on $\mathrm{supp}(\tilde f)$
(compact). On this support, $|F^{(c)}(k) - F^{(\infty)}(k)| \to 0$
pointwise as $c \to \infty$: the prefactor satisfies
$\sqrt{mc^2/(\hbar\omega_c(k))} = (1 + \hbar^2 k^2/(m^2 c^2))^{-1/4}
= 1 + O(c^{-2})$, the energy-argument shift is $\omega_c^+(k) -
mc^2/\hbar - \hbar k^2/(2m) = -\hbar^3 k^4/(8 m^3 c^2) + O(c^{-4})$
from the rest-energy expansion of $\omega_c^+(k) = (mc^2/\hbar)
\sqrt{1 + \hbar^2 k^2/(m^2 c^2)}$, and Lipschitz continuity of
$\tilde f$ on its compact support combine to give $|F^{(c)}(k) -
F^{(\infty)}(k)| \leq C(k) c^{-2}$ uniformly on $\mathrm{supp}(\tilde
f)$, with $C(k) = O(|\tilde f|_{\mathrm{Lip}} \cdot k^4)$. Schwartz-class
decay of $g_j$ gives $|g_j(k)| \in L^1(\mathrm{supp}(\tilde f))$, and
dominated convergence yields $|\langle F^{(c)} - F^{(\infty)},
g_j\rangle| = O(c^{-2})$. Each term on the right-hand side of
(\ref{eq:Wick-difference}) thus converges to zero in $\Hil$, and
hence the full sum converges to zero in $\Hil$.
Linearity extends (\ref{eq:strong-conv}) to all of $\Dom_0$.
The conjugate-field statement is identical with creation/annihilation
operators exchanged in (\ref{eq:Wick-difference}).
\end{proof}

Lemma~\ref{lem:strong-conv} provides the operator-topology
convergence underlying the Wightman-function convergence used to
verify (G1)--(G7$^*$) in Sections~\ref{sec:minkowski-verification}
and~\ref{sec:curved-axioms}. The unbounded-operator nature of
$\hat\psi(f)$ prevents extension to a uniform-topology statement
(strong convergence on a fixed dense domain is the natural
operator-topology statement for unbounded fields), but the
statement in Lemma~\ref{lem:strong-conv} is sufficient for all
axiomatic content of Theorems~\ref{thm:minkowski}--\ref{thm:static}
since the algebraic Fock domain is GNS-cyclic for the limiting
local algebras.

The static-curved case (Theorem~\ref{thm:static}) admits the
analogous statement on the Boulware-Fock domain built on
$|0\rangle_\mathrm{B}$, with mode functions $u_{\omega_S}$ replacing
the plane waves $\mathrm{e}^{\mathrm{i} k \cdot x}$ of the Minkowski
construction; the dominated-convergence step holds on regions where
the Post-Newtonian expansion is uniformly valid.

\subsection{Interacting Galilean nets and the scope of
Theorem~\ref{thm:flat-obstruction}}\label{sec:open-interacting}
Theorems~\ref{thm:flat-obstruction} and~\ref{thm:curved-obstruction} are
axiomatic: they assume only (G1)--(G6) (or the curved-case
replacement) and (G7$^*$)(a),(d), and conclude the absence of
modular flow on local algebras. The conclusions therefore apply
\emph{directly} to any Galilean Haag--Kastler net satisfying the
axioms, regardless of whether the net arose as a $c\to\infty$ limit
of a relativistic field theory. In particular, the interacting
Galilean QFT models in the literature listed
above~\cite{LevyLeblond1967,Schrader1968,Hepp1969,Eckmann1970,LampartSchmidtTeufelTumulka2018}
verify (G1)--(G6) and (G7$^*$)(a),(d) by construction (the canonical
fields are Bargmann-graded by the model's mass-superselection
structure, and the Schr\"odinger Hamiltonians in those models are
positive on a translation-invariant Fock vacuum); consequently
Theorem~\ref{thm:flat-obstruction} applies to them \emph{without} any
$c\to\infty$ limit construction. This is the sense in which the
obstruction is a structural feature of Galilean QFT, independent
of whether the net is built from a relativistic parent.

What requires further work is the \emph{construction} of an
interacting Newton--Cartan limit in the spirit of
Theorems~\ref{thm:minkowski}--\ref{thm:static}, starting from an
interacting relativistic QFT on $(\M, g_c)$ and recovering an
interacting Galilean net on $\M_\mathrm{NC}$. The free-field
construction here uses the explicit mode expansion to perform the
$c\to\infty$ limit of Wightman functions term by term; an
interacting-case construction would need the same control on
Wightman functions of the interacting KG theory, which is
substantially more difficult and only available perturbatively or
in special models. Once the limit is constructed, the
Theorem~\ref{thm:curved-obstruction} conclusion follows
automatically (since the axioms make no reference to free-field
content).

\subsection{Charged matter on charged black-hole backgrounds}\label{sec:open-charged}
The Reissner--Nordstr\"om sanity check
(Section~\ref{sec:RN-sanity}) shows that the leading-order
Post-Newtonian limit is blind to the electromagnetic content of a
charged background: charge enters the metric at $O(c^{-4})$, and a
\emph{neutral} scalar field on RN reproduces the Schwarzschild
Newton--Cartan limit. To see electromagnetic structure in the
limit, one must minimally couple the Klein--Gordon scalar to the
$U(1)$ gauge potential of the background, $\hat\varphi_c \to
(\partial_\mu - \mathrm{i} q A_\mu/\hbar)\hat\varphi_c$. The corresponding
extension of the present construction would: (i) modify
(G7$^*$)(a) to encode an additional $U(1)$ charge superselection
alongside the Bargmann mass charge; (ii) verify that the
obstruction theorem extends to nets with the doubled superselection
structure (the Bargmann-eigenvector argument should go through
verbatim with the additional charge label); (iii) produce in the
$c\to\infty$ limit a charged Schr\"odinger field on the
Newton--Cartan background with limiting Hamiltonian $\hat H_S =
(2m)^{-1}(\hat p - q\hat A)^2 + V(x) + qA_0(x)$. The recovery of
the standard hydrogenic spectrum from the Coulomb-type $A_0(r) =
Q/(4\pi\epsilon_0 r)$ in the RN case would specialise this
construction. We treat this as a separate development.

\subsection{Horizon-limit non-commutativity}\label{sec:open-horizon}
The Post-Newtonian condition (\ref{eq:PN-N}) requires $N(x)$ to be
bounded away from zero on the regions of validity, which in the
Schwarzschild case excludes a neighbourhood of the Killing horizon
at $r = r_{\mathrm{s}} = 2GM/c^2$. We make precise the
non-commutativity of the two limits, including an explicit
estimate of the breakdown of (\ref{eq:PN-N}) near the horizon.

\emph{Quantitative breakdown of (\ref{eq:PN-N}) near the horizon.}
For Schwarzschild, $N(r)^2 = 1 - r_{\mathrm{s}}/r$ on the exterior region
$r > r_{\mathrm{s}}$. Comparing with (\ref{eq:PN-N}) after multiplying
through by $mc^2$ gives
\begin{equation}\label{eq:PN-breakdown}
N(r)^2 m c^2 - mc^2 = -\frac{r_{\mathrm{s}}}{r}\, mc^2 = -\frac{2GMm}{r}
= 2 V(r),
\end{equation}
which is exact (no $O(c^{-4})$ corrections in Schwarzschild). The
$O(c^{-2})$ control of (\ref{eq:PN-N}) requires the right-hand side
to be bounded uniformly in $c$, i.e.\ $|V(r)|/(mc^2) \to 0$ as
$c \to \infty$ uniformly on the region of interest. With $|V(r)|/(mc^2)
= r_{\mathrm{s}}(c)/(2r) = GM/(rc^2)$, the expansion holds uniformly
on $\Op_R = \{r > R\}$ for any fixed $R > 0$ provided $c$ is taken
large enough that $GM/(R c^2) \ll 1$. On a near-horizon shell
$r = r_{\mathrm{s}}(c)(1 + \epsilon)$ with $\epsilon \to 0$ (where the
location depends on $c$), the small parameter is $|V|/(mc^2) =
1/(2(1+\epsilon)) \to 1/2$, not $O(c^{-2})$, so the contraction
limit cannot be taken there.

Let $\Op_R = \{x : r > R\}$ for $R > 0$ a fixed coordinate radius.
Then:
\begin{enumerate}
\item[(a)] \emph{$c$ first, then $r$.} For each fixed $R > 0$, the
limit $c\to\infty$ of the Boulware Wightman function on $\Op_R$
exists and equals the Schr\"odinger Wightman function
(\ref{eq:Wightman-static-limit}); the resulting limiting net is
defined on $\R^3\setminus\{0\}$. The horizon at $r_{\mathrm{s}} = 2GM/c^2$
shrinks to the point $r=0$ in this order of limits, and the
Post-Newtonian condition (\ref{eq:PN-N}) is uniformly valid on
$\Op_R$ for all $c$ sufficiently large that $r_{\mathrm{s}}(c) <
R/2$, i.e.\ for $c^2 > 4GM/R$.
\item[(b)] \emph{$r$ first, then $c$.} For fixed $c$, the
restriction of the relativistic Boulware net to a neighbourhood
$r_{\mathrm{s}} < r < r_{\mathrm{s}} + \epsilon$ of the horizon involves the
Bisognano--Wightman/Sewell modular flow at finite Hawking
temperature~\cite{Sewell1982,KayWald1991}, which is intrinsically
relativistic content. The limit $c\to\infty$ on this near-horizon
region requires near-horizon AQFT machinery (modular structure on
wedge algebras of bifurcate Killing horizons) that has no Galilean
counterpart, since the Killing horizon itself does not exist in the
limiting Newton--Cartan structure.
\end{enumerate}
The two orderings produce structurally different objects: order (a)
gives a Galilean Haag--Kastler net on $\R^3 \setminus \{0\}$ (the
content of Theorem~\ref{thm:static} for Schwarzschild); order (b)
attempts to take a Galilean limit of intrinsically relativistic
modular structure and fails, in the sense that the Hawking
temperature diverges (\ref{eq:THH}) and no Galilean state on the
limiting net is KMS at any finite temperature (by (G5) and the
unique ground state from (G6$_\mathrm{c}$)). The choice in this paper
to take order (a) is dictated by the goal --- describing the
Galilean limit of the field algebra structure on the exterior ---
and the divergence of the Hawking temperature in order (b)
quantifies the obstruction to the opposite ordering
(Figure~\ref{fig:limit-square}). A complementary treatment of the
relativistic-side modular content via gravity-dressed crossed-product
algebras, and its interaction with the Galilean obstruction, is
developed in Ref.~\cite{Pachon2026f}.

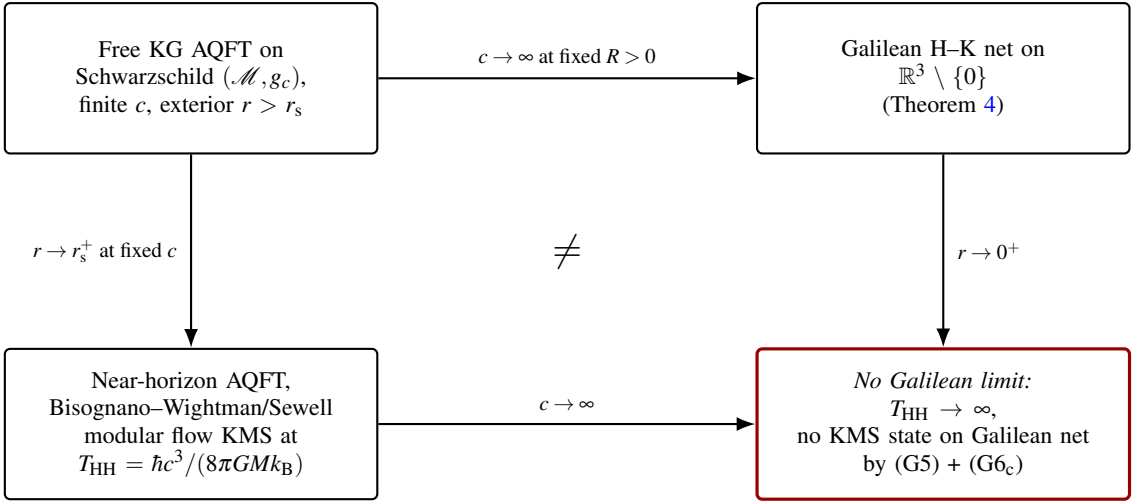
\begin{figure}[t]
\centering
\begin{tikzpicture}[
  every node/.style = {font=\small, align=center},
  corner/.style = {
    draw, rounded corners=2pt, thick,
    inner sep = 5pt,
    text width = 4.6cm,
    minimum height = 2.0cm,
    anchor = center
  },
  arr/.style = {-{Latex[length=2.5mm]}, thick},
  lab/.style = {font=\footnotesize, align=center}
]
  \node[corner] (TL) at (0, 0) {%
    Free KG AQFT on \\
    Schwarzschild $(\M, g_c)$, \\
    finite $c$, exterior $r > r_{\mathrm{s}}$
  };
  \node[corner] (TR) at (10, 0) {%
    Galilean H--K net on \\
    $\R^3 \setminus \{0\}$ \\
    (Theorem~\ref{thm:static})
  };
  \node[corner] (BL) at (0, -4.6) {%
    Near-horizon AQFT, \\
    Bisognano--Wightman/Sewell \\
    modular flow KMS at \\
    $T_{\mathrm{HH}} = \hbar c^3/(8\pi G M k_{\mathrm{B}})$
  };
  \node[corner, draw=red!60!black, very thick] (BR) at (10, -4.6) {%
    \emph{No Galilean limit:} \\
    $T_{\mathrm{HH}} \to \infty$, \\
    no KMS state on Galilean net \\
    by (G5) + (G6$_\mathrm{c}$)
  };

  \draw[arr] (TL.east) -- (TR.west)
    node[lab, pos=0.5, above=2pt]{$c \to \infty$ at fixed $R > 0$};
  \draw[arr] (TL.south) -- (BL.north)
    node[lab, pos=0.5, left=2pt]{$r \to r_{\mathrm{s}}^+$ at fixed $c$};
  \draw[arr] (TR.south) -- (BR.north)
    node[lab, pos=0.5, right=2pt]{$r \to 0^+$};
  \draw[arr] (BL.east) -- (BR.west)
    node[lab, pos=0.5, above=2pt]{$c \to \infty$};

  \node at (5, -2.3) {\Large $\neq$};
\end{tikzpicture}
\caption{Non-commutativity of the $c \to \infty$ and $r \to
r_{\mathrm{s}}^+$ limits in the Schwarzschild case. Order (a) ($c$ first,
top path) produces the Galilean Haag--Kastler net on $\R^3 \setminus
\{0\}$ of Theorem~\ref{thm:static}; the horizon location $r_{\mathrm{s}}(c)
= 2GM/c^2$ leaves the exterior region in the limit, and the
near-horizon thermal/modular content is destroyed at order
$O(c^{-2})$. Order (b) ($r$ first, bottom path) lands on the
relativistic near-horizon AQFT with Bisognano--Wightman/Sewell
modular flow at $T_{\mathrm{HH}}$; subsequently taking $c \to \infty$
sends $T_{\mathrm{HH}} \to \infty$, and no Galilean state on the
limiting net can be KMS at any finite temperature by axioms (G5)
and (G6$_\mathrm{c}$). The two paths produce structurally different
objects.}
\label{fig:limit-square}
\end{figure}

\subsection{The Galilean Hadamard condition}\label{sec:open-hadamard}
The relativistic Hadamard condition characterises the
universality-class of physical states on a globally hyperbolic
spacetime by their leading short-distance singularity, which for a
free scalar field of mass~$m$ takes the
form~\cite{KayWald1991,Wald1994}
\begin{equation}\label{eq:Hadamard-rel}
W^\mathrm{rel}(x, x') \sim \frac{1}{4\pi^2 \sigma(x,x')} + V \log\sigma + W,
\end{equation}
where $\sigma(x, x')$ is half the squared geodesic distance and $V,
W$ are smooth functions. The form of (\ref{eq:Hadamard-rel}) depends
explicitly on the Lorentzian metric through $\sigma$, and has no
direct counterpart on a Newton--Cartan structure where the
Lorentzian metric has degenerated to the pair $(h^{ab}, \tau_a)$ with
$h^{ab}\tau_b = 0$.

What \emph{is} well-defined on the Galilean side is the
short-distance structure of the limiting two-point function
(\ref{eq:flat-Wightman-limit}) in the Minkowski case, which is the
free Schr\"odinger heat-kernel propagator
\begin{equation}\label{eq:Schrodinger-Wightman}
W^\mathrm{Sch}(t,x;t',x') = \Bigl(\frac{m}{2\pi\,\mathrm{i}\hbar(t-t')}\Bigr)^{3/2}
\exp\!\Bigl(\frac{\mathrm{i} m \, |x-x'|^2}{2\hbar(t-t')}\Bigr)
\quad (t > t').
\end{equation}
This expression has three structural features that distinguish it
sharply from (\ref{eq:Hadamard-rel}):

\begin{itemize}
\item \emph{Anisotropic scaling.} The singularity occurs only as
$t \to t'$, with $x - x'$ scaling as $(t-t')^{1/2}$ in the natural
Schr\"odinger heat-kernel scaling. This anisotropy in the
$(t,x)$-singularity reflects the absolute time of the Newton--Cartan
structure: there is no boost-invariant interval $\sigma(x,x')$, and
the singularity is one-dimensional rather than codimension-one.
\item \emph{Heat-kernel power.} The $(t-t')^{-3/2}$ singularity is
the heat-kernel exponent in three spatial dimensions, replacing the
$1/\sigma$ behaviour of (\ref{eq:Hadamard-rel}). It is
substantially milder than the relativistic Hadamard singularity
when measured in $L^p$-norms on space-time regions.
\item \emph{Equal-time delta.} At equal times $t=t'$,
(\ref{eq:Schrodinger-Wightman}) reduces to $\delta^3(x-x')$ on the
Fock vacuum (recovering the c-number CCR (G4)), which is the
standard equal-time anomalous singularity of free Schr\"odinger
QFT.
\end{itemize}

The natural Galilean-side short-distance condition, parallel to the
relativistic Hadamard condition, is therefore:

\begin{definition}[Galilean Hadamard condition]
\label{def:gal-hadamard}
A state $\omega$ on a Galilean Haag--Kastler net $\F$ on a static
Newton--Cartan structure $(\M_\mathrm{NC}, h^{ab}, \tau, \nabla)$ with
gravitational potential $V$, smearable in the smeared canonical
fields $\hat\psi(f), \hat\psi^\dagger(f)$ over $f \in
C_c^\infty(\M_\mathrm{NC})$, satisfies the \emph{Galilean Hadamard
condition} if its two-point function
$W_\omega(t,x;t',x') = \omega\bigl(\hat\psi(t,x)\,
\hat\psi^\dagger(t',x')\bigr)$ admits, for all $t > t'$ in a common
chart, the asymptotic expansion
\begin{equation}\label{eq:gal-hadamard-expansion}
W_\omega(t,x;t',x') \sim \Bigl(\frac{m}{2\pi\,\mathrm{i}\hbar(t-t')}\Bigr)^{3/2}
\mathrm{e}^{\mathrm{i} m\, d_h(x,x')^2/(2\hbar(t-t'))}
\sum_{k=0}^{\infty} \bigl(\mathrm{i}(t-t')/\hbar\bigr)^k \, a_k(x,x')
\end{equation}
as $t-t' \to 0^+$, where $d_h(x, x')$ is the Riemannian distance
on the spatial slice with respect to the limiting spatial metric
$h_{ij}$, and the coefficients $a_k$ are smooth functions on
$\M_\mathrm{NC} \times \M_\mathrm{NC}$ with units of energy$^k$,
depending polynomially on derivatives of $V$ and the curvature of
$h_{ij}$, with $a_0 \equiv 1$ on the diagonal.
\end{definition}

This condition is the appropriate Galilean analogue of the
relativistic Hadamard condition: it characterises a state by the
universal heat-kernel-type singularity structure of its two-point
function, with the universal coefficients $a_k$ determined by the
geometry through the Gilkey--Seeley
expansion~\cite{Gilkey1995} of the heat kernel of $\hat H_S =
-(\hbar^2/2m) \Delta_{h} + V$. The Wick-rotation correspondence is
explicit: under $\tau = \mathrm{i}(t-t')/\hbar$ (mapping unitary evolution
$\mathrm{e}^{-\mathrm{i}\hat H_S(t-t')/\hbar}$ to heat-kernel propagation
$\mathrm{e}^{-\hat H_S \tau}$), the two-point function maps to the
heat-kernel form
\begin{equation}
\bigl(\mathrm{e}^{-\hat H_S \tau}\bigr)(x, x') \sim
\Bigl(\frac{m}{2\pi\hbar^2\tau}\Bigr)^{3/2}\,
\mathrm{e}^{-m d_h(x,x')^2/(2\hbar^2\tau)}
\sum_{k=0}^\infty \tau^k\, a_k(x, x'),
\quad \tau \to 0^+,
\end{equation}
which is the standard Seeley--DeWitt short-time asymptotic
expansion of the heat kernel of the elliptic
operator $\hat H_S$ on a Riemannian manifold. The coefficients
$a_k$ in (\ref{eq:gal-hadamard-expansion}) are the same Seeley
coefficients that govern the asymptotic Schwartz kernel of
$\mathrm{e}^{-\hat H_S \tau}$.

\begin{proposition}[Limiting Boulware state is Galilean Hadamard]
\label{prop:boulware-hadamard}
On the static Newton--Cartan structure produced by
Theorem~\ref{thm:static}, the Schr\"odinger Fock vacuum
$|0\rangle$ built on the ground state of $\hat H_S = -(\hbar^2/2m)
\Delta_h + V(x)$ defines a state $\omega_0$ that satisfies
Definition~\ref{def:gal-hadamard}. In particular, for the
Schwarzschild case (Section~\ref{sec:schw}), $\omega_0$ is
Galilean Hadamard on $\R^3 \setminus \{0\}$.
\end{proposition}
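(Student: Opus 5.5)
The plan is to identify the limiting two-point function with the Schwartz kernel of the Schr\"odinger propagator, and then transfer the known short-time parametrix of that propagator to the Galilean Hadamard form. By the spectral formula (\ref{eq:Wightman-static-limit}), the limiting Fock-vacuum two-point function is
\begin{equation*}
W_{\omega_0}(t,x;t',x') = \sum_{\omega_S} \mathrm{e}^{-\mathrm{i}\omega_S(t-t')}\,u_{\omega_S}(x)\overline{u_{\omega_S}(x')},
\end{equation*}
where the sum is understood as a spectral integral over the continuum part of $\sigma(\hat H_S)$. By the spectral theorem this is exactly the integral kernel $\bigl(\mathrm{e}^{-\mathrm{i}\hat H_S(t-t')/\hbar}\bigr)(x,x')$, with $\hat H_S$ the self-adjoint operator from the (G5) item of \S~\ref{sec:curved-axioms}. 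The additive shift $\hat H_S \mapsto \hat H_S - E_0\1$ only multiplies $W_{\omega_0}$ by the smooth phase $\mathrm{e}^{\mathrm{i}E_0(t-t')/\hbar} = \sum_k (\mathrm{i}(t-t')/\hbar)^k E_0^k/k!$. This phase is absorbed into the $a_k$ by re-expanding the series, and it leaves $a_0$ unchanged. The statement therefore reduces to a short-time asymptotic expansion of the Schr\"odinger propagator kernel on $(\Sigma,h_0)$ with potential $V$. In the flat case it is an identity, by (\ref{eq:Schrodinger-Wightman}).

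Next I would derive the expansion. The first route is analytic continuation from the heat kernel. On a region $\Op_R$ where $V$ and $h_0$ are smooth (for Schwarzschild, $r>R$, where $V=-GMm/r$ is smooth), the Seeley--DeWitt/Minakshisundaram--Pleijel parametrix gives
\begin{equation*}
\bigl(\mathrm{e}^{-\hat H_S\tau}\bigr)(x,x') \sim \Bigl(\frac{m}{2\pi\hbar^2\tau}\Bigr)^{3/2}\mathrm{e}^{-m d_h^2/(2\hbar^2\tau)}\sum_k \tau^k a_k(x,x'),
\end{equation*}
valid for $x,x'$ within the injectivity radius. The $a_k$ are built by the transport recursion along geodesics. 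The leading coefficient is $a_0 = \Delta_{\mathrm{VvM}}^{1/2}$, the van Vleck--Morette determinant, which equals $1$ on the diagonal. The higher $a_k$ are polynomial in derivatives of $V$ and the curvature of $h_0$; for example, on the diagonal $a_1$ involves $-V + \hbar^2 R/(12m)$. The unitary kernel satisfies the same Schr\"odinger equation in $t$ with the complex parameter $\tau=\mathrm{i}(t-t')/\hbar$. I would therefore insert the same ansatz directly into $(\mathrm{i}\hbar\partial_t - \hat H_S)W=0$ and check that the eikonal and transport equations are identical to the heat-kernel ones, with $\tau\mapsto \mathrm{i}(t-t')/\hbar$. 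This yields the formal series (\ref{eq:gal-hadamard-expansion}) with the same $a_k$.

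The remaining step is to show that the formal series is genuinely asymptotic to the true kernel, and this is the main obstacle. The Schr\"odinger kernel is only a distribution, there is no heat-kernel positivity, and the Wick rotation cannot be applied pointwise. I would try the standard Fujiwara/Kitada parametrix argument. Truncate the series at order $K$, compute the remainder $(\mathrm{i}\hbar\partial_t-\hat H_S)W_K = O((t-t')^{K-3/2})$ as an oscillatory kernel, and close the argument with Duhamel's formula and unitarity of $\mathrm{e}^{-\mathrm{i}\hat H_S s/\hbar}$. This shows that the error is $O((t-t')^{K+1-3/2})$, in the sense of kernels smeared against test functions supported in $\Op_R$.

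Two further issues need care. First, globally on $\R^3\setminus\{0\}$ the Coulomb singularity and the incomplete metric near $r=0$ must be localised away. I would handle this with finite propagation of singularities: at short times the kernel restricted to $\Op_R\times\Op_R$ depends on $\hat H_S$ only near $\Op_R$, modulo $O((t-t')^\infty)$. Second, the expansion must be restricted to $d_h(x,x')$ below the injectivity radius. Since $R>0$ is arbitrary, these together give the Schwarzschild statement on $\R^3\setminus\{0\}$.
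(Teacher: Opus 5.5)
Your argument is correct in outline, but it departs from the paper's at the decisive step.

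\textbf{What the paper does.} It identifies $W_{\omega_0}$ with the kernel of $\mathrm{e}^{-\mathrm{i}\hat H_S(t-t')/\hbar}$, sets $\tau=\mathrm{i}(t-t')/\hbar$, and cites the Seeley--DeWitt expansion of $\mathrm{e}^{-\hat H_S\tau}$. It never asks whether an expansion valid as $\tau\to 0^+$ on the real axis transfers to the imaginary axis.

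\textbf{What you do differently.} You use the heat kernel only to fix the coefficients; the transport recursion is literally the same, since $\mathrm{i}\hbar\partial_t=-\partial_\tau$. You then prove asymptoticity of the unitary kernel itself, using a truncated real-time parametrix, Duhamel's formula and unitarity, in the weak sense of kernels smeared near the diagonal.

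\textbf{What each buys.} The paper's route buys brevity and yields the same $a_k$. Yours buys an actual proof. Its weak formulation is what holds in general, not a loss. Take the ultrastatic cubic torus $\R\times\mathbb{T}^3$, which lies within the scope of Theorem~\ref{thm:static} with $V\equiv 0$. There the kernel is the free kernel summed over all lattice images, each of the same modulus. At a dense set of times it is a finite sum of point masses (Talbot effect), so no pointwise expansion of the form (\ref{eq:gal-hadamard-expansion}) can hold. By contrast, the image terms are non-stationary, hence $O((t-t')^\infty)$ once smeared near the diagonal. Your treatment of the shift phase is also more accurate than the paper's remark that it is a constant absorbed into $a_0$: the phase redistributes into the $a_k$ with $k\ge 1$ (equivalently $V\mapsto V-E_0$) and leaves $a_0$ unchanged.

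\textbf{First adjustment: the cutoff terms.} Derivatives of your near-diagonal cutoff produce terms in $(\mathrm{i}\hbar\partial_s-\hat H_S)W_K(s)$ that are of order $s^{-1}$ in operator norm. That is not integrable at $s=0$, so the Duhamel remainder cannot be bounded via $\|r_K(s)\|$. Estimate it through $\|r_K(s)g\|$ for a fixed $g\in C_c^\infty(\Op_R)$ instead. Non-stationary phase makes the cutoff terms $O(s^\infty)$ there, and only the main $O(s^{K-3/2})$ term needs a norm (Schur) bound. This gives $\|(W_K(t)-\mathrm{e}^{-\mathrm{i}\hat H_S t/\hbar})g\| = O(t^{K-1/2})$.

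\textbf{Second adjustment: localization away from $r=0$.} Finite propagation of singularities is not available, since Schr\"odinger evolution has infinite propagation speed. You do not need it, though. Take the cutoff radius below $R/2$. Then $W_K(s)g$ stays supported in $r>R/2$, where $\hat H_S$ acts as a smooth differential operator. The exact propagator, with its Coulomb singularity, enters the Duhamel formula only through unitarity, so localization away from $r=0$ is automatic.
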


\begin{proof}
The two-point function on the Schr\"odinger Fock vacuum is
$W_\omega(t,x;t',x') = \langle 0| \hat\psi(t,x)\,
\hat\psi^\dagger(t',x') |0\rangle = (\mathrm{e}^{-\mathrm{i}\hat
H_S(t-t')/\hbar})(x, x')$ for $t > t'$, the integral kernel of the
unitary evolution generated by the (additively shifted, so that the
ground state has zero energy) Schr\"odinger Hamiltonian. Under the
Wick rotation $\tau = \mathrm{i}(t-t')/\hbar$ this kernel maps to the
heat kernel $(\mathrm{e}^{-\hat H_S \tau})(x, x')$. The Seeley--DeWitt
short-time asymptotic expansion of the heat kernel of a
Schr\"odinger-type operator with smooth potential on a Riemannian
manifold~\cite{Gilkey1995} gives exactly the form of
(\ref{eq:gal-hadamard-expansion}) with $a_0 \equiv 1$ on the
diagonal and $a_k$ smooth polynomial in derivatives of $V$ and the
metric $h_{ij}$. (The additive shift contributes only a constant
multiplicative phase factor that is absorbed into $a_0$ on the
diagonal.) For Schwarzschild, $V(r) = -GMm/r$ is smooth on $\R^3
\setminus \{0\}$, $h_{ij}$ is the limiting flat spatial metric on
$\R^3 \setminus \{0\}$, and the expansion holds on compact subsets
of the punctured space.
\end{proof}

\begin{remark}[Status of the singular point]
\label{rem:singular-point}
At $r = 0$ the gravitational potential $V(r) = -GMm/r$ is
unbounded, the Coulomb-type singularity becoming the source of the
gravitational hydrogenic spectrum (\ref{eq:schw-spectrum}). The
heat-kernel Seeley expansion of
Proposition~\ref{prop:boulware-hadamard} is uniformly valid only on
compact subsets of $\R^3 \setminus \{0\}$, since the Seeley
coefficients $a_k$ involve increasing powers of $V$ and its
derivatives, which fail to be locally bounded near $r = 0$. The
Galilean Hadamard condition consequently holds on $\R^3 \setminus
B_\delta(0)$ for any $\delta > 0$ but fails uniformly on
neighbourhoods of $r = 0$. This is the location where the Killing
horizon at $r_{\mathrm{s}} = 2GM/c^2$ has been concentrated by the
$c\to\infty$ limit (Section~\ref{sec:schw}); the relativistic
near-horizon AQFT content has collapsed to a strong-coupling
singularity of the Newtonian potential, which lies outside the
regime of validity of any short-distance asymptotic expansion that
treats $V$ perturbatively.
\end{remark}

Definition~\ref{def:gal-hadamard} can serve as a state-selection
criterion for Galilean QFT on static Newton--Cartan structures
parallel to the relativistic Hadamard condition: the state space
is the set of states satisfying (\ref{eq:gal-hadamard-expansion}),
and Wick polynomials of $\hat\psi, \hat\psi^\dagger$ are defined
on this state space by point-splitting renormalisation against
the heat-kernel form. The full development of these consequences
parallel to the Brunetti--Fredenhagen--Verch programme on the
relativistic side~\cite{BFV2003} is beyond the scope of the
present paper, but the foundational structural element --- the
correct universal short-distance kernel --- is identified by
Definition~\ref{def:gal-hadamard} and verified for the limit
state in Proposition~\ref{prop:boulware-hadamard}.

The structural conclusion: the relativistic Hadamard condition does
not survive the $c \to \infty$ limit, but it is replaced on the
Galilean side by Definition~\ref{def:gal-hadamard}, which is
the same kind of short-distance constraint --- a leading-order
match with a universal kernel determined by the geometry ---
expressed in the appropriate non-Lorentzian-geometric form.
 
\begin{acknowledgments}
This work was supported by the R+D+I efforts from guane Enterprises.
\end{acknowledgments}

\bibliographystyle{aipnum4-1}
\bibliography{newton_cartan_limit}

\begin{thebibliography}{32}%
\makeatletter
\providecommand \@ifxundefined [1]{%
 \@ifx{#1\undefined}
}%
\providecommand \@ifnum [1]{%
 \ifnum #1\expandafter \@firstoftwo
 \else \expandafter \@secondoftwo
 \fi
}%
\providecommand \@ifx [1]{%
 \ifx #1\expandafter \@firstoftwo
 \else \expandafter \@secondoftwo
 \fi
}%
\providecommand \natexlab [1]{#1}%
\providecommand \enquote  [1]{``#1''}%
\providecommand \bibnamefont  [1]{#1}%
\providecommand \bibfnamefont [1]{#1}%
\providecommand \citenamefont [1]{#1}%
\providecommand \href@noop [0]{\@secondoftwo}%
\providecommand \href [0]{\begingroup \@sanitize@url \@href}%
\providecommand \@href[1]{\@@startlink{#1}\@@href}%
\providecommand \@@href[1]{\endgroup#1\@@endlink}%
\providecommand \@sanitize@url [0]{\catcode `\\12\catcode `\$12\catcode
  `\&12\catcode `\#12\catcode `\^12\catcode `\_12\catcode `\%12\relax}%
\providecommand \@@startlink[1]{}%
\providecommand \@@endlink[0]{}%
\providecommand \url  [0]{\begingroup\@sanitize@url \@url }%
\providecommand \@url [1]{\endgroup\@href {#1}{\urlprefix }}%
\providecommand \urlprefix  [0]{URL }%
\providecommand \Eprint [0]{\href }%
\providecommand \doibase [0]{http://dx.doi.org/}%
\providecommand \selectlanguage [0]{\@gobble}%
\providecommand \bibinfo  [0]{\@secondoftwo}%
\providecommand \bibfield  [0]{\@secondoftwo}%
\providecommand \translation [1]{[#1]}%
\providecommand \BibitemOpen [0]{}%
\providecommand \bibitemStop [0]{}%
\providecommand \bibitemNoStop [0]{.\EOS\space}%
\providecommand \EOS [0]{\spacefactor3000\relax}%
\providecommand \BibitemShut  [1]{\csname bibitem#1\endcsname}%
\let\auto@bib@innerbib\@empty
\bibitem [{\citenamefont {Pach\'{o}n}(2026{\natexlab{a}})}]{Pachon2026b}%
  \BibitemOpen
  \bibfield  {author} {\bibinfo {author} {\bibfnamefont {L.~A.}\ \bibnamefont
  {Pach\'{o}n}},\ }\href@noop {} {\enquote {\bibinfo {title} {Galilean
  reeh--schlieder obstruction},}\ } (\bibinfo {year} {2026}{\natexlab{a}}),\
  \bibinfo {note} {arXiv preprint; second of a five-paper series.}\BibitemShut
  {Stop}%
\bibitem [{\citenamefont {Pach\'{o}n}(2026{\natexlab{b}})}]{Pachon2026a}%
  \BibitemOpen
  \bibfield  {author} {\bibinfo {author} {\bibfnamefont {L.~A.}\ \bibnamefont
  {Pach\'{o}n}},\ }\href@noop {} {\enquote {\bibinfo {title} {Algebraic quantum
  kinematics and {SR}-selection},}\ } (\bibinfo {year} {2026}{\natexlab{b}}),\
  \bibinfo {note} {arXiv preprint; first of a five-paper series.}\BibitemShut
  {Stop}%
\bibitem [{\citenamefont {Pach\'{o}n}(2026{\natexlab{c}})}]{Pachon2026d}%
  \BibitemOpen
  \bibfield  {author} {\bibinfo {author} {\bibfnamefont {L.~A.}\ \bibnamefont
  {Pach\'{o}n}},\ }\href@noop {} {\enquote {\bibinfo {title} {Algebraic forcing
  of the semiclassical einstein equations},}\ } (\bibinfo {year}
  {2026}{\natexlab{c}}),\ \bibinfo {note} {arXiv preprint; third of a
  five-paper series.}\BibitemShut {Stop}%
\bibitem [{\citenamefont {Pach\'{o}n}(2026{\natexlab{d}})}]{Pachon2026e}%
  \BibitemOpen
  \bibfield  {author} {\bibinfo {author} {\bibfnamefont {L.~A.}\ \bibnamefont
  {Pach\'{o}n}},\ }\href@noop {} {\enquote {\bibinfo {title} {An equivalence
  theorem for the algebraic forcing of the semiclassical einstein equations},}\
  } (\bibinfo {year} {2026}{\natexlab{d}}),\ \bibinfo {note} {arXiv preprint;
  third of a five-paper series.}\BibitemShut {Stop}%
\bibitem [{\citenamefont {Pach\'{o}n}(2026{\natexlab{e}})}]{Pachon2026f}%
  \BibitemOpen
  \bibfield  {author} {\bibinfo {author} {\bibfnamefont {L.~A.}\ \bibnamefont
  {Pach\'{o}n}},\ }\href@noop {} {\enquote {\bibinfo {title} {Gravity-dressed
  crossed products and the galilean obstruction},}\ } (\bibinfo {year}
  {2026}{\natexlab{e}}),\ \bibinfo {note} {arXiv preprint; third of a
  five-paper series.}\BibitemShut {Stop}%
\bibitem [{\citenamefont {Reeh}\ and\ \citenamefont
  {Schlieder}(1961)}]{ReehSchlieder1961}%
  \BibitemOpen
  \bibfield  {author} {\bibinfo {author} {\bibfnamefont {H.}~\bibnamefont
  {Reeh}}\ and\ \bibinfo {author} {\bibfnamefont {S.}~\bibnamefont
  {Schlieder}},\ }\href {\doibase 10.1007/BF02787889} {\bibfield  {journal}
  {\bibinfo  {journal} {Il Nuovo Cimento}\ }\textbf {\bibinfo {volume} {22}},\
  \bibinfo {pages} {1051} (\bibinfo {year} {1961})}\BibitemShut {NoStop}%
\bibitem [{\citenamefont {Falcone}\ and\ \citenamefont
  {Conti}(2024)}]{FalconeConti2024}%
  \BibitemOpen
  \bibfield  {author} {\bibinfo {author} {\bibfnamefont {R.}~\bibnamefont
  {Falcone}}\ and\ \bibinfo {author} {\bibfnamefont {C.}~\bibnamefont
  {Conti}},\ }\href {\doibase 10.1016/j.revip.2024.100095} {\bibfield
  {journal} {\bibinfo  {journal} {Reviews in Physics}\ }\textbf {\bibinfo
  {volume} {12}},\ \bibinfo {pages} {100095} (\bibinfo {year} {2024})},\
  \bibinfo {note} {arXiv:2312.15348 [hep-th]},\ \Eprint
  {http://arxiv.org/abs/2312.15348} {arXiv:2312.15348 [hep-th]} \BibitemShut
  {NoStop}%
\bibitem [{\citenamefont {Unruh}(1976)}]{Unruh1976}%
  \BibitemOpen
  \bibfield  {author} {\bibinfo {author} {\bibfnamefont {W.~G.}\ \bibnamefont
  {Unruh}},\ }\href {\doibase 10.1103/PhysRevD.14.870} {\bibfield  {journal}
  {\bibinfo  {journal} {Physical Review D}\ }\textbf {\bibinfo {volume} {14}},\
  \bibinfo {pages} {870} (\bibinfo {year} {1976})}\BibitemShut {NoStop}%
\bibitem [{\citenamefont {Haag}\ and\ \citenamefont
  {Kastler}(1964)}]{HaagKastler1964}%
  \BibitemOpen
  \bibfield  {author} {\bibinfo {author} {\bibfnamefont {R.}~\bibnamefont
  {Haag}}\ and\ \bibinfo {author} {\bibfnamefont {D.}~\bibnamefont {Kastler}},\
  }\href {\doibase 10.1063/1.1704187} {\bibfield  {journal} {\bibinfo
  {journal} {Journal of Mathematical Physics}\ }\textbf {\bibinfo {volume}
  {5}},\ \bibinfo {pages} {848} (\bibinfo {year} {1964})}\BibitemShut {NoStop}%
\bibitem [{\citenamefont {Haag}(1996)}]{Haag1992}%
  \BibitemOpen
  \bibfield  {author} {\bibinfo {author} {\bibfnamefont {R.}~\bibnamefont
  {Haag}},\ }\href@noop {} {\emph {\bibinfo {title} {Local Quantum Physics:
  Fields, Particles, Algebras}}},\ \bibinfo {edition} {2nd}\ ed.\ (\bibinfo
  {publisher} {Springer},\ \bibinfo {year} {1996})\ \bibinfo {note} {first
  edition 1992}\BibitemShut {NoStop}%
\bibitem [{Note1()}]{Note1}%
  \BibitemOpen
  \bibinfo {note} {Ref.~\cite {Pachon2026b}'s (G7$^*$) is a single axiom with
  four clauses (a)(b)(c)(d). We display only (a) and (d) because, as
  established in Ref.~\cite {Pachon2026b} (Lemma~4 and Proposition~2 of that
  work), clauses (b) (mass spectrum bounded below) and (c) (vacuum at the
  spectral minimum) are derived consequences of (G1)--(G6) + (G7$^*$)(a) +
  (G7$^*$)(d): boost-positivity forces (b) under (G3) + (G5), and a Bose-CCR
  algebraic descent removes (c). The strengthened form of Ref.~\cite
  {Pachon2026b}'s obstruction theorem invoked here as Theorem~\ref
  {thm:flat-obstruction} uses only (a) and (d). On a curved background where
  Galilean boosts are absent ((G3) replaced by (G3$_\protect \mathrm {c}$) of
  \protect \S ~\ref {sec:G6c}), the boost-positivity derivation of (b) does not
  apply directly; in Theorem~\ref {thm:curved-obstruction} below, the analogue
  of (b) is supplied by an explicit Fock-spectrum hypothesis.}\BibitemShut
  {Stop}%
\bibitem [{\citenamefont {Brunetti}, \citenamefont {Fredenhagen},\ and\
  \citenamefont {Verch}(2003)}]{BFV2003}%
  \BibitemOpen
  \bibfield  {author} {\bibinfo {author} {\bibfnamefont {R.}~\bibnamefont
  {Brunetti}}, \bibinfo {author} {\bibfnamefont {K.}~\bibnamefont
  {Fredenhagen}}, \ and\ \bibinfo {author} {\bibfnamefont {R.}~\bibnamefont
  {Verch}},\ }\href {\doibase 10.1007/s00220-003-0815-7} {\bibfield  {journal}
  {\bibinfo  {journal} {Communications in Mathematical Physics}\ }\textbf
  {\bibinfo {volume} {237}},\ \bibinfo {pages} {31} (\bibinfo {year} {2003})},\
  \bibinfo {note} {arXiv:math-ph/0112041},\ \Eprint
  {http://arxiv.org/abs/math-ph/0112041} {arXiv:math-ph/0112041} \BibitemShut
  {NoStop}%
\bibitem [{\citenamefont {Kay}\ and\ \citenamefont {Wald}(1991)}]{KayWald1991}%
  \BibitemOpen
  \bibfield  {author} {\bibinfo {author} {\bibfnamefont {B.~S.}\ \bibnamefont
  {Kay}}\ and\ \bibinfo {author} {\bibfnamefont {R.~M.}\ \bibnamefont {Wald}},\
  }\href {\doibase 10.1016/0370-1573(91)90015-E} {\bibfield  {journal}
  {\bibinfo  {journal} {Physics Reports}\ }\textbf {\bibinfo {volume} {207}},\
  \bibinfo {pages} {49} (\bibinfo {year} {1991})}\BibitemShut {NoStop}%
\bibitem [{\citenamefont {K{\"u}nzle}(1972)}]{Kunzle1972}%
  \BibitemOpen
  \bibfield  {author} {\bibinfo {author} {\bibfnamefont {H.~P.}\ \bibnamefont
  {K{\"u}nzle}},\ }\href@noop {} {\bibfield  {journal} {\bibinfo  {journal}
  {Annales de l'Institut Henri Poincar{\'e}, Section A}\ }\textbf {\bibinfo
  {volume} {17}},\ \bibinfo {pages} {337} (\bibinfo {year} {1972})}\BibitemShut
  {NoStop}%
\bibitem [{\citenamefont {Duval}\ \emph {et~al.}(1985)\citenamefont {Duval},
  \citenamefont {Burdet}, \citenamefont {K{\"u}nzle},\ and\ \citenamefont
  {Perrin}}]{DuvalBurdetKunzlePerrin1985}%
  \BibitemOpen
  \bibfield  {author} {\bibinfo {author} {\bibfnamefont {C.}~\bibnamefont
  {Duval}}, \bibinfo {author} {\bibfnamefont {G.}~\bibnamefont {Burdet}},
  \bibinfo {author} {\bibfnamefont {H.~P.}\ \bibnamefont {K{\"u}nzle}}, \ and\
  \bibinfo {author} {\bibfnamefont {M.}~\bibnamefont {Perrin}},\ }\href
  {\doibase 10.1103/PhysRevD.31.1841} {\bibfield  {journal} {\bibinfo
  {journal} {Physical Review D}\ }\textbf {\bibinfo {volume} {31}},\ \bibinfo
  {pages} {1841} (\bibinfo {year} {1985})}\BibitemShut {NoStop}%
\bibitem [{\citenamefont {Bargmann}(1954)}]{Bargmann1954}%
  \BibitemOpen
  \bibfield  {author} {\bibinfo {author} {\bibfnamefont {V.}~\bibnamefont
  {Bargmann}},\ }\href {\doibase 10.2307/1969831} {\bibfield  {journal}
  {\bibinfo  {journal} {Annals of Mathematics}\ }\textbf {\bibinfo {volume}
  {59}},\ \bibinfo {pages} {1} (\bibinfo {year} {1954})}\BibitemShut {NoStop}%
\bibitem [{\citenamefont {L\'evy-Leblond}(1971)}]{LevyLeblond1971}%
  \BibitemOpen
  \bibfield  {author} {\bibinfo {author} {\bibfnamefont {J.-M.}\ \bibnamefont
  {L\'evy-Leblond}},\ }in\ \href@noop {} {\emph {\bibinfo {booktitle} {Group
  Theory and its Applications}}},\ Vol.~\bibinfo {volume} {2},\ \bibinfo
  {editor} {edited by\ \bibinfo {editor} {\bibfnamefont {E.~M.}\ \bibnamefont
  {Loebl}}}\ (\bibinfo  {publisher} {Academic Press},\ \bibinfo {year} {1971})\
  pp.\ \bibinfo {pages} {221--299}\BibitemShut {NoStop}%
\bibitem [{\citenamefont {{In{\"o}n{\"u}}}\ and\ \citenamefont
  {Wigner}(1953)}]{InonuWigner1953}%
  \BibitemOpen
  \bibfield  {author} {\bibinfo {author} {\bibfnamefont {E.}~\bibnamefont
  {{In{\"o}n{\"u}}}}\ and\ \bibinfo {author} {\bibfnamefont {E.~P.}\
  \bibnamefont {Wigner}},\ }\href {\doibase 10.1073/pnas.39.6.510} {\bibfield
  {journal} {\bibinfo  {journal} {Proceedings of the National Academy of
  Sciences USA}\ }\textbf {\bibinfo {volume} {39}},\ \bibinfo {pages} {510}
  (\bibinfo {year} {1953})}\BibitemShut {NoStop}%
\bibitem [{Note2()}]{Note2}%
  \BibitemOpen
  \bibinfo {note} {The dimensionless gravitational coupling $\alpha _\protect
  \mathrm {grav} := GMm/(\protect \hbar c) = M m / m_\protect \mathrm {Pl}^2$
  (with $m_\protect \mathrm {Pl} = \protect \sqrt {\protect \hbar c/G}$ the
  Planck mass) is the gravitational analogue of the fine-structure constant,
  and the spectrum (\ref {eq:schw-spectrum}) takes the standard hydrogenic form
  $|E_n| = \protect \tfrac {1}{2}\alpha _\protect \mathrm {grav}^2 mc^2/n^2$.
  The construction of Theorem~\ref {thm:static} requires $\alpha _\protect
  \mathrm {grav} \ll 1$ for the Post-Newtonian expansion (\ref {eq:PN-N}) to be
  uniformly valid on the spatial regions of interest. This condition holds in
  the sub-Planckian regime $Mm \ll m_\protect \mathrm {Pl}^2 \approx 4.7\times
  10^{-16}\protect \,\protect \mathrm {kg}^2$; e.g., for two protons
  gravitationally bound (the canonical ``gravitational fine-structure
  constant'' $\alpha _\protect \mathrm {grav}^{(p,p)} = (m_p/m_\protect \mathrm
  {Pl})^2 \approx 5.9\times 10^{-39}$). For astrophysical configurations such
  as an electron in a stellar-mass black-hole exterior, $\alpha _\protect
  \mathrm {grav} \gg 1$ and the present construction is \protect \emph {not}
  applicable: that regime requires keeping relativistic corrections beyond the
  leading Post-Newtonian order, and the limit $c \to \infty $ taken with
  $G,M,m$ all fixed is the wrong scaling limit.}\BibitemShut {Stop}%
\bibitem [{\citenamefont {Hartle}\ and\ \citenamefont
  {Hawking}(1976)}]{HartleHawking1976}%
  \BibitemOpen
  \bibfield  {author} {\bibinfo {author} {\bibfnamefont {J.~B.}\ \bibnamefont
  {Hartle}}\ and\ \bibinfo {author} {\bibfnamefont {S.~W.}\ \bibnamefont
  {Hawking}},\ }\href {\doibase 10.1103/PhysRevD.13.2188} {\bibfield  {journal}
  {\bibinfo  {journal} {Physical Review D}\ }\textbf {\bibinfo {volume} {13}},\
  \bibinfo {pages} {2188} (\bibinfo {year} {1976})}\BibitemShut {NoStop}%
\bibitem [{\citenamefont {Bisognano}\ and\ \citenamefont
  {Wightman}(1975)}]{BisognanoWightman1975}%
  \BibitemOpen
  \bibfield  {author} {\bibinfo {author} {\bibfnamefont {J.~J.}\ \bibnamefont
  {Bisognano}}\ and\ \bibinfo {author} {\bibfnamefont {A.~S.}\ \bibnamefont
  {Wightman}},\ }\href {\doibase 10.1063/1.522605} {\bibfield  {journal}
  {\bibinfo  {journal} {Journal of Mathematical Physics}\ }\textbf {\bibinfo
  {volume} {16}},\ \bibinfo {pages} {985} (\bibinfo {year} {1975})}\BibitemShut
  {NoStop}%
\bibitem [{\citenamefont {Bisognano}\ and\ \citenamefont
  {Wightman}(1976)}]{BisognanoWightman1976}%
  \BibitemOpen
  \bibfield  {author} {\bibinfo {author} {\bibfnamefont {J.~J.}\ \bibnamefont
  {Bisognano}}\ and\ \bibinfo {author} {\bibfnamefont {A.~S.}\ \bibnamefont
  {Wightman}},\ }\href {\doibase 10.1063/1.522898} {\bibfield  {journal}
  {\bibinfo  {journal} {Journal of Mathematical Physics}\ }\textbf {\bibinfo
  {volume} {17}},\ \bibinfo {pages} {303} (\bibinfo {year} {1976})}\BibitemShut
  {NoStop}%
\bibitem [{\citenamefont {Sewell}(1982)}]{Sewell1982}%
  \BibitemOpen
  \bibfield  {author} {\bibinfo {author} {\bibfnamefont {G.~L.}\ \bibnamefont
  {Sewell}},\ }\href {\doibase 10.1016/0003-4916(82)90285-8} {\bibfield
  {journal} {\bibinfo  {journal} {Annals of Physics}\ }\textbf {\bibinfo
  {volume} {141}},\ \bibinfo {pages} {201} (\bibinfo {year}
  {1982})}\BibitemShut {NoStop}%
\bibitem [{\citenamefont {Verch}(1997)}]{Verch1997}%
  \BibitemOpen
  \bibfield  {author} {\bibinfo {author} {\bibfnamefont {R.}~\bibnamefont
  {Verch}},\ }\href {\doibase 10.1142/S0129055X97000232} {\bibfield  {journal}
  {\bibinfo  {journal} {Reviews in Mathematical Physics}\ }\textbf {\bibinfo
  {volume} {9}},\ \bibinfo {pages} {635} (\bibinfo {year} {1997})}\BibitemShut
  {NoStop}%
\bibitem [{\citenamefont {Streater}\ and\ \citenamefont
  {Wightman}(1964)}]{StreaterWightman1964}%
  \BibitemOpen
  \bibfield  {author} {\bibinfo {author} {\bibfnamefont {R.~F.}\ \bibnamefont
  {Streater}}\ and\ \bibinfo {author} {\bibfnamefont {A.~S.}\ \bibnamefont
  {Wightman}},\ }\href@noop {} {\emph {\bibinfo {title} {{PCT}, Spin and
  Statistics, and All That}}}\ (\bibinfo  {publisher} {W.~A. Benjamin},\
  \bibinfo {year} {1964})\ \bibinfo {note} {princeton University Press
  paperback edition 2000}\BibitemShut {NoStop}%
\bibitem [{\citenamefont {L\'evy-Leblond}(1967)}]{LevyLeblond1967}%
  \BibitemOpen
  \bibfield  {author} {\bibinfo {author} {\bibfnamefont {J.-M.}\ \bibnamefont
  {L\'evy-Leblond}},\ }\href {\doibase 10.1007/BF01645427} {\bibfield
  {journal} {\bibinfo  {journal} {Communications in Mathematical Physics}\
  }\textbf {\bibinfo {volume} {4}},\ \bibinfo {pages} {157} (\bibinfo {year}
  {1967})}\BibitemShut {NoStop}%
\bibitem [{\citenamefont {Schrader}(1968)}]{Schrader1968}%
  \BibitemOpen
  \bibfield  {author} {\bibinfo {author} {\bibfnamefont {R.}~\bibnamefont
  {Schrader}},\ }\href {\doibase 10.1007/BF01654238} {\bibfield  {journal}
  {\bibinfo  {journal} {Communications in Mathematical Physics}\ }\textbf
  {\bibinfo {volume} {10}},\ \bibinfo {pages} {155} (\bibinfo {year}
  {1968})}\BibitemShut {NoStop}%
\bibitem [{\citenamefont {Hepp}(1969)}]{Hepp1969}%
  \BibitemOpen
  \bibfield  {author} {\bibinfo {author} {\bibfnamefont {K.}~\bibnamefont
  {Hepp}},\ }\href {\doibase 10.1007/BFb0108958} {\emph {\bibinfo {title}
  {Th\'eorie de la renormalisation}}},\ \bibinfo {series} {Lecture Notes in
  Physics}, Vol.~\bibinfo {volume} {2}\ (\bibinfo  {publisher} {Springer},\
  \bibinfo {year} {1969})\BibitemShut {NoStop}%
\bibitem [{\citenamefont {Eckmann}(1970)}]{Eckmann1970}%
  \BibitemOpen
  \bibfield  {author} {\bibinfo {author} {\bibfnamefont {J.-P.}\ \bibnamefont
  {Eckmann}},\ }\href {\doibase 10.1007/BF01649436} {\bibfield  {journal}
  {\bibinfo  {journal} {Communications in Mathematical Physics}\ }\textbf
  {\bibinfo {volume} {18}},\ \bibinfo {pages} {247} (\bibinfo {year}
  {1970})}\BibitemShut {NoStop}%
\bibitem [{\citenamefont {Lampart}\ \emph {et~al.}(2018)\citenamefont
  {Lampart}, \citenamefont {Schmidt}, \citenamefont {Teufel},\ and\
  \citenamefont {Tumulka}}]{LampartSchmidtTeufelTumulka2018}%
  \BibitemOpen
  \bibfield  {author} {\bibinfo {author} {\bibfnamefont {J.}~\bibnamefont
  {Lampart}}, \bibinfo {author} {\bibfnamefont {J.}~\bibnamefont {Schmidt}},
  \bibinfo {author} {\bibfnamefont {S.}~\bibnamefont {Teufel}}, \ and\ \bibinfo
  {author} {\bibfnamefont {R.}~\bibnamefont {Tumulka}},\ }\href {\doibase
  10.1007/s00023-018-0696-0} {\bibfield  {journal} {\bibinfo  {journal}
  {Annales Henri Poincar\'e}\ }\textbf {\bibinfo {volume} {19}},\ \bibinfo
  {pages} {2641} (\bibinfo {year} {2018})}\BibitemShut {NoStop}%
\bibitem [{\citenamefont {Wald}(1994)}]{Wald1994}%
  \BibitemOpen
  \bibfield  {author} {\bibinfo {author} {\bibfnamefont {R.~M.}\ \bibnamefont
  {Wald}},\ }\href@noop {} {\emph {\bibinfo {title} {Quantum Field Theory in
  Curved Spacetime and Black Hole Thermodynamics}}},\ Chicago Lectures in
  Physics\ (\bibinfo  {publisher} {University of Chicago Press},\ \bibinfo
  {address} {Chicago},\ \bibinfo {year} {1994})\BibitemShut {NoStop}%
\bibitem [{\citenamefont {Gilkey}(1995)}]{Gilkey1995}%
  \BibitemOpen
  \bibfield  {author} {\bibinfo {author} {\bibfnamefont {P.~B.}\ \bibnamefont
  {Gilkey}},\ }\href@noop {} {\emph {\bibinfo {title} {Invariance Theory, the
  Heat Equation, and the Atiyah--Singer Index Theorem}}},\ \bibinfo {edition}
  {2nd}\ ed.\ (\bibinfo  {publisher} {CRC Press},\ \bibinfo {address} {Boca
  Raton, FL},\ \bibinfo {year} {1995})\BibitemShut {NoStop}%
\end{thebibliography}%

\end{document}